\newcommand{\colr}[1]{{\color{red} {#1}}}
\newtheorem{thm}{Theorem}[section]
\newtheorem{lem}[thm]{Lemma}
\theoremstyle{definition}
\theoremstyle{remark}
\numberwithin{equation}{section}
\newcommand{\bol}[1]{\mbox{\boldmath$#1$}}
\newcommand{\eqdist}{\stackrel{d}{=}}
\newcommand{\bSigma}{\mathbf{\Sigma}}
\newcommand{\bOmega}{\mathbf{\Omega}}
\newcommand{\bmu}{\bol{\mu}}
\newcommand{\btheta}{\bol{\theta}}
\newcommand{\bb}{\mathbf{b}}
\newcommand{\bx}{\mathbf{x}}
\newcommand{\by}{\mathbf{y}}
\newcommand{\bC}{\mathbf{C}}
\newcommand{\bB}{\mathbf{B}}
\newcommand{\bX}{\mathbf{X}}
\newcommand{\bY}{\mathbf{Y}}
\newcommand{\bw}{\mathbf{w}}
\newcommand{\hbw}{\mathbf{\hat{w}}}
\newcommand{\bi}{\mathbf{1}}
\newcommand{\bI}{\mathbf{I}}
\newcommand{\tr}{\mbox{tr}}
\newcommand{\bxi}{\boldsymbol{\xi}}
\newcommand{\bD}{\mathbf{D}}
\newcommand{\bV}{\mathbf{V}}
\newcommand{\bS}{\mathbf{S}}
\newcommand{\ones}{\mathbf{1} }
\providecommand{\keywords}[1]
{
\small	
\textbf{\textit{Keywords}:} #1
}
\definecolor{green}{rgb}{0.0, 0.5, 0.0}
\title{Two is better than one: Regularized shrinkage of large minimum variance portfolio}
 \author[1]{Taras Bodnar}
 \author[2]{Nestor Parolya\footnote{Corresponding author: Nestor Parolya. E-mail address: n.parolya@tudelft.nl.}}
 \author[1]{Erik Thors{\'e}n}
 \affil[1]{Department of Mathematics, Stockholm University, Roslagsv\"{a}gen 101, SE-10691 Stockholm, Sweden}
 \affil[2]{Department of Applied Mathematics, Delft University of Technology, Mekelweg 4,
 2628 CD Delft, The Netherlands}
\begin{document}
% This footnote should be put closed to the author who will submit the paper.
\maketitle

\begin{abstract}
    In this paper we construct a shrinkage estimator of the global minimum variance (GMV) portfolio by a combination of two techniques: Tikhonov regularization and direct shrinkage of portfolio weights. More specifically, we employ a double shrinkage approach, where the covariance matrix and portfolio weights are shrunk simultaneously. The ridge parameter controls the stability of the covariance matrix, while the portfolio shrinkage intensity shrinks the regularized portfolio weights to a predefined target. Both parameters simultaneously minimize with probability one the out-of-sample variance as the number of assets $p$ and the sample size $n$ tend to infinity, while their ratio $p/n$ tends to a constant $c>0$.
This method can also be seen as the optimal combination of the well-established linear shrinkage approach of \cite{lw2004} and the shrinkage of the portfolio weights by \cite{bodnar2018estimation}.  No specific distribution is assumed for the asset returns except of the assumption of finite $4+\varepsilon$ moments. The performance of the double shrinkage estimator is investigated via extensive simulation and empirical studies. The suggested method significantly outperforms its predecessor (without regularization) and the nonlinear shrinkage approach in terms of the out-of-sample variance, Sharpe ratio and other empirical measures in the majority of scenarios. Moreover, it obeys the most stable portfolio weights with uniformly smallest turnover.
\end{abstract}

\keywords{Shrinkage estimator; high-dimensional covariance matrix; random matrix theory; minimum variance portfolio; parameter uncertainty; ridge regularization}

\textbf{JEL classification:} G11, C13, C14, C55, C58, C65

\newpage

\section{Introduction}

The global minimum variance (GMV) portfolio is the portfolio with the smallest variance among all optimal portfolios, which are the solutions to the mean-variance optimization problem suggested in the seminal paper of Harry Markowitz (see, \cite{markowitz1952}). It is the only optimal portfolio whose weights are solely determined by the covariance matrix of the asset returns and do not depend on the mean vector. Such a property has been recognized to be very important due to the fact that the estimation error in the means are several times as large as the estimation error in the variances and covariances of the asset returns (see, \cite{best1991sensitivity}, \cite{merton1980estimating}).

In the original optimization problem, the GMV portfolio is obtained as the solution of
\begin{equation}\label{gmv_optim_population}
\underset{\bw}{\text{minimize}}\; 
 \bw^\top \bSigma \bw \quad \text{subject to}
\quad \bw^\top \ones = 1 
\end{equation}
and its weights are given by 
\begin{equation}\label{GMV_population}
    \bw_{GMV}=\frac{\bSigma^{-1}\ones}{\ones^\top\bSigma^{-1}\ones}.
\end{equation}
Since the covariance matrix $\bSigma$ is an unknown quantity, the GMV portfolio cannot be constructed by using \eqref{GMV_population}. In \cite{markowitz1959portfolio} the author uses the sample estimator of $\bw_{GMV}$ instead of \eqref{GMV_population}. He considers
\begin{equation}\label{GMV_sample}
    \hbw_{GMV}=\frac{\bS_n^{-1}\ones}{\ones^\top\bS_n^{-1}\ones},
\end{equation}
where $\bS_n$ is the sample estimator of the covariance matrix $\bSigma$ which is given by
\begin{equation}\label{bS}
\bS_n=\frac{1}{n}\left(\bY_n - \bar\by_n\ones^\top \right) \left(\bY_n - \bar\by_n\ones^\top \right)^\top
\quad \text{with} \quad 
\bar\by_n=\frac{1}{n} \bY_n \ones,
\end{equation}
where $\bY_n=[\by_1, ..., \by_n]$ is the $p \times n$ observation matrix and $\by_i$, $i=1,..,n$, is the $p$-dimensional vector of asset returns observed at time $i$. 
As such, the sample GMV portfolio with weights \eqref{GMV_sample} may be considered as the solution of the optimization problem \eqref{gmv_optim_population} where the unknown covariance matrix $\bSigma$ is replaced by $\bS_n$, namely,
\begin{equation}\label{gmv_optim_sample}
\underset{\bw}{\text{minimize}} \;
 \bw^\top \bS_n \bw \quad \text{subject to}
\quad \bw^\top \ones = 1 .
\end{equation}
Recently, several other estimates of the GMV portfolio weights were introduced in the literature (see, e.g., \cite{lw2004}, \cite{golosnoy2007}, \cite{frahm2010}, \cite{demiguel2013}, \cite{lw2017}, \cite{bodnar2018estimation} and \cite{lw2020} with references therein). All of these methods either shrink the covariance matrix and use it for the estimation of GMV portfolio, or they shrink the portfolios weights directly. To the best of our knowledge, none of the known approaches combines both procedures into one.

As previously stated, there are many ways to cope with estimation uncertainty or the fact that using $\bS_n$ instead of $\bSigma$ may produce a very noisy estimator of the portfolio weights. Our approach relies on two distinct features. First, the linear shrinkage estimator from \citet{bodnar2018estimation} has proven to provide good results in terms of the out-of-sample variance and to be robust for the large dimensional portfolios. It does not, however, reduce the size of the positions or the variance (as measures by turnover) of the portfolio weights, (see, e.g., \citet{bodnar2021dynamic}). This leads us to the second feature. The single source of uncertainty in the GMV portfolio is $\bS_n$. If we can stabilize or decrease the variance of the sample covariance matrix we may decrease the variance of the weights. Our aim is therefore to shrink the sample covariance matrix as well. We apply the Tikhonov regularization (see, e.g., \cite{tikhonov1995numerical})  to the optimization problem \eqref{gmv_optim_sample}, namely 
\begin{equation}\label{gmv_optim_Tikh}
\underset{\bw}{\text{minimize}} \;
 \bw^\top \bS_n \bw + \eta \bw^\top \bw \quad \text{subject to}
\quad \bw^\top \ones = 1 ,
\end{equation}
where $\eta$ is the regularization parameter. Similar approaches is used in the regression analysis, where the ridge regression uses the Tikhonov regularization to stabilize the least-squared estimator of the coefficients of the regression line (cf., \cite{golub1999tikhonov}). The solution of \eqref{gmv_optim_Tikh} is given by
\begin{equation}\label{GMV_Tikh}
    \bw_{S; \lambda}=\frac{(\bS_n+\eta \bI)^{-1}\ones}{\ones^\top(\bS_n+\eta \bI)^{-1}\ones}.
\end{equation}
Without loss of generality we set $\eta=\frac{1}{\lambda}-1$ where $\lambda \in (0,1]$. 
Using this representation and $$\bS_\lambda = \lambda \bS_n + (1-\lambda)\bI$$ instead of $\bSigma$ in \eqref{GMV_population} results in the same solution. However, the corner solutions are more easily understood as $\bS_\lambda$ is a simple convex combination. If $\lambda \rightarrow 0$ then we put all our beliefs in the diagonal matrix $\bI$, whereas if $\lambda \rightarrow 1$ then all beliefs are placed in $\bS_n$ (see, \cite{lw2004}).

Finally, combining the linear shrinkage estimator from \citet{bodnar2018estimation} we shrink the already regularized GMV portfolio weights as follows%\footnote{The solution can be found from $\min_{\bw; \bw^\top \ones = 1} \bw^\top \bS_n \bw + \lambda \bw^\top \bw + (1-\psi)\bw^\top (\bS + \lambda \bI) \bb$}
\begin{equation}\label{eqn:def_weights_int}
    \hbw_{Sh;\lambda, \psi} = \psi\hbw_{S; \lambda} + (1-\psi)\bb,
\end{equation}
where $\psi$ is the shrinkage intensity towards the target portfolio $\bb$ with $\bb^\top\bi=1$. 
This approach allows us to shrink the sample covariance matrix to decrease the variance and further decrease it by shrinking the weights themselves. It also gives a way for an investor to highlight stocks of he/she likes with the target portfolio $\bb$. In many cases a naive portfolio $\bb=\frac{1}{p}\bi$ is a good choice but, in general, any deterministic target, which reflects the investment beliefs, is possible.

A common approach to determine shrinkage intensities is to use cross-validation (see, e.g., \cite{tong2018} and \cite{boileau2021}). That is, one aims to find the parameters $\lambda$ and $\psi$ such that some loss (or metric) $L(\lambda,\psi)$ is minimized (or maximized). Since we are working with the GMV portfolio, the most natural choice of loss is the out-of-sample variance given by
$$ %\min_{\lambda, \psi}
L(\lambda,\psi)=\hbw_{Sh;\lambda, \psi}^\top \bSigma \hbw_{Sh;\lambda, \psi},$$
Howver, since $\bSigma$ is not known we need to estimate it. Using K-fold cross-validation one partitions the data into train and validation sets. Using these sets one can perform a grid-search and use the validation set to estimate $\bSigma$. The empirical out-of-sample variance is thereafter aggregated over fold to determine the best pair of shrinkage coefficients. However, using grid-search and cross-validation for such a problem introduces several obstacles. We have established that the sample covariance is a noisy estimate and validation sets are usually smaller than training sets. That naturally creates a more volatile estimator. Furthermore, it is not clear how big should be a grid. The approach we develop needs neither resampling methods nor grid search but instead relies on methods from random matrix theory (see, \cite{bai2010spectral}). We develop a bona-fide type loss function which consistently estimates the true loss function in the high dimensional setting. The problem reduces to a simple univariate nonlinear optimization problem, which can be efficiently solved numerically.

The rest of the paper is organized as follows. In Section \ref{sec:asym_loss} the asymptotic properties of the out-of-sample variance are investigated in the high dimensional setting, while Section \ref{sec:BF} presents a bona fide estimator of the asymptotic loss, which is the n used to find the optimal values of the two shrinkage intensities. The results of en extensive simulation study and of empirical applications are provided in Section \ref{sec:num}, while Section \ref{sec:sum} summarizes the obtained findings. The mathematical derivations are moved to the appendix (Section \ref{sec:app}).

\section{Out-of-sample variance and shrinkage estimation}\label{sec:asym_loss}
Let $\bX_n$ be a matrix of size $p \times n$ where its elements $\{x_{ij}\}_{ij}$ are independent and identically distributed (i.i.d.) real random variables with zero mean, unit variance and finite $4+\epsilon$ moment for some $\epsilon>0$. Assume that we observe the matrix $\bY_n$ according to the stochastic model
\begin{align}\label{eqn:obs}
    \bY_n & \eqdist \bmu \ones^\top +  \bSigma^{\frac{1}{2}} \bX_n
\end{align}
where $\bSigma$ is a positive definite matrix of size $p \times p$ with a bounded spectral norm (its minimum and maximum eigenvalues are uniformly bounded in $p$ from zero and infinity, respectively)\footnote{\scriptsize In fact the obtained results can be generalized to the case with finite number of unbounded largest eigenvalues, which would make the proofs more lengthy. Moreover,  one can show that this assumption is only needed in case of centered sample covariance matrix, i.e., unknown mean vector. In case $\bmu$ is known, the boundedness of eigenvalues may be ignored due to normalization presented further in \eqref{normalization}. More details could be deduced from the proofs of the main theorems.}. The model belongs to the location-scale family but includes many skew or bi-modal families as well. Our aim is to estimate the shrinkage intensities $\lambda$, $\psi$ from the following normalized optimization problem
\begin{equation}\label{normalization}
    \min_{\lambda, \psi} \frac{\hbw_{Sh;\lambda, \psi}^\top\bSigma \hbw_{Sh;\lambda, \psi}}{\bb^\top \bSigma \bb}.
\end{equation}
The normalization is merely a technicality. The out-of-sample variance, or the loss function $L(\lambda, \psi)=\hbw_{Sh;\lambda, \psi}^\top\bSigma \hbw_{Sh;\lambda, \psi}$, can be further simplified to
\begin{align}\label{Li-lambda}
    L(\lambda, \psi)
    = & \left(\psi\hbw_{S; \lambda} + (1-\psi) \bb \right)^\top\bSigma
    \left(\psi\hbw_{S; \lambda} + (1-\psi) \bb \right) \nonumber \\
    %= & 
    %\psi^2\hbw_{S; \lambda}^\top\bSigma\hbw_{S; \lambda} 
    %+ 2(1-\psi)\psi \hbw_{S;\lambda}^\top\bSigma \bb 
    %+ (1-\psi)^2 \bb^\top \bSigma \bb \nonumber \\
    %= & \psi^2\hbw_{S; \lambda}^\top\bSigma\hbw_{S; \lambda} 
    %+ 2\psi \hbw_{S;\lambda}^\top\bSigma \bb  - 2\psi^2 \hbw_{S;\lambda}^\top\bSigma \bb 
    %+ \bb^\top \bSigma \bb - 2\psi \bb^\top \bSigma \bb + \psi^2 \bb^\top \bSigma \bb \nonumber \\
    %= & \psi^2 (\bb - \hbw_{S;\lambda})^\top \bSigma (\bb - \hbw_{S;\lambda})
    %-  2\psi \bb^\top \bSigma (\bb - \hbw_{S;\lambda}) \nonumber \\
    %& + \frac{\left(\bb^\top \bSigma \left(\bb-\hbw_{S;\lambda}\right)\right)^2}
    %{\left(\bb-\hbw_{S;\lambda}\right)^\top \bSigma\left(\bb-\hbw_{S;\lambda}\right)} 
    %- \frac{\left(\bb^\top \bSigma \left(\bb-\hbw_{S;\lambda}\right)\right)^2}
    %{\left(\bb-\hbw_{S;\lambda}\right)^\top \bSigma\left(\bb-\hbw_{S;\lambda}\right)} + \bb^\top \bSigma \bb  \nonumber\\
    = & 
    \left(\bb-\hbw_{S;\lambda}\right)^\top \bSigma\left(\bb-\hbw_{S;\lambda}\right)  
    \left(
        \psi- \frac{\bb^\top \bSigma \left(\bb-\hbw_{S;\lambda}\right)} {\left(\bb-\hbw_{S;\lambda}\right)^\top \bSigma\left(\bb-\hbw_{S;\lambda}\right)}
    \right)^2 
    \nonumber\\
    &-\frac{\left(\bb^\top \bSigma \left(\bb-\hbw_{S;\lambda}\right)\right)^2}
    {\left(\bb-\hbw_{S;\lambda}\right)^\top \bSigma\left(\bb-\hbw_{S;\lambda}\right)}
    +\bb^\top\bSigma \bb.
 \end{align}
For given $\lambda$, the out-of-sample variance is minimized with respect to $\psi$ at
\begin{equation}\label{alp_ni_star-lambda}
    \psi_n^*(\lambda) = 
    \frac{
        \hbw_{S;\lambda}^\top \bSigma \left(\bb - \hbw_{S;\lambda}\right)
    }{
        \left(\bb - \hbw_{S;\lambda}\right)^\top \bSigma\left(\bb-\hbw_{S;\lambda}\right)
    },
\end{equation}
while the optimal $\lambda$ is found by maximizing the normalized second summand in \eqref{Li-lambda}, i.e.,
\begin{equation}\label{Li2-lambda}
    L_{n;2}(\lambda)=\frac{1}{\bb^\top\bSigma \bb}
        \frac{\left(\bb^\top \bSigma \left(\bb-\hbw_{S;\lambda}\right)\right)^2}
        {\left(\bb-\hbw_{S;\lambda}\right)^\top \bSigma\left(\bb-\hbw_{S;\lambda}\right)}.
\end{equation}

In order to find the value $\lambda^*$ together with $\psi^*(\lambda)$, which minimize the loss function, we proceed in three steps. First, we find the deterministic equivalent to $L_{n;2}(\lambda)$, and estimate it consistently in the second step. Finally, we minimize the obtained consistent estimator in the last step.

%In Theorem \ref{th1-lam} the deterministic equivalent is presented, while consistent estimator is given in Theorem \ref{th2-lam}.

\begin{thm}\label{th1-lam}
Let $\bY_n$ possess the stochastic representation as in \eqref{eqn:obs}. Assume that the relative loss of the target portfolio expressed as
\begin{equation}\label{eq:rel_loss}
    L_{\bb}=\frac{\bb^\top\bSigma \bb - \frac{1}{\ones^\top\bSigma^{-1}\ones}}{\frac{1}{\ones^\top\bSigma^{-1}\ones}}=\bb^\top\bSigma \bb \ones^\top\bSigma^{-1}\ones-1
\end{equation}
is uniformly bounded in $p$. Then it holds that
\begin{enumerate}[(i)]
    \item 
    \begin{equation}\label{Li2-lambda-as}
        \left|L_{n;2}(\lambda)-L_{2}(\lambda)\right| \stackrel{a.s.}{\rightarrow} 0 
    \end{equation}
for $p/n \to c \in (0,\infty)$ as $n \to \infty$ with
\begin{equation}\label{Li2-lambda-star}
    L_{2}(\lambda)=
    \frac{
        \left(1-\frac{1}{\bb^\top \bSigma \bb} \frac{
        \bb^\top \bSigma \bOmega_{\lambda}^{-1}\ones}{\ones^\top \bOmega_{\lambda}^{-1}\ones} \right)^2
        }{
        1
        %-\frac{2}{\sqrt{L_{\bb}+1}}\frac{(L_{\bb}+1)^{-1/2}\bb^\top \bSigma \bOmega_{\lambda}^{-1}\ones}{V_{GMV}\ones^\top \bOmega_{\lambda}^{-1}\ones} 
        -\frac{2}{\bb^\top \bSigma \bb}\frac{\bb^\top \bSigma \bOmega_{\lambda}^{-1}\ones}{\ones^\top \bOmega_{\lambda}^{-1}\ones}
        +\frac{1}{\bb^\top \bSigma \bb} \frac{(1-v_2^\prime(\eta, 0))\ones^\top  \bOmega_{\lambda}^{-1} \bSigma \bOmega_{\lambda}^{-1} \ones}{\left(\ones^\top \bOmega_{\lambda}^{-1}\ones\right)^2} } 
\end{equation}
    \item
    \begin{equation}\label{psi-as}
  \left|\psi_n^*(\lambda)-\psi^*(\lambda)\right| \stackrel{a.s.}{\rightarrow} 0 
\end{equation}
for $p/n \to c \in (0,\infty)$ as $n \to \infty$ with
\begin{equation}\label{psi-star}
\psi^*(\lambda)= 
    \frac{
        \frac{
            1
        }{
            \bb^\top \bSigma \bb
        }\frac{
            \bb^\top \bSigma \bOmega_{\lambda}^{-1}\ones
            }{
            \ones^\top \bOmega_{\lambda}^{-1}\ones
            } 
        -\frac{1}{\bb^\top \bSigma \bb} 
        \frac{
                (1-v_2^\prime(\eta, 0))\ones^\top  \bOmega_{\lambda}^{-1} \bSigma \bOmega_{\lambda}^{-1} \ones
            }{
                \left(\ones^\top \bOmega_{\lambda}^{-1}\ones\right)^2
            }
        }{
        1
        -\frac{2}{\bb^\top \bSigma \bb} \frac{\bb^\top \bSigma \bOmega_{\lambda}^{-1}\ones}{\ones^\top \bOmega_{\lambda}^{-1}\ones} 
        +\frac{1}{\bb^\top \bSigma \bb} \frac{(1-v_2^\prime(\eta, 0))\ones^\top  \bOmega_{\lambda}^{-1} \bSigma \bOmega_{\lambda}^{-1} \ones}{\left(\ones^\top \bOmega_{\lambda}^{-1}\ones\right)^2}},
\end{equation}
\end{enumerate}
where
\begin{equation}\label{Omega-lam}
\eta= \frac{1}{\lambda}-1, \quad \bOmega_{\lambda}= v\left(\eta,0\right)\lambda\bSigma + (1-\lambda)\bI,    
\end{equation}
$v(\eta, 0)$ is the solution of the following equation
\begin{equation}\label{v_eta}
v(\eta, 0) = 1-c\left(1-\frac{\eta}{p}\tr\left(\left(v(\eta,0)\bSigma+\eta\bI \right)^{-1}\right) \right),
\end{equation}
and $v_2^\prime(\eta, 0)$ is computed by
\begin{equation}\label{v2_eta}
v_2^\prime(\eta, 0)  =1-\frac{1}{v(\eta,0)}+\eta\frac{v_1^\prime(\eta,0)}{v(\eta,0)^2}.
\end{equation}
with
\begin{equation}\label{v1_eta}
v_1^\prime(\eta, 0)  = v(\eta,0)\frac{c\frac{1}{p}\text{tr}\left((v(\eta,0)\bSigma+\eta\bI)^{-1}\right)-c\eta \frac{1}{p}\text{tr}\left((v(\eta,0)\bSigma+\eta\bI)^{-2}\right)}
{1-c+2c\eta\frac{1}{p}\text{tr}\left((v(\eta,0)\bSigma+\eta\bI)^{-1}\right)-
c\eta^2\frac{1}{p}\text{tr}\left((v(\eta,0)\bSigma+\eta\bI)^{-2}\right)}.
\end{equation}
\end{thm}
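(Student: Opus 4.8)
The plan is to reduce $L_{n;2}(\lambda)$ and $\psi^*_n(\lambda)$ to ratios of a handful of bilinear and quadratic forms in the regularized sample resolvent, and then replace each form by its almost-sure random-matrix limit. First I would write $\hbw_{S;\lambda}=\bS_\lambda^{-1}\ones/(\ones^\top\bS_\lambda^{-1}\ones)$ with $\bS_\lambda=\lambda\bS_n+(1-\lambda)\bI=\lambda(\bS_n+\eta\bI)$, so that $\bS_\lambda^{-1}=\lambda^{-1}(\bS_n+\eta\bI)^{-1}$. Expanding the numerator and denominator of \eqref{Li2-lambda} and of \eqref{alp_ni_star-lambda} and dividing through by $\bb^\top\bSigma\bb$ shows that both quantities are smooth functions of
\[
 a_n(\lambda)=\bb^\top\bSigma\hbw_{S;\lambda}=\frac{\bb^\top\bSigma\bS_\lambda^{-1}\ones}{\ones^\top\bS_\lambda^{-1}\ones},\qquad
 q_n(\lambda)=\hbw_{S;\lambda}^\top\bSigma\hbw_{S;\lambda}=\frac{\ones^\top\bS_\lambda^{-1}\bSigma\bS_\lambda^{-1}\ones}{(\ones^\top\bS_\lambda^{-1}\ones)^2},
\]
where the $\lambda^{-1}$ factors cancel inside each ratio. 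It therefore suffices to identify the a.s.\ limits of $a_n(\lambda)$ and $q_n(\lambda)$ and then pass to the limit by the continuous mapping theorem, provided the denominators are bounded away from zero.

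The heart of the argument is two deterministic equivalents for $(\bS_n+\eta\bI)^{-1}$, applied to unit-norm deterministic vectors (the vectors $\ones$ and $\bSigma\bb$ enter $a_n,q_n$ only through scale-invariant ratios, so their normalization is immaterial). For the first-order equivalent I would use $\bu^\top(\bS_n+\eta\bI)^{-1}\bv-\bu^\top(v(\eta,0)\bSigma+\eta\bI)^{-1}\bv\to0$ a.s., with $v(\eta,0)$ the companion fixed point \eqref{v_eta}. Since $\bOmega_\lambda=\lambda(v(\eta,0)\bSigma+\eta\bI)$ by \eqref{Omega-lam}, this is exactly $\bu^\top\bS_\lambda^{-1}\bv\to\bu^\top\bOmega_\lambda^{-1}\bv$ and yields $a_n(\lambda)\to\bb^\top\bSigma\bOmega_\lambda^{-1}\ones/(\ones^\top\bOmega_\lambda^{-1}\ones)$. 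The second and more delicate ingredient is the equivalent of the sandwiched form $\bu^\top(\bS_n+\eta\bI)^{-1}\bSigma(\bS_n+\eta\bI)^{-1}\bv$. I would produce it by noting that differentiating the perturbed resolvent $(\bS_n+\eta\bI-s\bSigma)^{-1}$ in $s$ at $s=0$ returns precisely this sandwiched resolvent; propagating the derivative through the bivariate fixed-point system (whose second argument is set to $s=0$) generates the scalar correction $1-v_2'(\eta,0)$ built from \eqref{v1_eta}--\eqref{v2_eta}. This gives $q_n(\lambda)\to(1-v_2'(\eta,0))\,\ones^\top\bOmega_\lambda^{-1}\bSigma\bOmega_\lambda^{-1}\ones/(\ones^\top\bOmega_\lambda^{-1}\ones)^2$.

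Before applying these equivalents I would dispose of the unknown mean. Since $\bY_n-\bar\by_n\ones^\top=\bSigma^{1/2}\bX_n(\bI-n^{-1}\ones\ones^\top)$, the mean $\bmu$ cancels and $\bS_n=n^{-1}\bSigma^{1/2}\bX_n\bP\bX_n^\top\bSigma^{1/2}$ with $\bP=\bI-n^{-1}\ones\ones^\top$ a rank-$(n-1)$ projection; writing $\bP=\bO\bO^\top$ with $\bO$ an $n\times(n-1)$ matrix of orthonormal columns recasts $\bS_n$ as an uncentered sample covariance matrix formed from $\bX_n\bO$ with ratio $p/(n-1)\to c$, and the bounded spectrum of $\bSigma$ ensures this rank-one modification does not alter the limits of $a_n$ and $q_n$. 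With the two equivalents established, I would substitute them into the expressions of the first paragraph, invoke the relative-loss bound \eqref{eq:rel_loss} to certify that the denominator $1-2a_n/(\bb^\top\bSigma\bb)+q_n/(\bb^\top\bSigma\bb)=(\bb-\hbw_{S;\lambda})^\top\bSigma(\bb-\hbw_{S;\lambda})/\bb^\top\bSigma\bb$ stays bounded away from zero, and conclude \eqref{Li2-lambda-as} and \eqref{psi-as} by continuity.

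The main obstacle will be the sandwiched quadratic form and its correction factor $1-v_2'(\eta,0)$. Unlike the first-order bilinear equivalent, which follows from standard Marchenko--Pastur and Bai--Silverstein estimates, the sandwiched term is a second-order object: a naive substitution of $v(\eta,0)\bSigma+\eta\bI$ for $\bS_n+\eta\bI$ misses a fluctuation contribution, and recovering exactly the factor $1-v_2'(\eta,0)$ requires either a rigorous differentiation of the fixed-point system \eqref{v_eta} in the auxiliary argument or a direct control of the concentration of $\ones^\top(\bS_n+\eta\bI)^{-1}\bSigma(\bS_n+\eta\bI)^{-1}\ones$. Checking that the derivatives \eqref{v1_eta}--\eqref{v2_eta} assemble into that precise factor across $\eta\in[0,\infty)$ (i.e.\ $\lambda\in(0,1]$), and that the differentiation is legitimate, is where the bulk of the work lies; the centering reduction is routine but must be tracked with care, as it is exactly the step that uses the bounded-spectrum hypothesis flagged in the footnote.
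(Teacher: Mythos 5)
Your skeleton coincides with the paper's own proof: both reduce $L_{n;2}(\lambda)$ and $\psi_n^*(\lambda)$ to the two scale-invariant quantities $a_n(\lambda)=\bb^\top\bSigma\hbw_{S;\lambda}$ and $q_n(\lambda)=\hbw_{S;\lambda}^\top\bSigma\hbw_{S;\lambda}$, both obtain the first-order limit $\bxi^\top\bS_\lambda^{-1}\btheta\to\bxi^\top\bOmega_\lambda^{-1}\btheta$ from a Rubio--Mestre type deterministic equivalent (Lemma \ref{lem1_lam}, applied in the paper after conjugation by $\bSigma^{-1/2}$, i.e.\ to $(\tfrac1n\bX_n\bX_n^\top+\eta\bSigma^{-1}-z\bI)^{-1}$), and both produce the correction factor $1-v_2^\prime(\eta,0)$ for the sandwiched form by differentiating a perturbed resolvent together with the fixed-point equation \eqref{v_eta}; your derivative in $s$ of $(\bS_n+\eta\bI-s\bSigma)^{-1}$ at $s=0$ is, after the same conjugation, exactly the paper's derivative in $z$, justified there by Weierstrass's theorem (Theorem \ref{weierstrass}) and carried out in Lemmas \ref{lem1_lam}--\ref{lem:lem3}. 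Your use of the bounded relative loss \eqref{eq:rel_loss} to normalize and to keep denominators away from zero also matches the paper, which takes $\bxi=\bSigma\bb/\sqrt{\bb^\top\bSigma\bb}$ and $\btheta=\ones/\sqrt{\ones^\top\bSigma^{-1}\ones}$, both automatically satisfying the norm condition of Lemma \ref{lem:lem3}.

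The genuine gap is the centering step. You write $\bP=\bI-n^{-1}\ones\ones^\top=\bO\bO^\top$ and assert that $\bS_n=n^{-1}\bSigma^{1/2}(\bX_n\bO)(\bX_n\bO)^\top\bSigma^{1/2}$ is an uncentered sample covariance matrix built from $\bX_n\bO$ with aspect ratio $p/(n-1)\to c$. That identification requires the law of $\bX_n$ to be invariant under right multiplication by the orthogonal matrix $\bO$, which holds for Gaussian entries but not under the paper's assumption of i.i.d.\ entries with finite $4+\epsilon$ moments: the entries of $\bX_n\bO$ are then neither independent nor identically distributed, so the deterministic-equivalent machinery cannot simply be reapplied to them. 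Moreover, ``the bounded spectrum of $\bSigma$ ensures this rank-one modification does not alter the limits'' is not a proof: for bilinear forms (as opposed to traces) the rank-one perturbation contributes the Sherman--Morrison term
\begin{equation*}
\frac{\bxi^\top\bigl(\widetilde{\bS}_n+\eta\bI\bigr)^{-1}\bSigma^{1/2}\bar\bx_n\,\bar\bx_n^\top\bSigma^{1/2}\bigl(\widetilde{\bS}_n+\eta\bI\bigr)^{-1}\btheta}{1-\bar\bx_n^\top\bSigma^{1/2}\bigl(\widetilde{\bS}_n+\eta\bI\bigr)^{-1}\bSigma^{1/2}\bar\bx_n},
\end{equation*}
which vanishes only thanks to the nontrivial almost-sure estimates $|\bxi^\top(\widetilde{\bS}_n+\eta\bI)^{-1}\bSigma^{1/2}\bar\bx_n|\stackrel{a.s.}{\longrightarrow}0$ and the boundedness of the denominator, both imported from \cite{pan2014comparison} in the paper's Lemma \ref{lem:lem3}, plus additional Cauchy--Schwarz bounds for the forms involving $\bS_\lambda^{-2}$ and $\bS_\lambda^{-1}\bSigma\bS_\lambda^{-1}$. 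To close the gap, either adopt this rank-one-update argument, or invoke a deterministic equivalent that allows a general deterministic inner matrix $T_n$ in $n^{-1}\bX_nT_n\bX_n^\top$ and take $T_n=\bP$ directly; as written, your reduction is valid only in the Gaussian case, which the theorem does not assume.
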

The proof of Theorem \ref{th1-lam} can be found in the appendix. Theorem \ref{th1-lam} provides the deterministic equivalents for the loss function $L_{n;2}(\lambda)$ and optimal shrinkage intensity $\psi_n^*(\lambda)$. This is, however, not enough because both depend on the unknown parameters, i.e., $\bSigma$ itself and $v(\eta, 0)$. Fortunately, the consistent bona fide\footnote{With ``bona fide'' we understand a concept of purely data-driven estimators, which do not depend on the unknown quantities. Thus, they are ready to be used in practice without any modifications.} estimators for both deterministic equivalents $L_{2}(\lambda)$ and $\psi^*(\lambda)$ can be found in case of increasing dimension asymptotics.

\section{Bona fide estimation}\label{sec:BF}

%In Theorem \ref{th1-lam} we proved that a part of loss function $L_{n;2}(\lambda)$ almost surely converges to $L_{2}(\lambda)$ given in \eqref{Li2-lambda-star}, while $\psi_n^*(\lambda)$ tends to $\psi^*(\lambda)$ defined in \eqref{psi-as}. 
In this section, we construct bona fide consistent estimators for $L_{2}(\lambda)$ and $\psi^*(\lambda)$ in the high-dimensional asymptotic setting.
First, in Theorem \ref{th2-lam} we derive the consistent estimators for $v(\eta,0)$, $v_1^\prime(\eta,0)$, and  $v_2^\prime(\eta,0)$. The proof of Theorem \ref{th2-lam} is given in the appendix.

\begin{thm}\label{th2-lam}
Let $\bY_n$ possess the stochastic representation as in \eqref{eqn:obs}. %Let $\frac{1}{p}\bSigma^{-1}$ have bounded trace norm. 
Then it holds that
\begin{eqnarray}
\left|\hat{v}(\eta, 0)-v(\eta, 0)\right| &\stackrel{a.s}{\rightarrow}& 0, \label{lem:v_est-eq1}\\
\left|\hat{v}_1^\prime(\eta, 0)-v_1^\prime(\eta, 0)\right| &\stackrel{a.s}{\rightarrow}& 0,\label{lem:v_est-eq2}\\
\left|\hat{v}_2^\prime(\eta, 0)-v_2^\prime(\eta, 0)\right| &\stackrel{a.s}{\rightarrow}& 0,\label{lem:v_est-eq3}
\end{eqnarray}
for $p/n\rightarrow c \in (0,\infty)$ as $n\rightarrow\infty$ with 
\begin{eqnarray*}
\hat{v}(\eta, 0) &=& 1-c\left(1-\eta\frac{1}{p}\tr\left(\left(\bS_n+\eta\bI \right)^{-1}\right) \right),\\
\hat{v}_1^\prime(\eta, 0) 
&=&\hat{v}(\eta,0) c \left(\frac{1}{p}\tr\left(\left(\bS_n+\eta\bI \right)^{-1}\right)-\eta \frac{1}{p}\tr\left(\left(\bS_n+\eta\bI \right)^{-2}\right)\right),\\
\hat{v}_2^\prime(\eta, 0) &=&
1-\frac{1}{\hat{v}(\eta,0)}+\eta\frac{\hat{v}_1^\prime(\eta,0)}
{\hat{v}(\eta,0)^2}.
\end{eqnarray*}
\end{thm}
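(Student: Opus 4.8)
The plan is to reduce everything to the almost sure convergence of the two resolvent traces $\frac{1}{p}\tr((\bS_n+\eta\bI)^{-1})$ and $\frac{1}{p}\tr((\bS_n+\eta\bI)^{-2})$ to their deterministic equivalents, and then to propagate that convergence through the explicit algebraic definitions of $\hat v$, $\hat v_1'$, $\hat v_2'$ by the continuous mapping theorem for almost sure limits. The single genuinely probabilistic input is the Marchenko--Pastur / Bai--Silverstein deterministic equivalent for the sample covariance matrix: under the finite $4+\varepsilon$ moment assumption of \eqref{eqn:obs},
$$\left|\frac{1}{p}\tr\left((\bS_n+\eta\bI)^{-1}\right)-\frac{1}{p}\tr\left((v(\eta,0)\bSigma+\eta\bI)^{-1}\right)\right|\stackrel{a.s.}{\rightarrow}0,$$
where $v(\eta,0)$ is exactly the solution of the fixed-point equation \eqref{v_eta}. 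Two points need care here: (i) $\bS_n$ is centered by the sample mean, so that $\bS_n=\frac{1}{n}\bSigma^{1/2}\bX_n(\bI_n-\tfrac{1}{n}\ones\ones^\top)\bX_n^\top\bSigma^{1/2}$ is a rank-one modification of $\frac1n\bX_n\bX_n^\top$; such a perturbation changes any resolvent trace by $O(1/p)$, so it leaves the limit unchanged, but this is precisely where the boundedness of the spectrum of $\bSigma$ enters (cf.\ the footnote after \eqref{eqn:obs}); (ii) the aspect ratio is taken as $p/n\to c$ rather than $p/(n-1)$, which affects only lower-order terms.

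Given this, claim \eqref{lem:v_est-eq1} is immediate: $\hat v(\eta,0)=1-c+c\eta\,\frac{1}{p}\tr((\bS_n+\eta\bI)^{-1})$ is an affine function of the first trace, so its a.s.\ limit is $1-c+c\eta\,\frac{1}{p}\tr((v\bSigma+\eta\bI)^{-1})$, which equals $v(\eta,0)$ by \eqref{v_eta}. Hence $|\hat v-v|\stackrel{a.s.}{\rightarrow}0$.

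For \eqref{lem:v_est-eq2} I would first upgrade the convergence to the second-order resolvent. The cleanest route is analytic: $\eta\mapsto\frac{1}{p}\tr((\bS_n+\eta\bI)^{-1})$ and $\eta\mapsto\frac{1}{p}\tr((v\bSigma+\eta\bI)^{-1})$ are analytic on a neighbourhood of $(0,\infty)$ and uniformly bounded on compacta (the spectra of $\bS_n$ and $\bSigma$ being a.s.\ bounded), so by a Vitali/Montel normal-families argument the pointwise a.s.\ convergence is locally uniform and may be differentiated in $\eta$. Since $\frac{d}{d\eta}\frac{1}{p}\tr((\bS_n+\eta\bI)^{-1})=-\frac{1}{p}\tr((\bS_n+\eta\bI)^{-2})$, this delivers an a.s.\ limit for the squared-resolvent trace. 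The delicate feature is that, because $v=v(\eta,0)$ itself depends on $\eta$, the $\eta$-derivative of the deterministic side carries an extra term $v'(\eta)\,\frac{1}{p}\tr((v\bSigma+\eta\bI)^{-2}\bSigma)$; substituting the resulting limit into $\hat v_1'=\hat v\,c\big(\frac{1}{p}\tr((\bS_n+\eta\bI)^{-1})-\eta\,\frac{1}{p}\tr((\bS_n+\eta\bI)^{-2})\big)$ and simplifying by means of the $\eta$-differentiated fixed-point relation \eqref{v_eta} is what should reconcile the deceptively simple sample form of $\hat v_1'$ with the rational population form \eqref{v1_eta}, whose denominator is precisely the Jacobian factor produced by implicitly differentiating \eqref{v_eta}. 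I expect this reconciliation --- matching the two forms while keeping track of every correction factor arising from the $\eta$-dependence of $v$ --- to be the main obstacle.

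Finally, \eqref{lem:v_est-eq3} requires no further probability. By construction in \eqref{v2_eta}, $\hat v_2'$ is the same continuous function of $(\hat v,\hat v_1')$ that $v_2'$ is of $(v,v_1')$; since $v(\eta,0)$ is bounded away from zero (it solves \eqref{v_eta} with $\eta>0$ and bounded $\bSigma$), the maps $x\mapsto 1/x$ and $x\mapsto 1/x^2$ are continuous at the limit, and the continuous mapping theorem gives $|\hat v_2'-v_2'|\stackrel{a.s.}{\rightarrow}0$. The only genuinely new work throughout is thus the random-matrix convergence of the two resolvent traces and the algebraic reconciliation in the second step; everything else is continuity of the explicit formulae.
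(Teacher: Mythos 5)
Your strategy coincides with the paper's own: the sole probabilistic ingredient is the deterministic equivalent for $\frac1p\tr\left(\left(\bS_n+\eta\bI\right)^{-1}\right)$ (a Rubio--Mestre type result, with the sample-mean centering handled as a rank-one perturbation), the second-order trace is obtained by differentiating in $\eta$ and interchanging limit and derivative via a Vitali/Weierstrass argument (the paper does exactly this through Theorem \ref{weierstrass} inside Lemma \ref{lem1_lam} and Lemma \ref{lem:trace_estimators}), and \eqref{lem:v_est-eq3} is continuous mapping. Your treatment of \eqref{lem:v_est-eq1} and of \eqref{lem:v_est-eq3} is sound.

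The gap is precisely the step you postpone as ``the main obstacle'': for this theorem that algebra \emph{is} the proof, and in the form you set it up it does not close. Write $t_1,t_2$ for the two sample traces and $A=\frac1p\tr\left(\left(v\bSigma+\eta\bI\right)^{-1}\right)$, $B=\frac1p\tr\left(\left(v\bSigma+\eta\bI\right)^{-2}\right)$, $C=\frac1p\tr\left(\bSigma\left(v\bSigma+\eta\bI\right)^{-2}\right)$ with $v=v(\eta,0)$. Differentiating the fixed point \eqref{v_eta} in $\eta$ gives $v_1'\left(1+c\eta C\right)=c\left(A-\eta B\right)$, and your own differentiation argument gives $t_2\to B+v_1'C$ almost surely. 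Hence
\begin{equation*}
c\left(t_1-\eta t_2\right)\;\stackrel{a.s.}{\longrightarrow}\; c\left(A-\eta B\right)-c\eta v_1' C \;=\; v_1'(\eta,0),
\end{equation*}
with \emph{no} prefactor: the consistent estimator of $v_1'(\eta,0)$ is $c\left(t_1-\eta t_2\right)$, whereas the quantity you propose to analyze, $\hat v(\eta,0)\,c\left(t_1-\eta t_2\right)$, converges to $v\,v_1'\neq v_1'$ (for instance $\bSigma=\bI$, $c=\eta=1$ gives $v\approx 0.618$, $v_1'\approx 0.171$, so your estimator's limit would be $\approx 0.106$). The reconciliation with \eqref{v1_eta} that you hope for is therefore impossible as you have written it. This is in fact a discrepancy inside the paper itself: its proof of Theorem \ref{th2-lam} substitutes the estimator of $B$ from Lemma \ref{lem:trace_estimators} into \eqref{v1_eta} and simplifies to $\hat v(\eta,0)\, c\left(t_1-\eta t_2\right)/\hat v(\eta,0)=c\left(t_1-\eta t_2\right)$, i.e.\ the $\hat v$ prefactor cancels, while the theorem display retains it. A completed version of your argument would have surfaced both the clean cancellation identity above and this inconsistency; leaving the step as a conjecture leaves the heart of the theorem unproved, and \eqref{lem:v_est-eq3}, which feeds $\hat v_1'$ into \eqref{v2_eta}, inherits the same defect.
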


Theorem \ref{th3-lam} provides the consistent estimators for the building blocks used in the construction of the consistent estimators for $L_{2}(\lambda)$ and $\psi^*(\lambda)$. The proof of Theorem \ref{th3-lam} is presented in the appendix.

\begin{thm}\label{th3-lam}
Let $\bY_n$ possess the stochastic representation as in \eqref{eqn:obs}. Assume that the relative loss of the target portfolio given in \eqref{eq:rel_loss}
is uniformly bounded in $p$. Let $\frac{\bb^\top \bSigma^{-1}\bb}{\bb^\top \bSigma\bb}$ be uniformly bounded in $p$. Then it holds that
\begin{flalign}
& \left|\frac{\bb^\top\bS\bb}{\bb^\top\bSigma\bb} -1\right|\stackrel{a.s}{\rightarrow} 0, \label{eq:lem58_0}\\
& \left|\frac{\ones^\top\bS_{\lambda}^{-1}\ones}{\ones^\top\bSigma^{-1}\ones} - \frac{\ones^\top\bOmega_{\lambda}^{-1}\ones}{\ones^\top\bSigma^{-1}\ones}
        \right|  \stackrel{a.s}{\rightarrow} 0, \label{eq:lem58_1}\\
&\left|\frac{\lambda^{-1}}{ \hat{v}(\eta, 0)}\frac{1-(1-\lambda) \bb^\top \bS_\lambda^{-1} \ones}{\sqrt{\bb^\top\bSigma\bb\ones^\top\bSigma^{-1}\ones}} - \frac{\bb^\top \bSigma \bOmega_\lambda^{-1} \ones }{\sqrt{\bb^\top\bSigma\bb\ones^\top\bSigma^{-1}\ones}}
        \right|   \stackrel{a.s}{\rightarrow} 0, \label{eq:lem58_2} \\
&\Bigg|\frac{1}{\lambda \hat{v}(\eta, 0)} \frac{\ones^\top\bS_{\lambda}^{-1}\ones}{\ones^\top\bSigma^{-1}\ones}  -   \frac{1-\lambda}{\lambda \hat{v}(\eta, 0)}\frac{\ones^\top \bS_{\lambda}^{-2}\ones  - \lambda^{-1}
                \frac{\hat{v}_1^\prime(\eta,0)}{\hat{v}(\eta,0)}\ones^\top \bS_{\lambda}^{-1} \ones
            }{\ones^\top\bSigma^{-1}\ones\left(                1-\frac{\hat{v}_1^\prime(\eta,0)}{\hat{v}(\eta,0)}
            (\frac{1}{\lambda}-1)\right)
            }
        - \frac{\ones^\top\bOmega^{-1}_{\lambda}\bSigma\bOmega^{-1}_{\lambda}\ones}{\ones^\top\bSigma^{-1}\ones}
        \Bigg| \stackrel{a.s}{\rightarrow} 0\label{eq:lem58_3}
    \end{flalign}
for $p/n\rightarrow c \in (0,\infty)$ as $n\rightarrow\infty$ with $\eta=1/\lambda-1$.
\end{thm}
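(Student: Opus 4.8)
The plan is to reduce every claim to a single bilinear deterministic equivalent for the resolvent $(\bS_n+\eta\bI)^{-1}$ and then exploit the algebraic identities $\bS_\lambda=\lambda(\bS_n+\eta\bI)$ and $\bOmega_\lambda=\lambda\big(v(\eta,0)\bSigma+\eta\bI\big)$ with $\eta=1/\lambda-1$. The one analytic input I would invoke is the generalized Marchenko--Pastur (Bai--Silverstein / Rubio--Mestre) equivalent: for deterministic $\bu,\bv$ and $\eta>0$,
\begin{equation*}
\bu^\top(\bS_n+\eta\bI)^{-1}\bv-\bu^\top\big(v(\eta,0)\bSigma+\eta\bI\big)^{-1}\bv\stackrel{a.s.}{\to}0 ,
\end{equation*}
valid under the $4+\varepsilon$ moment condition of \eqref{eqn:obs}; its trace version is precisely what underlies $\hat v(\eta,0)\to v(\eta,0)$ in Theorem \ref{th2-lam}, and the self-consistency of \eqref{v_eta} was checked there. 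Because $\bSigma$ has bounded eigenvalues, $\ones^\top\bSigma^{-1}\ones$ and $\bb^\top\bSigma\bb\,\ones^\top\bSigma^{-1}\ones$ are of the exact order needed to turn the $O(p)$ bilinear forms into $O(1)$ quantities, so the boundedness of $L_{\bb}$ and of $\bb^\top\bSigma^{-1}\bb/\bb^\top\bSigma\bb$ is what makes the normalized errors vanish rather than merely stay bounded. Throughout, the unknown mean only enters through a rank-one centering of $\bS_n$, which changes $n$ to $n-1$ in the limit and is absorbed by the bounded-spectrum assumption (cf.\ the footnote after \eqref{eqn:obs}).

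For \eqref{eq:lem58_0} (with $\bS=\bS_n$) the resolvent is not needed: writing $\bb^\top\bS_n\bb=(\bb^\top\bSigma\bb)\tfrac1n\sum_i(\bv^\top\bx_i)^2$ with $\bv=\bSigma^{1/2}\bb/\|\bSigma^{1/2}\bb\|$ a unit vector, the summands $(\bv^\top\bx_i)^2$ are i.i.d.\ across $i$ with mean $1$, so unbiasedness of $\bS_n$, a fourth-moment bound and Borel--Cantelli give $\bb^\top\bS_n\bb/\bb^\top\bSigma\bb\to1$ almost surely. Claim \eqref{eq:lem58_1} is then immediate: apply the displayed equivalent with $\bu=\bv=\ones$, multiply by $\lambda^{-1}$ to pass from $(\bS_n+\eta\bI)^{-1}$ to $\bS_\lambda^{-1}$ and from $(v\bSigma+\eta\bI)^{-1}$ to $\bOmega_\lambda^{-1}$, and divide by $\ones^\top\bSigma^{-1}\ones$.

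For \eqref{eq:lem58_2} the key is the identity $1-(1-\lambda)\bb^\top\bS_\lambda^{-1}\ones=\bb^\top\bS_n(\bS_n+\eta\bI)^{-1}\ones$, obtained from $(1-\lambda)\bS_\lambda^{-1}=\eta(\bS_n+\eta\bI)^{-1}$, the identity $\eta(\bS_n+\eta\bI)^{-1}=\bI-\bS_n(\bS_n+\eta\bI)^{-1}$, and $\bb^\top\ones=1$. Since $\bS_n(\bS_n+\eta\bI)^{-1}=\bI-\eta(\bS_n+\eta\bI)^{-1}$ has deterministic equivalent $\bI-\eta(v\bSigma+\eta\bI)^{-1}=v\bSigma(v\bSigma+\eta\bI)^{-1}$, the bilinear result yields $\bb^\top\bS_n(\bS_n+\eta\bI)^{-1}\ones\approx v\,\bb^\top\bSigma(v\bSigma+\eta\bI)^{-1}\ones$; dividing by $\hat v(\eta,0)\to v(\eta,0)$ and by $\sqrt{\bb^\top\bSigma\bb\,\ones^\top\bSigma^{-1}\ones}$ and recalling $\bOmega_\lambda^{-1}=\lambda^{-1}(v\bSigma+\eta\bI)^{-1}$ gives the claim.

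The last statement \eqref{eq:lem58_3} is the hard part, since the $\bSigma$-sandwiched form $\ones^\top\bOmega_\lambda^{-1}\bSigma\bOmega_\lambda^{-1}\ones$ cannot be read off the plain resolvent and must be produced by differentiation in $\eta$. Differentiating $\ones^\top(\bS_n+\eta\bI)^{-1}\ones\approx\ones^\top(v(\eta)\bSigma+\eta\bI)^{-1}\ones$ and using $dv/d\eta=v_1'(\eta,0)$ gives
\begin{equation*}
\ones^\top(\bS_n+\eta\bI)^{-2}\ones\approx v_1'\,\ones^\top(v\bSigma+\eta\bI)^{-1}\bSigma(v\bSigma+\eta\bI)^{-1}\ones+\ones^\top(v\bSigma+\eta\bI)^{-2}\ones ,
\end{equation*}
and eliminating $\ones^\top(v\bSigma+\eta\bI)^{-2}\ones$ via $(v\bSigma+\eta\bI)^{-2}=\tfrac1\eta\big[(v\bSigma+\eta\bI)^{-1}-v(v\bSigma+\eta\bI)^{-1}\bSigma(v\bSigma+\eta\bI)^{-1}\big]$ leaves a single linear relation that I would solve for $\ones^\top(v\bSigma+\eta\bI)^{-1}\bSigma(v\bSigma+\eta\bI)^{-1}\ones=\lambda^2\,\ones^\top\bOmega_\lambda^{-1}\bSigma\bOmega_\lambda^{-1}\ones$; substituting $\ones^\top(\bS_n+\eta\bI)^{-1}\ones=\lambda\ones^\top\bS_\lambda^{-1}\ones$ and $\ones^\top(\bS_n+\eta\bI)^{-2}\ones=\lambda^2\ones^\top\bS_\lambda^{-2}\ones$ reproduces exactly the estimator on the left of \eqref{eq:lem58_3} once $v,v_1'$ are replaced by the consistent $\hat v,\hat v_1'$ of Theorem \ref{th2-lam}. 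The genuine obstacle is that almost-sure convergence of a family of functions does not pass to their $\eta$-derivatives automatically: I would fix a countable dense set of $\eta>0$, obtain simultaneous a.s.\ convergence there, and invoke analyticity of the resolvent forms together with local uniform boundedness so that Vitali's theorem upgrades this to a.s.\ convergence of derivatives. I would also confirm, by implicit differentiation of \eqref{v_eta} (using $c\eta\tfrac1p\tr(v\bSigma+\eta\bI)^{-1}=v-1+c$ to reconcile the denominator), that $dv/d\eta$ equals the stated expression \eqref{v1_eta}; the final division by $\ones^\top\bSigma^{-1}\ones$ then closes the argument.
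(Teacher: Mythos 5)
Your proposal is, in substance, the paper's own proof: the analytic engine is the bilinear deterministic equivalent of \cite{rubmes2011}, the squared and $\bSigma$-sandwiched resolvents are obtained by differentiating in $\eta$ with the interchange of limit and derivative justified by a complex-analysis convergence theorem (Weierstrass' theorem in the paper, Vitali in your write-up), and your two identities are mirror images of the ones the paper uses --- you apply $\eta(\bS_n+\eta\bI)^{-1}=\bI-\bS_n(\bS_n+\eta\bI)^{-1}$ on the random side where the paper applies $\lambda v(\eta,0)\bSigma\bOmega_\lambda^{-1}=\bI-(1-\lambda)\bOmega_\lambda^{-1}$ on the deterministic side, and your elimination of $(v\bSigma+\eta\bI)^{-2}$ is exactly the paper's decomposition of $\ones^\top\bOmega_\lambda^{-1}\bSigma\bOmega_\lambda^{-1}\ones$ used for \eqref{eq:lem58_3}. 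Your algebra for \eqref{eq:lem58_3} does reproduce the stated estimator, and your direct triangular-array Borel--Cantelli argument for \eqref{eq:lem58_0} is an acceptable substitute for the paper's citation of \cite{bodnar2014strong}.

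The genuine gap is the centering. Your one-sentence dismissal --- that the unknown mean ``changes $n$ to $n-1$ in the limit and is absorbed by the bounded-spectrum assumption'' --- is not an argument, and it misdescribes the mechanism. The deterministic equivalent you invoke is proved for the uncentered Gram matrix $\frac{1}{n}\bX_n\bX_n^\top$, whereas $\bS_n=\bSigma^{1/2}\widetilde{\bV}_n\bSigma^{1/2}-\bSigma^{1/2}\bar\bx_n\bar\bx_n^\top\bSigma^{1/2}$ carries a rank-one perturbation. Transferring each of your approximations to $\bS_\lambda$ requires a Sherman--Morrison expansion of $(\bS_n+\eta\bI)^{-1}$ and, for the squared resolvent entering \eqref{eq:lem58_3}, of $(\bS_n+\eta\bI)^{-2}$, followed by a proof that every cross term vanishes almost surely: one needs $\bxi^\top\bigl(\widetilde{\bS}_n+\eta\bI\bigr)^{-1}\bSigma^{1/2}\bar\bx_n\stackrel{a.s.}{\to}0$, uniform boundedness of $\bigl(1-\bar\bx_n^\top\bSigma^{1/2}(\widetilde{\bS}_n+\eta\bI)^{-1}\bSigma^{1/2}\bar\bx_n\bigr)^{-1}$, and Cauchy--Schwarz bounds on quantities such as $\bar\bx_n^\top\bSigma^{1/2}(\widetilde{\bS}_n+\eta\bI)^{-2}\bSigma^{1/2}\bar\bx_n$. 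The paper imports these facts from \cite{pan2014comparison}, and this is where the bounded-eigenvalue assumption on $\bSigma$ is actually consumed; handling it takes up most of the proof of Lemma \ref{lem:lem3}. As written, your proposal establishes the theorem only for known $\bmu$ (uncentered $\bS_n$), not in the setting of the statement.
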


Let
\begin{equation}\label{d1}
d_1 (\eta)=  \frac{\lambda^{-1}}{ \hat{v}(\eta, 0)}\left(1-(1-\lambda) \bb^\top \bS_\lambda^{-1} \ones\right)  
\end{equation}
and
\begin{equation}\label{d2}
d_2(\eta)= \frac{1}{\lambda \hat{v}(\eta, 0)} \ones^\top\bS_{\lambda}^{-1}\ones  -   \frac{1-\lambda}{\lambda \hat{v}(\eta, 0)}\frac{\ones^\top \bS_{\lambda}^{-2}\ones  - \lambda^{-1}                \frac{\hat{v}_1^\prime(\eta,0)}{\hat{v}(\eta,0)}\ones^\top \bS_{\lambda}^{-1} \ones}
{1-\frac{\hat{v}_1^\prime(\eta,0)}{\hat{v}(\eta,0)}(\frac{1}{\lambda}-1)}.    
\end{equation}
The application of the results derived in Theorems \ref{th2-lam} and \ref{th3-lam} leads to a consistent bona fide estimator for $L_{2}(\lambda)$ and $\psi^*(\lambda)$ presented in Theorem \ref{th4-lam}.

\begin{thm}\label{th4-lam}
Let $\bY_n$ possess the stochastic representation as in \eqref{eqn:obs}. Assume that the relative loss of the target portfolio given in \eqref{eq:rel_loss} is uniformly bounded in $p$. Let $\frac{\bb^\top \bSigma^{-1}\bb}{\bb^\top \bSigma\bb}$ be uniformly bounded in $p$.% and let $\frac{1}{p}\bSigma^{-1}$ have bounded trace norm. 
 Then it holds that
\begin{enumerate}[(i)]
    \item 
    \begin{equation}\label{hLi2-lambda-as}
        \left|\hat{L}_{n;2}(\lambda)-L_{2}(\lambda)\right| \stackrel{a.s.}{\rightarrow} 0 
    \end{equation}
for $p/n \to c \in (0,\infty)$ as $n \to \infty$ with
\begin{equation}\label{hLi2-lambda-star}
    \hat{L}_{n;2}(\lambda)=
    \frac{
        \left(1-\frac{1}{\bb^\top \bS \bb} \frac{d_1(\eta)}{\ones^\top \bS_{\lambda}^{-1}\ones} \right)^2
        }{
        1-\frac{2}{\bb^\top \bS \bb}\frac{d_1(\eta)}{\ones^\top \bS_{\lambda}^{-1}\ones}
        +\frac{1}{\bb^\top \bS \bb} \frac{(1-\hat{v}_2^\prime(\eta, 0))d_2(\eta)}{\left(\ones^\top \bS_{\lambda}^{-1}\ones\right)^2} } ,
\end{equation}
    \item
    \begin{equation}\label{hpsi-as}
  \left|\hat{\psi}_n^*(\lambda)-\psi^*(\lambda)\right| \stackrel{a.s.}{\rightarrow} 0 
\end{equation}
for $p/n \to c \in (0,\infty)$ as $n \to \infty$ with
\begin{equation}\label{hpsi-star}
\hat{\psi}_n^*(\lambda)= 
    \frac{
        \frac{1}{\bb^\top \bS \bb
        }\frac{d_1(\eta)}{ \ones^\top \bS_{\lambda}^{-1}\ones} 
        -\frac{1}{\bb^\top \bS \bb} 
        \frac{(1-\hat{v}_2^\prime(\eta, 0))d_2(\eta)
            }{\left(\ones^\top \bS_{\lambda}^{-1}\ones\right)^2}}
        {1-\frac{2}{\bb^\top \bS \bb} \frac{d_1(\eta)}{\ones^\top \bS_{\lambda}^{-1}\ones} 
        +\frac{1}{\bb^\top \bS \bb} \frac{(1-\hat{v}_2^\prime(\eta, 0))d_2(\eta)}{\left(\ones^\top \bS_{\lambda}^{-1}\ones\right)^2}},
\end{equation}
\end{enumerate}
where $\eta= {1}/{\lambda}-1$, $\hat{v}_2^\prime(\eta, 0)$ is provided in Theorem \ref{th2-lam}, $d_1(\eta)$ and $d_2(\eta)$ are given in \eqref{d1} and \eqref{d2}, respectively.
\end{thm}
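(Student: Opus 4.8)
The plan is to recognize that the estimators and their targets are the very same pair of rational maps evaluated at two scalar summaries. Set
\[
A=\frac{1}{\bb^\top\bSigma\bb}\frac{\bb^\top\bSigma\bOmega_{\lambda}^{-1}\ones}{\ones^\top\bOmega_{\lambda}^{-1}\ones},
\qquad
B=\frac{1}{\bb^\top\bSigma\bb}\frac{(1-v_2^\prime(\eta,0))\,\ones^\top\bOmega_{\lambda}^{-1}\bSigma\bOmega_{\lambda}^{-1}\ones}{(\ones^\top\bOmega_{\lambda}^{-1}\ones)^2},
\]
and let $\hat A,\hat B$ be the data-driven analogues obtained by replacing $\bb^\top\bSigma\bb$ by $\bb^\top\bS\bb$, the ratio $\bb^\top\bSigma\bOmega_{\lambda}^{-1}\ones/\ones^\top\bOmega_{\lambda}^{-1}\ones$ by $d_1(\eta)/\ones^\top\bS_{\lambda}^{-1}\ones$, the ratio $\ones^\top\bOmega_{\lambda}^{-1}\bSigma\bOmega_{\lambda}^{-1}\ones/(\ones^\top\bOmega_{\lambda}^{-1}\ones)^2$ by $d_2(\eta)/(\ones^\top\bS_{\lambda}^{-1}\ones)^2$, and $v_2^\prime(\eta,0)$ by $\hat v_2^\prime(\eta,0)$. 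Then $L_{2}(\lambda)=g(A,B)$ and $\hat L_{n;2}(\lambda)=g(\hat A,\hat B)$ with $g(x,y)=(1-x)^2/(1-2x+y)$, while $\psi^*(\lambda)=h(A,B)$ and $\hat\psi_n^*(\lambda)=h(\hat A,\hat B)$ with $h(x,y)=(x-y)/(1-2x+y)$. It therefore suffices to prove $|\hat A-A|\stackrel{a.s.}{\rightarrow}0$ and $|\hat B-B|\stackrel{a.s.}{\rightarrow}0$ and to transfer these through $g$ and $h$.

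To control the two summaries I would feed in Theorems \ref{th2-lam} and \ref{th3-lam}. The key algebraic point is that the unknown normalizers $\bb^\top\bSigma\bb$ and $\ones^\top\bSigma^{-1}\ones$ cancel in each ratio: dividing the $d_1$-block by $\sqrt{\bb^\top\bSigma\bb\,\ones^\top\bSigma^{-1}\ones}$, the $\bb^\top\bS\bb$-block by $\bb^\top\bSigma\bb$, and the $\ones^\top\bS_{\lambda}^{-1}\ones$- and $d_2$-blocks by $\ones^\top\bSigma^{-1}\ones$, each normalized factor is exactly one of the convergences \eqref{eq:lem58_0}--\eqref{eq:lem58_3}, while $1-\hat v_2^\prime(\eta,0)\to 1-v_2^\prime(\eta,0)$ is \eqref{lem:v_est-eq3}. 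After this rescaling the only surviving population factor is $\sqrt{(\bb^\top\bSigma\bb)(\ones^\top\bSigma^{-1}\ones)}=\sqrt{1+L_{\bb}}$, which by the relative-loss hypothesis and Cauchy--Schwarz lies in a fixed interval $[1,C]$. The algebra of almost-sure limits for products and quotients then applies, because the relevant normalized denominators $\bb^\top\bS\bb/(\bb^\top\bSigma\bb)\to1$ and $\ones^\top\bS_{\lambda}^{-1}\ones/(\ones^\top\bSigma^{-1}\ones)$ are bounded away from zero thanks to the bounded spectral norm of $\bSigma$ (hence of $\bOmega_{\lambda}=v(\eta,0)\lambda\bSigma+(1-\lambda)\bI$), the bound on $\bb^\top\bSigma^{-1}\bb/\bb^\top\bSigma\bb$, and $v(\eta,0)$ confined to a bounded range. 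This yields $\hat A-A\stackrel{a.s.}{\rightarrow}0$ and $\hat B-B\stackrel{a.s.}{\rightarrow}0$.

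To pass from the summaries to $g$ and $h$ one cannot merely cite the continuous mapping theorem, since $A=A_p$ and $B=B_p$ are deterministic sequences drifting with $p$ rather than fixed limits. Instead I would check that $(A_p,B_p)$ and $(\hat A,\hat B)$ eventually lie in a common compact set on which $g$ and $h$ are Lipschitz, giving $|g(\hat A,\hat B)-g(A_p,B_p)|\le C(|\hat A-A|+|\hat B-B|)\stackrel{a.s.}{\rightarrow}0$ and likewise for $h$. Uniform boundedness of $A_p,B_p$ and eventual boundedness of $\hat A,\hat B$ follow from the same spectral bounds and the hypotheses on $L_{\bb}$ and $\bb^\top\bSigma^{-1}\bb/\bb^\top\bSigma\bb$.

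The genuine obstacle is the strict positivity, uniformly in $p$, of the common denominator $1-2A+B$, which the Lipschitz step needs in order to keep $g$ and $h$ off their singular set. Comparing with the decomposition \eqref{Li-lambda}, this quantity is precisely the deterministic equivalent of the normalized target variance $(\bb-\hbw_{S;\lambda})^\top\bSigma(\bb-\hbw_{S;\lambda})/\bb^\top\bSigma\bb$, hence the almost-sure limit of a nonnegative random quantity and therefore nonnegative. Upgrading nonnegativity to a uniform positive lower bound is the crux: a plain Cauchy--Schwarz estimate on $A$ and $B$ leaves the factor $1-v_2^\prime(\eta,0)$ unaccounted for, so instead I would bound the pre-limit form below by $\lambda_{\min}(\bSigma)\,\|\bb-\hbw_{S;\lambda}\|^2/\bb^\top\bSigma\bb$ and argue, via the bounds on $L_{\bb}$ and $\bb^\top\bSigma^{-1}\bb/\bb^\top\bSigma\bb$, that this stays bounded away from zero. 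Once $1-2A+B\ge c>0$ is secured, $|\hat A-A|,|\hat B-B|\to0$ force $1-2\hat A+\hat B\ge c/2$ eventually, and the Lipschitz transfer closes both parts (i) and (ii).
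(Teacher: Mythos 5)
Your skeleton---reducing $\hat L_{n;2}(\lambda)$ and $\hat\psi_n^*(\lambda)$ to the two scalar summaries $\hat A,\hat B$, extracting $|\hat A-A|\stackrel{a.s.}{\to}0$ and $|\hat B-B|\stackrel{a.s.}{\to}0$ from Theorems \ref{th2-lam} and \ref{th3-lam} after cancelling the normalizers $\bb^\top\bSigma\bb$ and $\ones^\top\bSigma^{-1}\ones$, and then transferring through the rational maps $g,h$---is exactly the paper's route: its entire proof of Theorem \ref{th4-lam} is the one-sentence assertion that the result follows by applying Theorems \ref{th2-lam} and \ref{th3-lam}. Your insistence on a Lipschitz-on-a-common-compact-set argument (because $A_p,B_p$ drift with $p$ rather than converge) is a point the paper silently glosses over, and that part of your write-up is sound.

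The genuine gap is in the step you correctly single out as the crux: uniform positivity of $1-2A+B$. You propose to bound the \emph{pre-limit random} quantity below by $\lambda_{\min}(\bSigma)\|\bb-\hbw_{S;\lambda}\|^2/\bb^\top\bSigma\bb$ and to conclude ``via the bounds on $L_{\bb}$ and $\bb^\top\bSigma^{-1}\bb/\bb^\top\bSigma\bb$.'' Those hypotheses are deterministic upper bounds on functionals of $\bb$ and $\bSigma$ alone; they say nothing about the distance between $\bb$ and the random vector $\hbw_{S;\lambda}$, so they cannot produce such a lower bound. Indeed that distance can be small (for $\bb=\ones/p$ one has $\hbw_{S;\lambda}=\ones/p+O(\lambda)$ in norm), and the fact that $p\|\bb-\hbw_{S;\lambda}\|^2$ stays bounded away from zero for fixed $\lambda$ is itself a random-matrix statement---it is precisely the persistent-fluctuation excess carried by the $v_1^\prime$ term in \eqref{lem3_eq2}---so your route assumes what must be proved. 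The repair works on the limit directly and uses exactly the factor $1-v_2^\prime(\eta,0)$ that you discarded: with $\bw_\Omega=\bOmega_\lambda^{-1}\ones/\ones^\top\bOmega_\lambda^{-1}\ones$,
\begin{equation*}
1-2A+B=\frac{(\bb-\bw_\Omega)^\top\bSigma(\bb-\bw_\Omega)}{\bb^\top\bSigma\bb}
+\bigl(-v_2^\prime(\eta,0)\bigr)\frac{\bw_\Omega^\top\bSigma\bw_\Omega}{\bb^\top\bSigma\bb}
\;\ge\;\bigl(-v_2^\prime(\eta,0)\bigr)\,\frac{\lambda_{\min}(\bSigma)}{(1+L_{\bb})\,\lambda_{\max}(\bSigma)},
\end{equation*}
since $\bw_\Omega^\top\bSigma\bw_\Omega\ge\lambda_{\min}(\bSigma)\|\bw_\Omega\|^2\ge\lambda_{\min}(\bSigma)/p$ by Cauchy--Schwarz and $\bb^\top\bSigma\bb=(1+L_{\bb})/\ones^\top\bSigma^{-1}\ones\le(1+L_{\bb})\lambda_{\max}(\bSigma)/p$. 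Moreover $-v_2^\prime(\eta,0)=\tau/(1-\tau)>0$ with $\tau=c\,p^{-1}\tr\bigl(v^2(\eta\bSigma^{-1}+v\bI)^{-2}\bigr)$: the fixed-point equation \eqref{v_eta} gives $c\,p^{-1}\tr\bigl(v\bSigma(v\bSigma+\eta\bI)^{-1}\bigr)=1-v<1$, which dominates $\tau$ termwise, so $\tau\in(0,1)$ and is bounded away from $0$ and $1$ under the bounded-spectrum assumption. With this replacement your Lipschitz transfer closes both parts; as written, the crucial positivity step does not go through.
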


%\begin{cor}[For $c \leq 1$? Otherwise might work for $\lambda<1$ when $c>1$?]
%Assume that $\bb=\frac{1}{p}\ones$, the equally weighted portfolio, then $\hat{L}_{n;2}(\lambda)$ is concave in $\lambda$. 
%\end{cor}
The three loss functions (bona fide, oracle and true) are illustrated in Figure \ref{fig:loss_figures} for two different values of $p$. Under oracle loss function we understand the asymptotic equivalent of $L_{n;2}$, namely $L_2$. When $p$ is equal to 150 the differences between the functions look, at least graphically, very small. The relative difference it at most $20\%$ and the optimal $\lambda$'s are extremely close to each other. We still need to face the fact that the out-of-sample variance is slightly over-estimated, even though the bona-fide estimator will be asymptotically valid. When $p$ is equal to $450$ and $c$ is greater than one we observe a slightly different picture. For $p>n$ the bona fide loss function is not necessarily concave. This is due to the fact that as $\lambda$ approaches 1, and $c$ is greater than one, $\lambda \bS_n + (1-\lambda)\bI$ becomes closer and closer to a singular matrix. Thus, the eigenvalues of the inverse of the shrunk sample covariance matrix explode. This issue could be repaired using a different type of ridge regularization mentioned in \cite[formula (2.33)]{bop2021}, where Moore-Penrose inverse for $c>1$ can be employed for $\lambda\to1$. This interesting observation is left for the future investigations.

\begin{figure}
\begin{subfigure}{0.45\textwidth}
    \includegraphics[width=\textwidth]{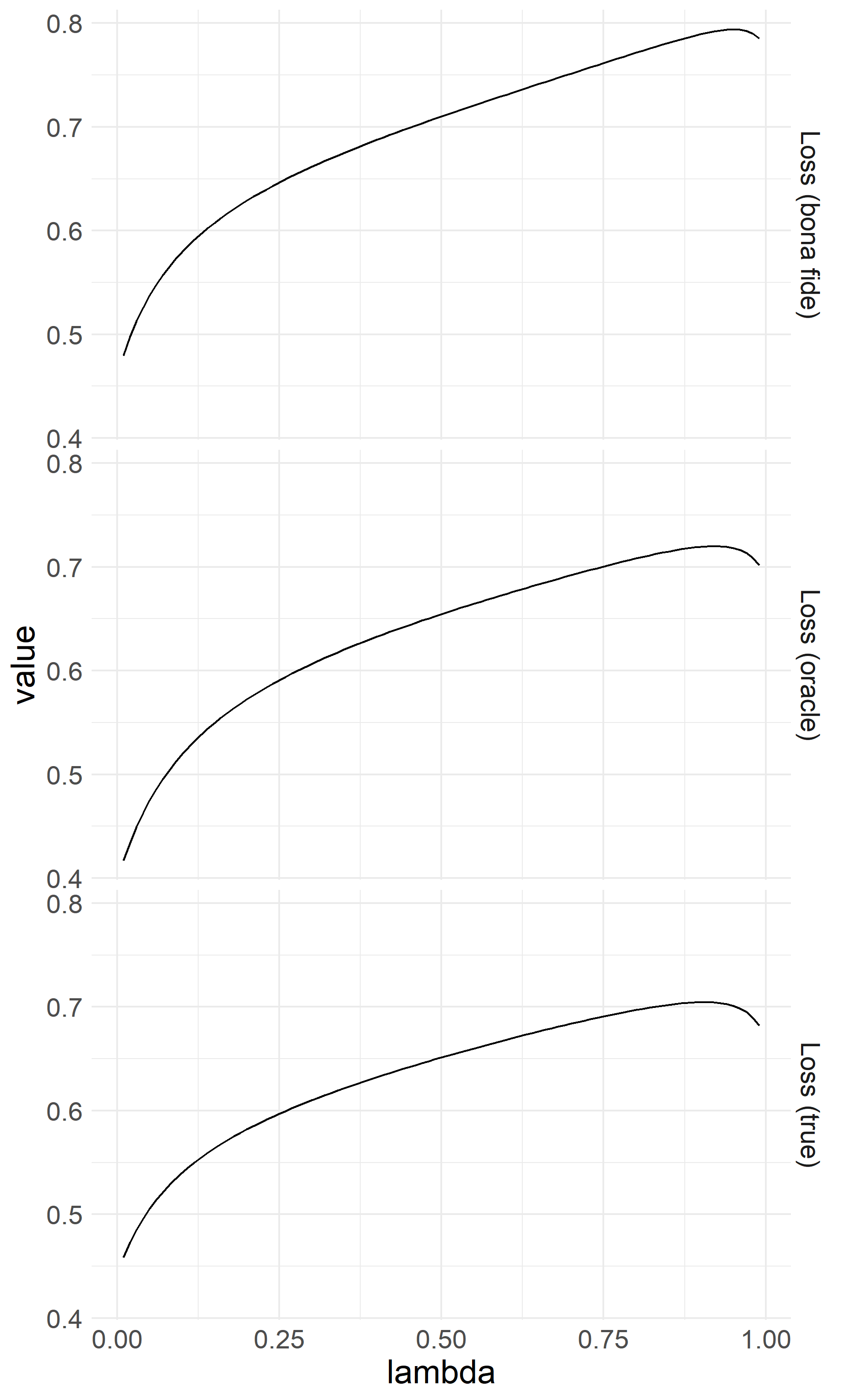}
    \caption{p=150, n=300}
    \label{fig:loss_ex1}
\end{subfigure}
\hfill
\begin{subfigure}{0.45\textwidth}
    \includegraphics[width=\textwidth]{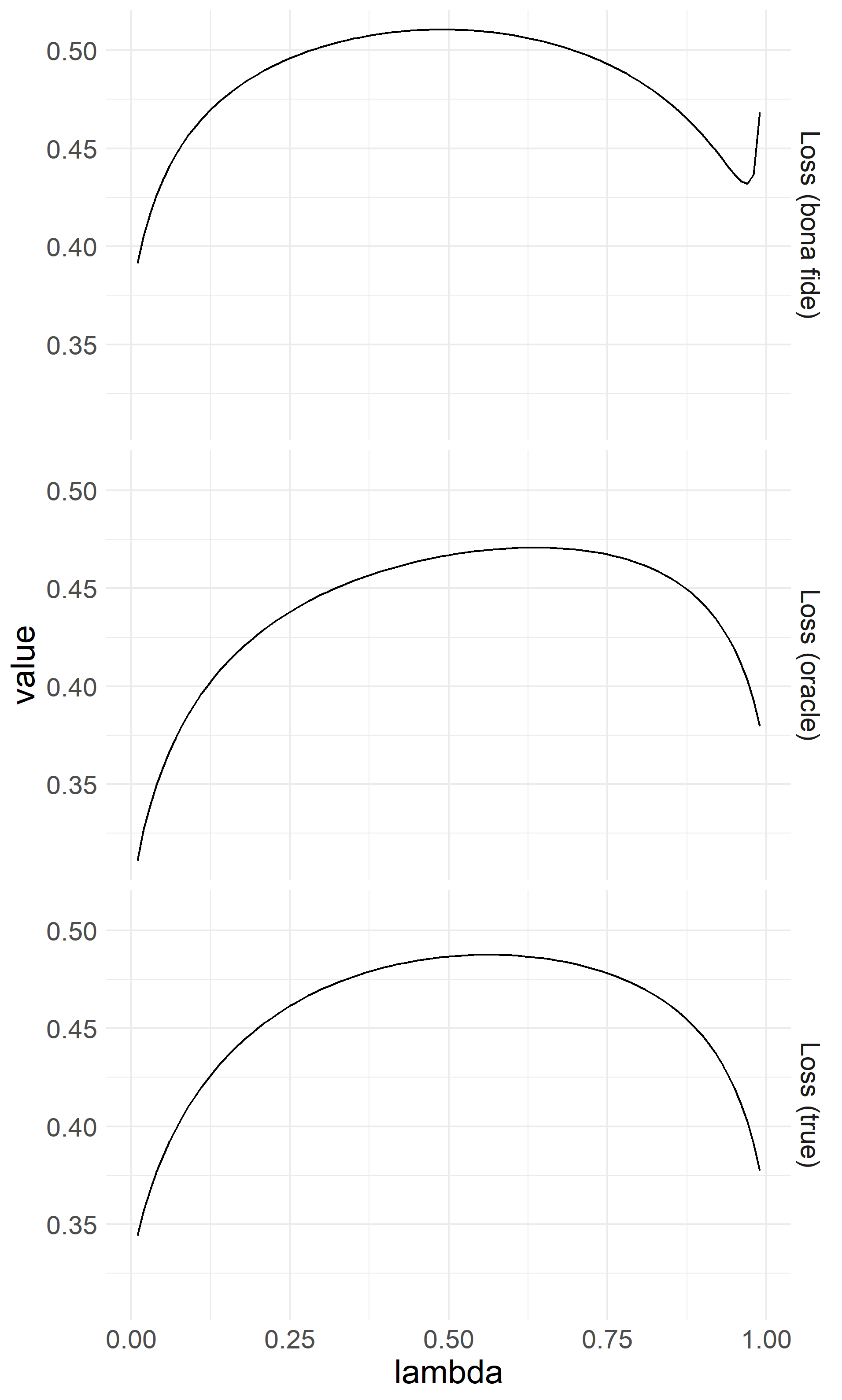}
    \caption{p=450, n=300}
    \label{fig:loss_ex2}
\end{subfigure}
\caption{The loss functions from Theorem \ref{th1-lam}-\ref{th4-lam} illustrated over different values of $\lambda \in (0,1)$. The data were simulated from a t-distribution with 5 degrees of freedom and used the equally weighted portfolio as a target.\label{fig:loss_figures}}
\end{figure}

\section{Numerical study}\label{sec:num}
In this section we will conduct a simulation study to assess the finite sample properties of the suggested double shrinkage estimator and to compare its behaviour with existent approaches. Due to the asymptotic nature of our procedure we first devote some attention to the finite sample properties of the suggested estimator under different data-generating processes. We end this section with an empirical application of the methods on the assets from S\&P 500.
\subsection{Setup of the simulation study}
In the simulation study we will use the following four different stochastic models for the data-generating process:
\begin{description}
    \item[Scenario 1: $t$-distribution]
    The elements of $\bx_t$ are drawn independently from $t$-distribution with $5$ degrees of freedom, i.e., $x_{tj}\sim t(5)$ for $j=1,...,p$, while $\by_t$ is constructed according to \eqref{eqn:obs}.
    \item[Scenario 2: CAPM] 
    The vector of asset returns $\by_t$ is generated according to the CAPM (Capital Asset Pricing Model), i.e.,
    $$\by_t = \bmu + \bol{\beta} z_t+ \bSigma^{1/2}\bx_t,$$
    with independently distributed $z_t \sim N(0, 1)$ and $\bx_{t} \sim N_p(\mathbf{0}, \mathbf{I})$. The elements of vector $\bol{\beta}$ are drawn from the uniform distribution, that is $\beta_i \sim U(-1,1)$ for $i=1,...,p$.
    \item[Scenario 3: CCC-GARCH model of \cite{bollerslev1990modelling}] The asset returns are simulated according to 
    $$\by_t | \bSigma_t \sim N_p(\bmu, \bSigma_t)$$
    where the conditional covariance matrix is specified by 
    $$\bSigma_t = \bD_t^{1/2} \bC \bD_t^{1/2}
    \quad \text{with} \quad \bD_t = \operatorname{diag}(h_{1,t}, h_{2,t}, ..., h_{p,t}),$$
    where
    $$
    h_{j,t} = \alpha_{j,0} + \alpha_{j,1} (\by_{j, t-1} - \bmu_j)^2 + \beta_{j,1} h_{j, t-1}, \text{ for } j=1,2,...,p, \text{ and } t=1,2,...,n_i,~i=1,...,T.
    $$
    The coefficients of the CCC model are sampled according to $\alpha_{j,1} \sim U(0,0.1)$ and $\beta_{j,1} \sim U(0.6,0.7)$ which implies that the stationarity conditions, $\alpha_{j,1} + \beta_{j,1} < 1$, are always fulfilled. The constant correlation matrix $\bC$ is induced by $\bSigma$. The intercept $\alpha_{j,0}$ is chosen such that the unconditional covariance matrix is equal to $\bSigma$.
    \item[Scenario 4: VARMA model] The vector of asset returns $\by_t$ is simulated according to  
    $$
    \by_{t} = \bmu + \mathbf{\Gamma} (\by_{t-1}-\bmu) + \bSigma^{1/2}\bx_t
    \quad \text{with} \quad 
    \bx_t\sim N_p(\mathbf{0},\mathbf{I})
    $$
    for $t=1,...,n+m$, where $\mathbf{\Gamma} = \operatorname{diag}(\gamma_1, \gamma_2,..., \gamma_p)$ with $\gamma_i \sim U(-0.9, 0.9)$ for $i=1,...,p$. Note that in the case of the VAR model, the covariance matrix of $\by_t$ is computed as $\text{vec}(\mathbbm{V}ar(\by))=(\bI-\mathbf{\Gamma}\otimes \mathbf{\Gamma})^{-1}\text{vec}(\bSigma)$ where $\text{vec}$ denotes the vec operator. This matrix is thereafter used in the computation of the limiting objects.
\end{description} 

We will repeat each scenario 1000 times for a number of configurations where the concentration ratio $c$ will range from $0.25$ to $2.7$ and $n=100,200,300,400$. The portfolios contain at most $1080$ assets which implies that we are estimating close to $600 000$ parameters as well as the two shrinkage coefficients.
The parameters of the model are simulated in the following manner.
The elements of the mean vector $\bmu$ are simulated from a uniform distribution with $\mu_i \sim U(-0.1, 0.1)$. To simulate the covariance matrix we make use of the function \textbf{RandCovMat} from the HDShOP package (see, \citet{HDShOP}).

% \subsection{The implications of a target portfolio $\bb$}\label{sec:implications_target}
% From equations \eqref{hLi2-lambda-star} and \eqref{hpsi-star} we can see that the choice of target portfolio $\bb$ has implications on both shrinkage intensities. The choice of target acts upon the sample covariance matrix through $\bb^\top \bS \bb$ and its shrunk inverse through $\bb^\top \bS^{-1}_\lambda \ones$. As a result, the target portfolio will influence the shrinkage coefficient $\lambda$ and $\psi$. Both depend on the target as well as data, which is to be expected from the loss function. \textbf{Give some more insight on the behaviour for different $\bb$?}. 

% \begin{center}
%     insert figure for scenario 1 for different shrinkage intensities, comment on the results
% \end{center}

% \begin{center}
%     insert figure for scenario 3 for different shrinkage intensities, comment on the results
% \end{center}

\subsection{Comparison to benchmark strategies}
In this section we will investigate the performance of five different methods. We will consider the following type of portfolios
\begin{enumerate}
    \item The portfolio allocation problem obtained from Theorem \ref{th4-lam}, which we will abbreviate "Double".
    \item The linear shrinkage estimator of the GMV portfolio weights from \citet{bodnar2018estimation}, which we will abbreviate "BPS"
    \item The linear shrinkage estimator of the GMV portfolio weights from \citet{frahm2010}, which we will abbreviate "FM". This portfolio can be constructed only for $c<1$ following the approach suggested in \citet{frahm2010}.
    \item The nonlinear shrinkage estimator of the covariance matrix from \citet{ledoit2020analytical} which is used in the GMV portfolio as a replacement for $\bSigma$. We will abbreviate this portfolio strategy as "LW2020".
    \item The traditional GMV portfolio which we will abbreviate "Traditional". Whenever $c>1$ we will use the Moore-Penrose inverse of $\bS_n$ to compute the sample weights of the GMV portfolio.
\end{enumerate}

Notice that the first three types of portfolios can take many target portfolios. We therefore include two target portfolios as well as other benchmarks in the forthcoming comparisons. These are the equally weighted portfolio and an equal correlation target. The first is deterministic and does not depend on data, which is in line with what Theorem \ref{th4-lam} assumes. The second target portfolio depends on data. It assumes that the all assets share the same correlation but have different volatility. %Our aim is to investigate the implications of this violation.  
For each scenario we will display the relative loss $V_\bw / V_{GMV} -1$, where $V_\bw=\bw^\top\bSigma\bw$. A value close to zero indicates a good estimate of the in-sample loss. 

In Figure \ref{fig:lossTdist} we can see the results of the simulations under scenario 1. Each color represent a "type" of portfolio while the linetype highlights what type of target BPS, Double and FM use. For small $c$ the loss is not significantly different between the different types. However, as $c$ becomes larger, the results diverge from each other. Regardless of $n$ the largest loss is provided by Traditional portfolio. The Traditional estimator is famously bad in higher dimensions, see e.g. \citet{bodnar2018estimation}. The FM portfolio is only defined for $c<1$ so the loss for this method is not presented thereafter. The third best method is BPS using equal correlation and equally weighted as target portfolios. The uncertainty from the target portfolio does not seem to impact the loss a lot. It achieves the same amount of loss as the equally weighted target. The best performing portfolios are the Double and LW2020. For smaller $n$ the difference is more pronounced, LW2020 provides the smallest loss. However, as $n$ increases the Double portfolio with the equally weighted target provides quite similar performance. The LW2020 portfolio is 3\% better (lower loss) in comparison to the Double portfolio when $n=400$.

\begin{figure}
    \centering
    \includegraphics[width=\textwidth]{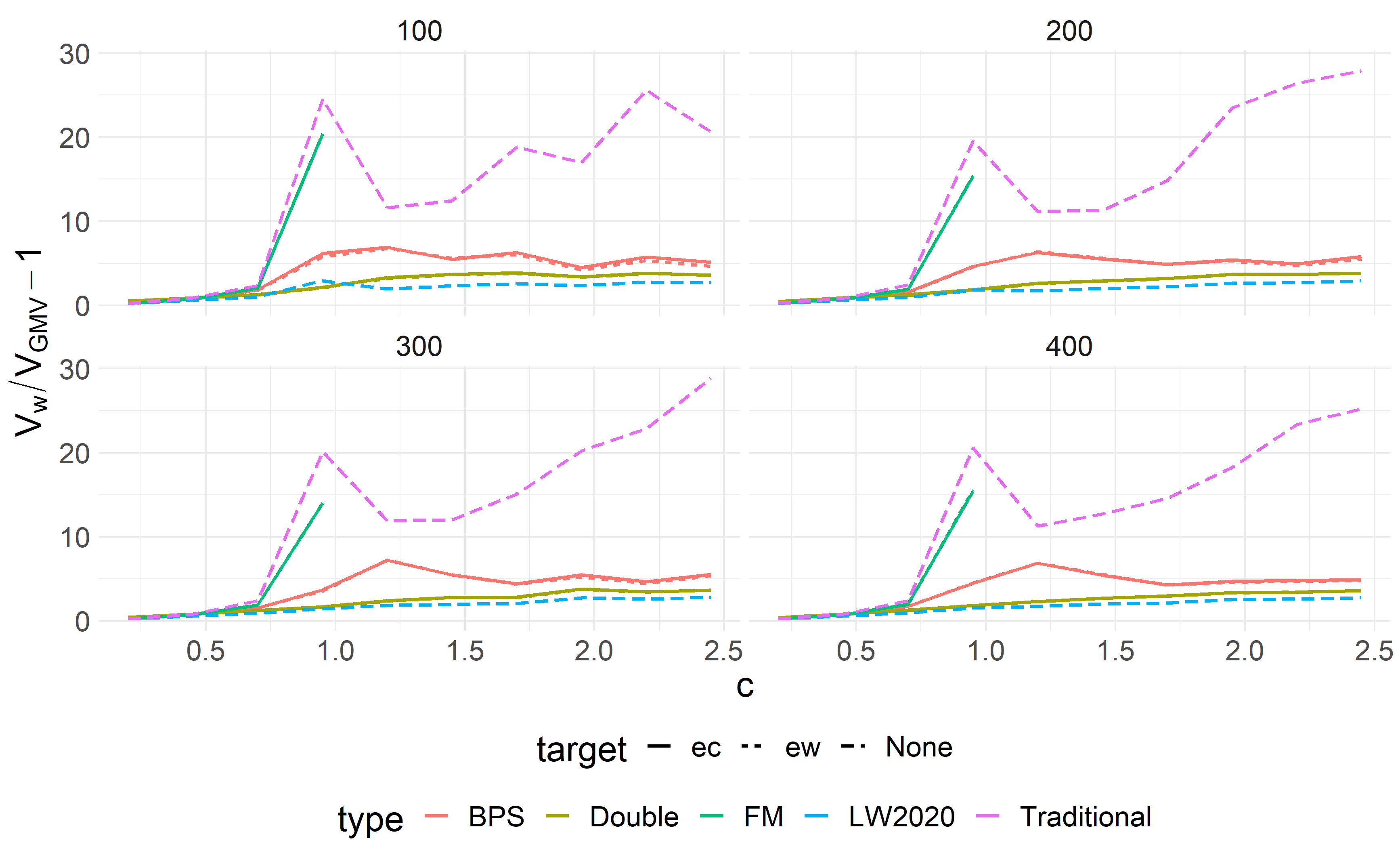}
    \caption{Relative loss $V_{\bw} / V_{GMV} - 1$ computed for several estimator of the GMV portfolio weights under scenario 1. Notice that some methods use different targets but are of the same type.}
    \label{fig:lossTdist}
\end{figure}

In Figure \ref{fig:lossCAPMdist} we can see the results of the simulation study conducted under scenario 2. As with scenario 1, we can see the same type of ordering. However, in scenario 2 there is a pronounced increase for the loss of LW2020 when $n=100$ around $c=1$. The Double portfolios do not seem to suffer from that issue. In all other cases the loss is similar with the smallest difference between Double and LW2020 equal to 3.1\%. The difference is not large since the inverse covariance matrix for CAPM is a one rank update from scenario 1. It has to be noted that in this case the largest eigenvalue of $\bSigma$ is not bounded anymore. There is a little bit more noise, but there is no more temporal or structural dependence that we do not take care of in \eqref{eqn:obs}.

\begin{figure}
    \centering
    \includegraphics[width=\textwidth]{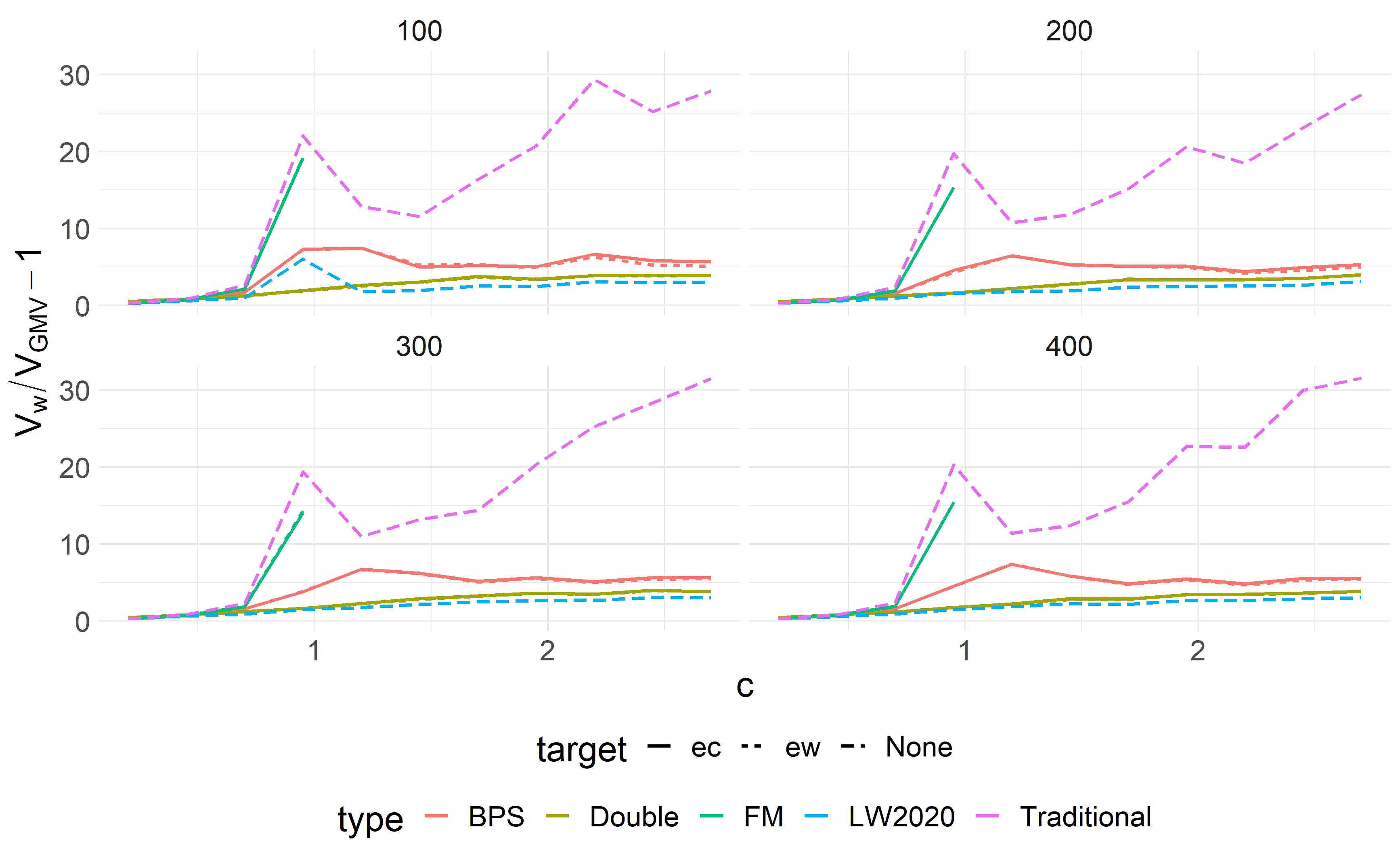}
    \caption{Relative loss $V_{\bw} / V_{GMV} - 1$ computed for several estimator of the GMV portfolio weights under scenario 2. Notice that some methods use different targets but are of the same type.}
    \label{fig:lossCAPMdist}
\end{figure}

Figure \ref{fig:lossCCCdist} depicts relative loses computed for the considered estimators of the GMV portfolio under scenario 3. It displays almost exactly the same plots as shown in Figures \ref{fig:lossTdist} and \ref{fig:lossCAPMdist}. The introduction of temporal dependence is not dramatic in terms of the relative loss. We still see the same type of ordering where Double and LW2020 are almost equally good when n is sufficiently large. 

\begin{figure}
    \centering
    \includegraphics[width=\textwidth]{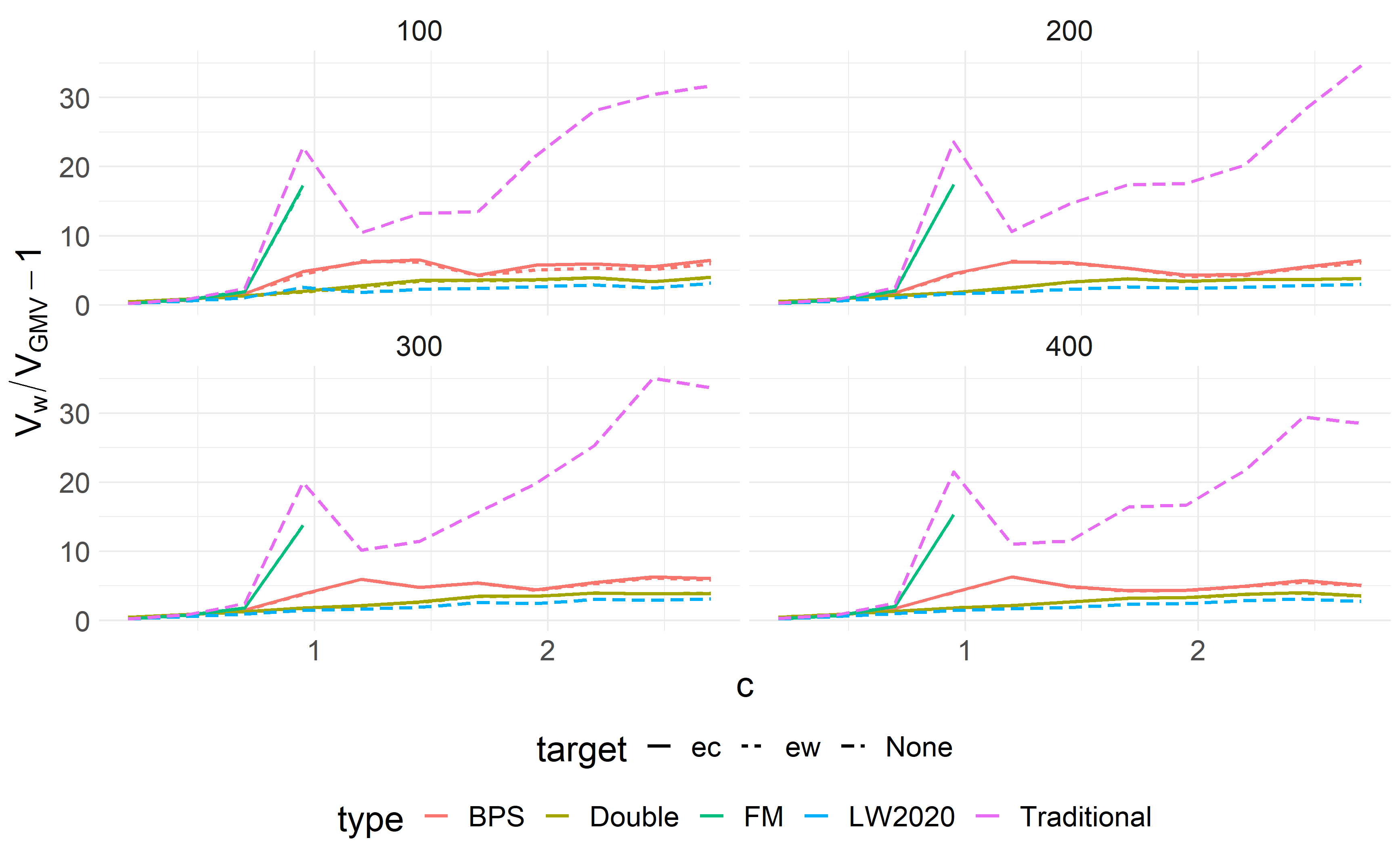}
    \caption{Relative loss $V_{\bw} / V_{GMV} - 1$ computed for several estimator of the GMV portfolio weights under scenario 3. Notice that some methods use different targets but are of the same type.}
    \label{fig:lossCCCdist}
\end{figure}

In Figure \ref{fig:lossVARMAdist} we can see the results obtained under scenario 4. Although the same ordering seems to hold the scale is not the same. Some methods, namely the Traditional, FM and BPS, have a much larger loss in comparison to previous scenarios. There is is a very little difference between the other methods for larger $c$ in contrast to previous scenarios. The difference between the two best methods, the Double and LW2020 portfolios, is as small as 1\% apart when $n=400$ and $c$ is large. 

\begin{figure}
    \centering
    \includegraphics[width=\textwidth]{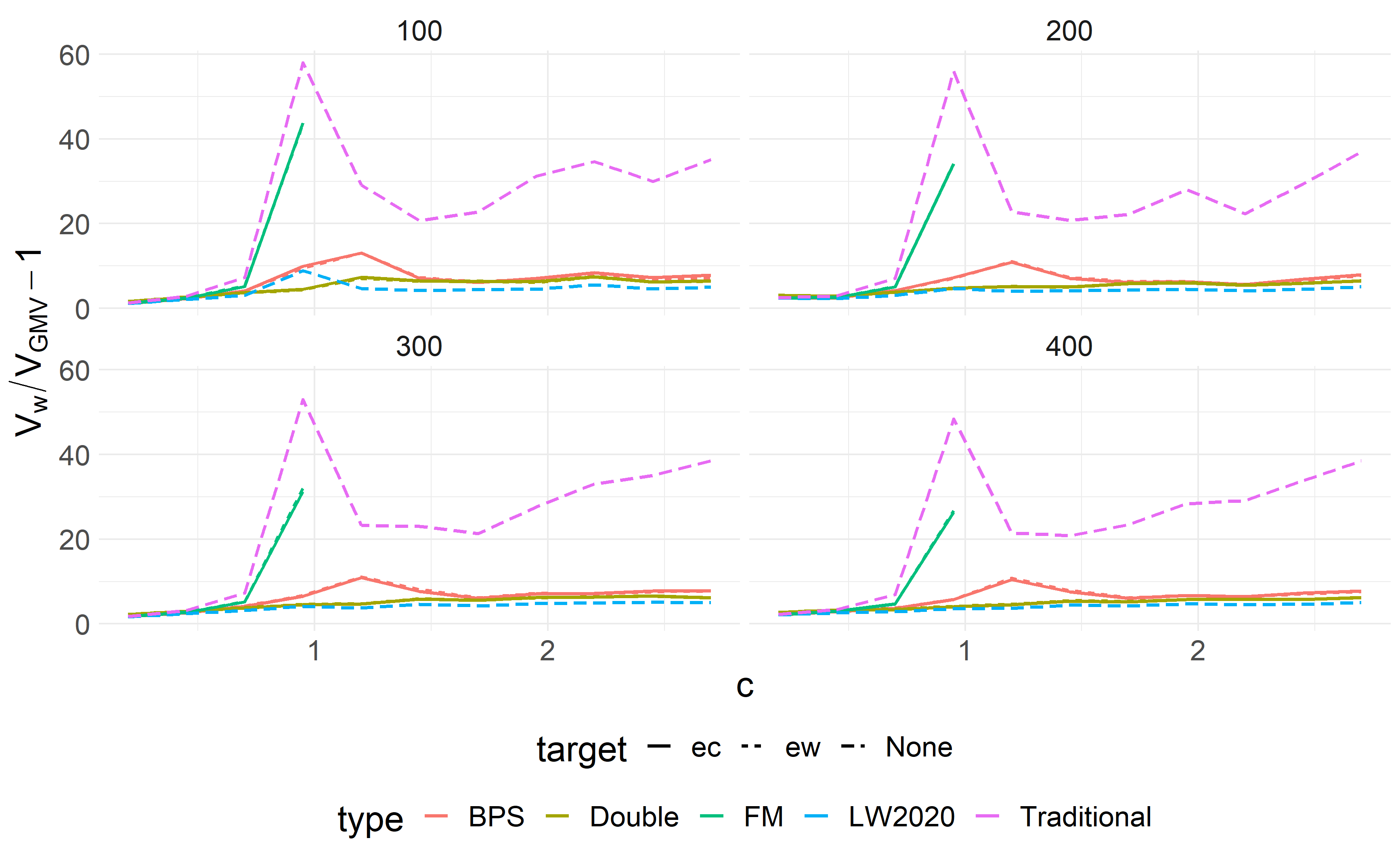}
    \caption{Relative loss $V_{\bw} / V_{GMV} - 1$ computed for several estimator of the GMV portfolio weights under scenario 4. Notice that some methods use different targets but are of the same type.}
    \label{fig:lossVARMAdist}
\end{figure}

All in all, the results of this simulation experiment justify that the proposed method is at least as good as the nonlinear shrinkage technique, which is already proved to be a state-of-the-art method for the estimation of large dimensional covariance matrices. Thus, it is of high importance to test it on a real data set using some other empirical measures of performance like the out-of-sample variance, return, Sharpe ratio, turnover etc.

\subsection{Empirical application}
In this section we will apply the different benchmark strategies on empirical data. The data constitute of daily (log returns) for $431$ assets from the S\&P500 index. The out-of-sample data ranges from 2013-01-01 to 2021-11-04. The in-sample data ranges back to early 2011. We will follow the previous section in that we use the equal correlation (ec) and equally weighted (ew) portfolios  as targets. In this empirical application we fix the window size to $n=250$ or $n=500$. We thereafter change the portfolio size $p$. The three different portfolio sizes we consider are $260$, $400$ and $431$. Thus, the window size $n=250$ reflects $c\in\{1.04, 1.6, 1.724\}$ and $n=500$ stands for $c\in\{0.52, 0.8, 0.862\}$.

All portfolios aim to minimize portfolio variance. Since the true portfolio variance is not available the most natural evaluation method should be which strategy provide the smallest out-of-sample variance, which will be denoted by $\sigma^{(k)}$. However, any portfolio is more than its volatility. A portfolio with small volatility does not necessarily provide a feasible return nor does it provide a feasible portfolio to invest in. We will therefore use the out-of-sample mean, which we denote $\bar{\by}^{(k)}_{\bw}$, as well as the out-of-sample Sharpe Ratio, denoted $SR^{(k)}$, to investigate the properties of the portfolio return distribution. Moreover, the stability of the portfolio weights also reflects how risky it is. To investigate the characteristics of the portfolio weights we will also consider the following performance measures
\begin{align}
    |\bw^{(k)}| &= \frac{1}{Tp} \sum_{i=1}^T\sum_{j=1}^p |w_{i,j}^{(k)}|,\label{portmes1}\\
    \max \bw^{(k)} &= \frac{1}{T} \sum_{i=1}^T \left( \max_j w_{i,j}^{(k)}\right),\label{portmes2}\\
    \min \bw^{(k)} &= \frac{1}{T} \sum_{i=1}^T \left( \min_j w_{i,j}^{(k)}\right),\label{portmes3}
\\
\bw_i^{(k)} \mathbbm{1}(\bw_i^{(k)} < 0) &= \frac{1}{T} \sum_{i=1}^T\sum_{j=1}^p w_{i,j}^{(k)}\mathbbm{1}(w_{i,j}^{(k)} < 0),\label{portmes4} \\
    \mathbbm{1}(\bw_i^{(k)} < 0) &= \frac{1}{Tp} \sum_{i=1}^T\sum_{j=1}^p \mathbbm{1}(w_{i,j}^{(k)} < 0)\label{portmes5}.
\end{align}

The first measure shown in equation \eqref{portmes1} is equal to the average size of the portfolio positions. A large value of this measure would indicate that the portfolio takes large positions (both negative and positive). It is a common critique to ordinary mean-variance portfolios because large positions are risky themselves. 
The second measure, shown in equation \eqref{portmes2}, is equal to the average long position. It is similar to the previous measure but only considers long positions of the portfolio. As with the above, small positions are preferred to large positions. 
Equation \eqref{portmes3} shows the average short position. With this measure we try to showcase how big short positions are. This is especially important since large short positions have potentially infinite risk. There is no limit to how much you can loose. Because of the great importance of how big short positions actually are we also include two further measures, which can be seen in  \eqref{portmes4} and \eqref{portmes5}. The former can be interpreted as the average size of the negative positions. The latter is the average proportion of short positions.

Note that for a portfolio with constant portfolio weights \eqref{portmes1}-\eqref{portmes5} are constant. For the equally weighted portfolio we have
\begin{equation}\label{eqn:ew_special_case}
    |\bw^{(k)}| = \max \bw^{(k)} = \min \bw^{(k)} = \frac{1}{p},\; \bw_i^{(k)} \mathbbm{1}(\bw_i^{(k)} < 0) = \mathbbm{1}(\bw_i^{(k)} < 0) = 0\,.
\end{equation}

In Table \ref{tab:charac_1} we display the results from the first experiment with window size equal to $250$ days. Due to \eqref{eqn:ew_special_case} we choose not to state the results for ew portfolio for \eqref{portmes1}-\eqref{portmes5}. The FM strategy has been removed since it has a very similar performance as the Traditional strategy for $c<1$ and it is not defined for $c>1$. For a moderately small portfolio $p=260$ the double shrinkage portfolio with equally correlated target provides the smallest out-of-sample variance which is denoted boldface. The double shrinkage with equally weighted target comes in second, indicated by the $^*$. There is a small difference between the different Double strategies using different targets, around 3 \%. However, they are all of similar performance except the double shrinkage with equally weighted target. It provides more stable weights with zero short positions and smallest turnover. The portfolio using LW2020 is ranked third in terms of volatility but is dominating other strategies in terms of mean and Sharpe ratio. The portfolio weights of nonlinear shrinkage, however, are not as stable as double shrinkage with ew target. 
The second best Sharpe ratio is provided by the Double with ew as a target. The difference with nonlinear shrinkage is around $0.002$. With the equally weighted target the Double is on par with LW2020. All others are far worse. The smallest Turnover is consistently provided by the Double shrinkage approach. The portfolio weights are very stable. Furthermore, the best performing portfolios, in terms of their characteristics, are  given by the Double shrinkage approach with ec or ew as a target portfolio. It is almost always a trade between them. These two take smaller positions on average, smaller short positions and less proportion of shorted weights. The natural ordering is that Traditional is worst and BPS being second to worst in terms of the portfolio characteristics. The LW2020 method comes in second, while the Double shrinkage portfolio being the best. 

When $p=400$ the equal correlation portfolio provides the best volatility estimate and equally weighted comes in second. LW2020 is the third best when it comes to volatility but is the best in terms of return. This brings it to being the best when it comes to the Sharpe ratio. Double shrinkage with ec comes in second. 
In this scenario we can also see that the Traditional portfolio is the most volatile but not as much as it could be expected. This is probably the Moore-Penrose inverse, which is still working reasonably well when $c$ is in the neighborhood of one. The Double shrinkage portfolio provides similar performance to the case $p=260$. It is always among the best performing. %However, LW2020 provides the smallest proportion of shorted weights in this scenario with Double (ew) coming close to second.

In the large dimensional case $p=431$, the most performing portfolio is the Double with equally correlated target when it comes to volatility. The estimation uncertainty has a large effect in these dimensions. However, second to best is the nonlinear shrinkage portfolio with slightly higher Sharpe ratio as double shrinkage. In terms of the mean, however, the best is the double shrinkage portfolio with equally weighted target. %However, it does so at a large cost (Turnover). We can see similar performance to smaller $p$ in the portfolio characteristics as well. 
The Double with ew and ec as targets are always among the best and show very stable behaviour with a tiny turnover.

Next, we perform the same experiment with $n=500$. In Table \ref{tab:charac_2} we can see the results. When $p = 260$, we are in an intermediate concentration ratio case. It is equal to $c=260/500=0.52$. In this scenario the equally weighted portfolio is the best one in terms of volatility. The Double shrinkage approach with ew as a target is 5\% worse. As per usual, the Traditional estimator is the worst on estimating the volatility although not far behind in this small dimensional scenario. The portfolio with best average return is given by the Double shrinkage with ec as a target. Since it is also among the best in estimating volatility it gives the largest Sharpe ratio. The Double shrinkage with ew as a target is second to best. In terms of the portfolio characteristics the Double shrinkage approach with ew as a target dominates everything. It is the best in all scenarios. Similar picture is for $p=400$, here the best is the Double with ec as a target. The Traditional is showing surprisingly the largest return. However, it does so at a large cost (large variance and turnover).
%Once more, the ew (sub) portfolio picks certain extremely good assets to invest in and might therefore be misleading to judge based of the mean of that investment strategy. This holds especially true, when $p=250$, where it once more picks \textit{amazing} stocks. Although fictional, its interesting to see that the Double shrinkage approach with the ew (sub) as a target manages to decrease the volatility of the target but still keep reasonable return.   

%%%%%%%%%%%%%%%%%%%%%%%%
% Fixed window size
%%%%%%%%%%%%%%%%%%%%%%%%
\begingroup\fontsize{9}{11}\selectfont

\begin{longtable}[h!]{llllllll}
\caption{\label{tab:charac_1}Characteristics of the different strategies using a moving window approach. The out-of-sample period equals to 2234 days. The window size is held fixed, equal to 250.}\\
\toprule
\multicolumn{1}{c}{ } & \multicolumn{2}{c}{BPS} & \multicolumn{2}{c}{Double} & \multicolumn{3}{c}{ } \\
\cmidrule(l{3pt}r{3pt}){2-3} \cmidrule(l{3pt}r{3pt}){4-5}
name & ec & ew & ec & ew & ew & LW2020 & Traditional\\
\midrule
\addlinespace[0.3em]
\multicolumn{8}{l}{\textbf{p=260}}\\
\hspace{1em}$\sigma^k$ & $0.03912$ & $0.039$ & $\mathbf{0.01104}$ & $\mathit{0.01129}$* & $0.01133$ & $0.01165$ & $0.04597$\\
\hspace{1em}$\bar \by_{\bw}^k$ & $-0.000276$ & $-0.00025$ & $0.000397$ & $0.000503$ & $\mathit{0.000505}$* & $\mathbf{0.000542}$ & $-0.000352$\\
\hspace{1em}SR$^k$ & $-0.007$ & $-0.006$ & $0.036$ & $\mathit{0.045}$* & $\mathit{0.045}$* & $\mathbf{0.047}$ & $-0.008$\\
\hspace{1em}Turnover & $33267.6$ & $33703.26$ & $\mathit{69.74}$* & $\mathbf{3.36}$ &  & $2301.52$ & $41589.29$\\
\hspace{1em}$|\bw^{(k)}|$ & $0.0951$ & $0.0961$ & $\mathit{0.0066}$* & $\mathbf{0.0038}$ &  & $0.0102$ & $0.1189$\\
\hspace{1em}$\max \bw^{(k)}$ & $0.4494$ & $0.4496$ & $\mathit{0.0393}$* & $\mathbf{0.0046}$ &  & $0.0412$ & $0.5569$\\
\hspace{1em}$\min \bw^{(k)}$ & $-0.4209$ & $-0.4292$ & $\mathit{-0.0058}$* & $\mathbf{0.0028}$ &  & $-0.03$ & $-0.5327$\\
\hspace{1em}$\bw_i^{(k)} \mathbbm{1}(\bw_i^{(k)} < 0)$ & $-0.0929$ & $-0.0945$ & $\mathit{-0.0036}$* & $\mathbf{-2e-04}$ &  & $-0.0099$ & $-0.1171$\\
\hspace{1em}$\mathbbm{1}(\bw_i^{(k)} < 0)$ & $0.491$ & $0.488$ & $0.385$ & $\mathbf{0}$ &  & $\mathit{0.32}$* & $0.491$\\
\addlinespace[0.3em]
\multicolumn{8}{l}{\textbf{p=400}}\\
\hspace{1em}$\sigma^k$ & $0.01398$ & $0.01392$ & $\mathbf{0.01106}$ & $\mathit{0.01111}$* & $0.01112$ & $0.01193$ & $0.01524$\\
\hspace{1em}$\bar \by_{\bw}^k$ & $0.000525$ & $0.000464$ & $\mathit{0.00064}$* & $0.000514$ & $0.000509$ & $\mathbf{0.000706}$ & $0.000458$\\
\hspace{1em}SR$^k$ & $0.038$ & $0.033$ & $\mathit{0.058}$* & $0.046$ & $0.046$ & $\mathbf{0.059}$ & $0.03$\\
\hspace{1em}Turnover & $3661.98$ & $3678.92$ & $\mathit{69.71}$* & $\mathbf{4.81}$ &  & $667.62$ & $4776.77$\\
\hspace{1em}$|\bw^{(k)}|$ & $0.0192$ & $0.019$ & $\mathit{0.0043}$* & $\mathbf{0.0025}$ &  & $0.0107$ & $0.0246$\\
\hspace{1em}$\max \bw^{(k)}$ & $0.0788$ & $0.0769$ & $\mathit{0.0256}$* & $\mathbf{0.0032}$ &  & $0.0481$ & $0.0991$\\
\hspace{1em}$\min \bw^{(k)}$ & $-0.0699$ & $-0.07$ & $\mathit{-0.0041}$* & $\mathbf{0.0013}$ &  & $-0.0396$ & $-0.0913$\\
\hspace{1em}$\bw_i^{(k)} \mathbbm{1}(\bw_i^{(k)} < 0)$ & $-0.018$ & $-0.018$ & $\mathit{-0.0024}$* & $\mathbf{-6e-04}$ &  & $-0.0097$ & $-0.0235$\\
\hspace{1em}$\mathbbm{1}(\bw_i^{(k)} < 0)$ & $0.465$ & $0.46$ & $\mathit{0.388}$* & $\mathbf{0.004}$ &  & $0.423$ & $0.471$\\
\addlinespace[0.3em]
\multicolumn{8}{l}{\textbf{p=431}}\\
\hspace{1em}$\sigma^k$ & $0.0121$ & $0.01217$ & $\mathbf{0.01088}$ & $0.01117$ & $0.01116$ & $\mathit{0.01105}$* & $0.01322$\\
\hspace{1em}$\bar \by_{\bw}^k$ & $0.000437$ & $0.00042$ & $0.000496$ & $\mathbf{0.000516}$ & $0.000512$ & $\mathit{0.000515}$* & $0.000375$\\
\hspace{1em}SR$^k$ & $0.036$ & $0.034$ & $\mathit{0.046}$* & $\mathit{0.046}$* & $\mathit{0.046}$* & $\mathbf{0.047}$ & $0.028$\\
\hspace{1em}Turnover & $3165.55$ & $3192.65$ & $\mathit{69.99}$* & $\mathbf{5.19}$ &  & $660.56$ & $4254.45$\\
\hspace{1em}$|\bw^{(k)}|$ & $0.0164$ & $0.0161$ & $\mathit{0.004}$* & $\mathbf{0.0023}$ &  & $0.0101$ & $0.0214$\\
\hspace{1em}$\max \bw^{(k)}$ & $0.0663$ & $0.0645$ & $\mathit{0.0254}$* & $\mathbf{0.0031}$ &  & $0.0459$ & $0.0851$\\
\hspace{1em}$\min \bw^{(k)}$ & $-0.0589$ & $-0.0591$ & $\mathit{-0.0038}$* & $\mathbf{0.0011}$ &  & $-0.0381$ & $-0.0795$\\
\hspace{1em}$\bw_i^{(k)} \mathbbm{1}(\bw_i^{(k)} < 0)$ & $-0.0153$ & $-0.0152$ & $\mathit{-0.0022}$* & $\mathbf{-6e-04}$ &  & $-0.0093$ & $-0.0204$\\
\hspace{1em}$\mathbbm{1}(\bw_i^{(k)} < 0)$ & $0.46$ & $0.454$ & $\mathit{0.39}$* & $\mathbf{0.004}$ &  & $0.423$ & $0.468$\\
\bottomrule
\multicolumn{8}{l}{\rule{0pt}{1em}\textit{*} Second to best}\\
\end{longtable}
\endgroup{}

In the large dimensional case, where $p=431$, we can note that Double with ec as a target is now the best in terms of volatility whereas the ew portfolio is second to best. However, the difference is small. The equally weighted is good in general, as seen in previous examples but another explanation is that the window is to large for our data-driven portfolios to cope with changes. The BPS portfolio provides the largest mean with equal correlation target. The best Sharpe ratio is provided by the LW2020 with Double just slightly behind. %The second place is divided by LW2020 and the Double with ew (sub) as a target. 
Traditional still provides the largest return but in terms of variance, turnover and all other measures it is the most unstable and risky portfolio.
\begingroup\fontsize{9}{11}\selectfont

\begin{longtable}[h!]{llllllll}
\caption{\label{tab:charac_2}Characteristics of the different strategies using a moving window approach. The out-of-sample period equals to 2234 days. The window size is held fixed, equal to 500.}\\
\toprule
\multicolumn{1}{c}{ } & \multicolumn{2}{c}{BPS} & \multicolumn{2}{c}{Double} & \multicolumn{3}{c}{ } \\
\cmidrule(l{3pt}r{3pt}){2-3} \cmidrule(l{3pt}r{3pt}){4-5}
name & ec & ew & ec & ew & ew & LW2020 & Traditional\\
\midrule
\addlinespace[0.3em]
\multicolumn{8}{l}{\textbf{p=260}}\\
\hspace{1em}$\sigma^k$ & $0.01593$ & $0.01596$ & $0.01131$ & $\mathit{0.01123}$* & $\mathbf{0.01119}$ & $0.01215$ & $0.01821$\\
\hspace{1em}$\bar \by_{\bw}^k$ & $0.000439$ & $0.000372$ & $\mathbf{0.000621}$ & $0.000556$ & $0.000524$ & $\mathit{0.000562}$* & $0.00033$\\
\hspace{1em}SR$^k$ & $0.028$ & $0.023$ & $\mathbf{0.055}$ & $\mathit{0.05}$* & $0.047$ & $0.046$ & $0.018$\\
\hspace{1em}Turnover & $1016.3$ & $1039.9$ & $\mathit{224.46}$* & $\mathbf{221.3}$ &  & $442.06$ & $1385.55$\\
\hspace{1em}$|\bw^{(k)}|$ & $0.0259$ & $0.026$ & $\mathit{0.0117}$* & $\mathbf{0.0102}$ &  & $0.0172$ & $0.0344$\\
\hspace{1em}$\max \bw^{(k)}$ & $0.1798$ & $0.1742$ & $\mathit{0.0548}$* & $\mathbf{0.0429}$ &  & $0.0766$ & $0.2282$\\
\hspace{1em}$\min \bw^{(k)}$ & $-0.1213$ & $-0.1292$ & $\mathit{-0.0406}$* & $\mathbf{-0.0391}$ &  & $-0.0679$ & $-0.1738$\\
\hspace{1em}$\bw_i^{(k)} \mathbbm{1}(\bw_i^{(k)} < 0)$ & $-0.023$ & $-0.0243$ & $\mathit{-0.0093}$* & $\mathbf{-0.0086}$ &  & $-0.0154$ & $-0.0325$\\
\hspace{1em}$\mathbbm{1}(\bw_i^{(k)} < 0)$ & $0.479$ & $0.454$ & $\mathit{0.419}$* & $\mathbf{0.369}$ &  & $0.436$ & $0.47$\\
\addlinespace[0.3em]
\multicolumn{8}{l}{\textbf{p=400}}\\
\hspace{1em}$\sigma^k$ & $0.01547$ & $0.01517$ & $\mathbf{0.01117}$ & $0.0113$ & $\mathit{0.01119}$* & $0.01176$ & $0.02124$\\
\hspace{1em}$\bar \by_{\bw}^k$ & $\mathit{0.000779}$* & $0.000697$ & $0.000683$ & $0.000627$ & $0.000507$ & $0.000661$ & $\mathbf{0.000879}$\\
\hspace{1em}SR$^k$ & $0.05$ & $0.046$ & $\mathbf{0.061}$ & $0.055$ & $0.045$ & $\mathit{0.056}$* & $0.041$\\
\hspace{1em}Turnover & $2506.41$ & $2505.07$ & $\mathbf{283.58}$ & $\mathit{325.38}$* &  & $639.2$ & $4605.91$\\
\hspace{1em}$|\bw^{(k)}|$ & $0.024$ & $0.0235$ & $\mathbf{0.0082}$ & $\mathit{0.0084}$* &  & $0.0127$ & $0.0432$\\
\hspace{1em}$\max \bw^{(k)}$ & $0.146$ & $0.1387$ & $\mathit{0.0409}$* & $\mathbf{0.0383}$ &  & $0.0571$ & $0.2526$\\
\hspace{1em}$\min \bw^{(k)}$ & $-0.1129$ & $-0.118$ & $\mathbf{-0.0273}$ & $\mathit{-0.0317}$* &  & $-0.049$ & $-0.2176$\\
\hspace{1em}$\bw_i^{(k)} \mathbbm{1}(\bw_i^{(k)} < 0)$ & $-0.0219$ & $-0.0225$ & $\mathbf{-0.0069}$ & $\mathit{-0.0073}$* &  & $-0.0115$ & $-0.0418$\\
\hspace{1em}$\mathbbm{1}(\bw_i^{(k)} < 0)$ & $0.49$ & $0.465$ & $\mathit{0.415}$* & $\mathbf{0.403}$ &  & $0.443$ & $0.486$\\
\addlinespace[0.3em]
\multicolumn{8}{l}{\textbf{p=431}}\\
\hspace{1em}$\sigma^k$ & $0.01649$ & $0.01626$ & $\mathbf{0.01049}$ & $\mathit{0.01062}$* & $0.01116$ & $0.01099$ & $0.02591$\\
\hspace{1em}$\bar \by_{\bw}^k$ & $\mathit{0.000677}$* & $0.000627$ & $0.000534$ & $0.000537$ & $0.000512$ & $0.000573$ & $\mathbf{0.000957}$\\
\hspace{1em}SR$^k$ & $0.041$ & $0.039$ & $\mathit{0.051}$* & $\mathit{0.051}$* & $0.046$ & $\mathbf{0.052}$ & $0.037$\\
\hspace{1em}Turnover & $3202.26$ & $3217.62$ & $\mathbf{292.92}$ & $\mathit{327.96}$* &  & $764.9$ & $7195.79$\\
\hspace{1em}$|\bw^{(k)}|$ & $0.0228$ & $0.0222$ & $\mathbf{0.0077}$ & $\mathit{0.0078}$* &  & $0.0113$ & $0.0499$\\
\hspace{1em}$\max \bw^{(k)}$ & $0.1402$ & $0.1306$ & $\mathit{0.0393}$* & $\mathbf{0.036}$ &  & $0.0515$ & $0.2905$\\
\hspace{1em}$\min \bw^{(k)}$ & $-0.1076$ & $-0.1131$ & $\mathbf{-0.0263}$ & $\mathit{-0.0305}$* &  & $-0.0443$ & $-0.2588$\\
\hspace{1em}$\bw_i^{(k)} \mathbbm{1}(\bw_i^{(k)} < 0)$ & $-0.0206$ & $-0.0213$ & $\mathbf{-0.0065}$ & $\mathit{-0.0068}$* &  & $-0.0102$ & $-0.0483$\\
\hspace{1em}$\mathbbm{1}(\bw_i^{(k)} < 0)$ & $0.498$ & $0.466$ & $\mathit{0.416}$* & $\mathbf{0.401}$ &  & $0.439$ & $0.492$\\
\bottomrule
\multicolumn{8}{l}{\rule{0pt}{1em}\textit{*} Second to best}\\
\end{longtable}
\endgroup{}

%%%%%%%%%%%%%%%%%%%%%%%%
% Targeting an index
%%%%%%%%%%%%%%%%%%%%%%%%
\subsubsection{Tracking the S\&P500 index}
The last setting in Table \ref{tab:charac_1} and \ref{tab:charac_2}, where $p=431$ includes almost all of the stocks in the S\&P500 index. However, these portfolios are based on the assets in the S\&P 500 index today. Since we sample different assets from the index we may have a survival bias in the experiment above. That may have a positive effect on the return. A possibly more honest method is to choose the assets that were present in the index back in 2013. However, the index evolves and includes more assets today than it did before. This puts the application in another setting entirely. The portfolio size will need to change over time. We will therefore switch setting and track the S\&P500 index, trying to target the stocks that are available in the index. We choose the assets that are part of the index at the time and based on their availability (data quality). This excludes the survivorship bias in our result. Since we target the index we choose to reallocate whenever we register a change in the market capitalization. With $p$ changing over time, we choose two different window sizes. These are equal to $n=240, 720$. The market capitalization together with daily log returns we have at our disposal covers 406 assets in 2013 and 447 assets in late 2021. Our aim with this experiment is to see if we can improve the index volatility while still taking reasonable positions. We will therefore exclude all portfolios that do not use a target portfolio. We will only consider the BPS and Double estimators with the index-based target, and the index itself. The FM approach is excluded since $c>1$ for one of the scenarios.

\begin{wraptable}{r}{0.5\textwidth}
\caption{\label{tab:tab:charac_index} Out-of-sample results based on the moving window approach for the BPS and Double estimators with the index-based target, and for the index portfolio. The out-of-sample period equals to 1695 days. The portfolio size starts at 406 and is at most 447.}
\centering
\fontsize{9}{11}\selectfont
\begin{tabular}[t]{llll}
\toprule
name & BPS & Double & index \\
\midrule
\addlinespace[0.3em]
\multicolumn{4}{l}{\textbf{n=240}}\\
\hspace{1em}$\sigma^k$ & $\mathbf{0.00834}$ & $\mathit{0.00913}$* & $0.01066$\\
\hspace{1em}$\bar \by_{\bw}^k$ & $\mathbf{0.000622}$ & $\mathit{0.0006}$* & $0.000456$\\
\hspace{1em}SR$^k$ & $\mathbf{0.075}$ & $\mathit{0.066}$* & $0.043$\\
\hspace{1em}Turnover & $2555.37$ & $\mathit{185.93}$* & $\mathbf{7.04}$\\
\hspace{1em}$|\bw^{(k)}|$ & $0.0158$ & $\mathit{0.0041}$* & $\mathbf{0.0024}$\\
\hspace{1em}$\max \bw^{(k)}$ & $0.0642$ & $\mathbf{0.0387}$ & $\mathit{0.0506}$*\\
\hspace{1em}$\min \bw^{(k)}$ & $-0.0572$ & $\mathit{-0.0086}$* & $\mathbf{0}$\\
\hspace{1em}$\bw_i^{(k)} \mathbbm{1}(\bw_i^{(k)} < 0)$ & $-0.0148$ & $\mathit{-0.0056}*$ & $\mathbf{0}$ \\
\hspace{1em}$\mathbbm{1}(\bw_i^{(k)} < 0)$ & $0.453$ & $\mathit{0.158}$* & $\mathbf{0}$\\
\addlinespace[0.3em]
\multicolumn{4}{l}{\textbf{n=720}}\\
\hspace{1em}$\sigma^k$ & $\mathbf{0.00839}$ & $\mathit{0.00866}$* & $0.01066$\\
\hspace{1em}$\bar \by_{\bw}^k$ & $0.000231$ & $\mathit{0.000403}$* & $\mathbf{0.000456}$\\
\hspace{1em}SR$^k$ & $0.028$ & $\mathbf{0.047}$ & $\mathit{0.043}$*\\
\hspace{1em}Turnover & $1094.01$ & $\mathit{143.85}$* & $\mathbf{7.04}$\\
\hspace{1em}$|\bw^{(k)}|$ & $0.0204$ & $\mathit{0.0048}$* & $\mathbf{0.0024}$\\
\hspace{1em}$\max \bw^{(k)}$ & $0.1319$ & $\mathbf{0.039}$ & $\mathit{0.0506}$*\\
\hspace{1em}$\min \bw^{(k)}$ & $-0.106$ & $\mathit{-0.0131}$* & $\mathbf{0}$\\
\hspace{1em}$\bw_i^{(k)} \mathbbm{1}(\bw_i^{(k)} < 0)$ & $-0.0193$ & $\mathit{-0.0072}*$ & $\mathbf{0}$\\
\hspace{1em}$\mathbbm{1}(\bw_i^{(k)} < 0)$ & $0.465$ & $\mathit{0.165}$* & $\mathbf{0}$\\
\bottomrule
\end{tabular}
\end{wraptable}
In Table \ref{tab:tab:charac_index} we can see the results. When $n=240$, the portfolio will smallest volatility is BPS. It also provides the largest return. The double shrinkage approach is is the second to best while the index is the worst in both return and volatility. The same ordering holds for the Sharpe Ratio. Although the BPS provides the highest return, it does so at an extreme cost in comparison to the index and Double. If the investor is sensitive to large transitions then the Double method is a great middle-ground. It provides a very large decrease in the turnover as well as an decrease in the volatility, increase in mean and therefore an increase in the SR. The later weight characteristics displays the same behaviour as previously documented. The index has the smallest turnover. This can be explained by the fact that market cap is fairly stable and changes slowly over time. Thereafter its the Double being second to best with the exception of the largest long position. Double decreases volatility and seem to do so by taking smaller long positions, on average. It also introduces short positions, though relatively small ones. In comparison to the previous section, the Double with index as a target takes much smaller proportions of short positions.

When $n=720$, the same ordering holds for volatility as when $n=240$. The same does not hold for the mean. Now the index provides the best return with Double being 13\% worse. However, being the second to best in both mean and volatility makes the Double have highest SR. Increasing the the window size improves the stability in the portfolio weights and therefore a decrease in the turnover.

%\begin{figure}
%    \centering
%    \includegraphics[width=\textwidth]{}
%    \caption{Changes in intensities over time. Window size equal to 250.}
%    \label{fig:intensities1}
%\end{figure}

%\begin{figure}
%    \centering
%    \includegraphics[width=\textwidth]{}
%    \caption{Changes in intensities over time. Window size equal to 500.}
%    \label{fig:intensities2}
%\end{figure}

%\begin{figure}
%    \centering
%    \includegraphics[width=\textwidth]{}
%    \caption{Changes in intensities over time. Window size equal to 750.}
%    \label{fig:intensities2}
%\end{figure}

%\subsubsection{Development of shrinkage instensities $\bb$}\label{sec:implications_target}

\section{Summary}\label{sec:sum}
In this paper we provide a novel method for investing in the GMV portfolio and a target portfolio. It uses a double shrinkage approach where the sample covariance matrix is shrunk with Thikonov regularization together with linear shrinkage of the GMV portfolio weights to a target portfolio. We construct a bona fide loss function which estimates the true loss function consistently. From that we estimate the two shrinkage coefficients given in the framework. The method is shown to be a great improvement over BPS and performs the same as LW2020 in an extensive simulation study. Furthermore, in the empirical application the method is shown to be a dominating investment strategy in majority of cases justified by different empirical performance measures.
We also show that it can act as a good portfolio to track an index. In this scenario it decreases the volatility but can still provide large Sharpe ratios. 
Our method is opinionated. That is, it demands the investors opinion on what a target portfolio is. That in turn implies that it will work best when the target portfolio is informative, in the sense of the investors aim. However, as our investigation shows, the investor can also use non-informative target portfolios and still achieve great results.

\section{Appendix}\label{sec:app}
%%%%%%%%%%%%%%%%%%%%%%%%%%%%%%%%%%%%%%%%%%%%%%%%%%%%%%%%%%%%%%%%%%%%%%%%%%%%%%%%%%%%%%%%%%%%%
For any integer $n>2$, we define
\begin{equation}\label{eqn:V_n}
    \bV_n=\frac{1}{n}\bX_n\left(\bI_{n}-\frac{1}{n}\ones_{n}\ones_{n}^\top\right)\bX_n^\top
\quad \text{and} \quad
\widetilde{\bV}_n=\frac{1}{n}\bX_n\bX_n^\top,
\end{equation}
where $\bX_n$ is given in \eqref{eqn:obs}. Hence,
\begin{equation}\label{eqn:one_rank_update}
    \bS_n %= \frac{1}{n-1}\left(\bY_n - \bar \by_n\right)\left(\bY_n - \bar \by_n\right)^\top 
        = \bSigma^{1/2}\bV_n\bSigma^{1/2}
        = \bSigma^{1/2} \widetilde{\bV}_n \bSigma^{1/2}
          -\bSigma^{1/2}\bar\bx_n\bar\bx_n^\top\bSigma^{1/2}
\end{equation}
with $\bar\bx_{n}=\frac{1}{n}\bX_n\ones_{n}$.
%%%%%%%%%%%%
% Some previous results which we got with Nestor when working on a revision of another paper:\\
%%%%%%%%%%%%

\vspace{1cm}
First, we present an important lemma which is a special case of Theorem 1 in \cite{rubmes2011}. %{
Moreover, the following result (see, e.g., Theorem 1 on page 176 in \cite{ahlfors1953}) will be used in a sequel together with Lemma \ref{lem1_lam} in the proofs of the technical lemmas.

  \begin{thm}[Weierstrass]\label{weierstrass}
    Suppose that $f_n(z)$ is analytic in the region $\Omega_n$, and that the sequence $\{f_n(z)\}$ converges to a limit function $f(z)$ in a region $\Omega$, uniformly on every compact subset of $\Omega$. Then $f(z)$ is analytic in $\Omega$. Moreover, $f'(z)$ converges uniformly to $f'(z)$ on every compact subset of $\Omega$.
  \end{thm}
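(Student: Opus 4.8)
The plan is to establish the two assertions in turn, resting on the two pillars of Cauchy's theory: Morera's theorem for the analyticity of the limit, and the Cauchy integral formula for the first derivative for the convergence of $f_n'$. Both analyticity and uniform convergence on compacts are local properties, so I would work locally. Fix $z_0 \in \Omega$ and a closed disk $\bar{D}=\{z:|z-z_0|\le r\}\subset\Omega$. Since $\bar{D}$ is compact and contained in $\Omega$, the hypothesis on the regions $\Omega_n$ ensures that $\bar{D}\subset\Omega_n$ for all sufficiently large $n$, so each such $f_n$ is analytic on a neighborhood of $\bar{D}$, and $f_n\to f$ uniformly on $\bar{D}$.

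First I would show that $f$ is analytic on $\Omega$. Uniform convergence of the continuous functions $f_n$ on $\bar{D}$ makes $f$ continuous there, hence on all of $\Omega$. To pass from continuity to analyticity I would invoke Morera's theorem: it suffices to verify that $\oint_{\partial T} f(z)\,dz=0$ for every triangle $T\subset D$. For each large $n$, Cauchy's theorem gives $\oint_{\partial T} f_n(z)\,dz=0$ because $f_n$ is analytic on $D$. As $\partial T$ is compact, uniform convergence lets me interchange the limit and the contour integral,
\[
\oint_{\partial T} f(z)\,dz=\lim_{n\to\infty}\oint_{\partial T} f_n(z)\,dz=0 .
\]
By Morera's theorem $f$ is analytic on $D$, and since $z_0$ was arbitrary, $f$ is analytic on $\Omega$.

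Next I would prove uniform convergence of the derivatives on compacts, now exploiting that $f$ itself is analytic. Writing $C=\partial D$ and applying the Cauchy integral formula for the first derivative to both $f_n$ and $f$, for every $z$ with $|z-z_0|\le r/2$ and all large $n$,
\[
f_n'(z)-f'(z)=\frac{1}{2\pi i}\oint_{C}\frac{f_n(\zeta)-f(\zeta)}{(\zeta-z)^2}\,d\zeta .
\]
For such $z$ one has $|\zeta-z|\ge r/2$ on $C$, so the standard ML-estimate yields
\[
\bigl|f_n'(z)-f'(z)\bigr|\le\frac{1}{2\pi}\cdot 2\pi r\cdot\frac{\sup_{\zeta\in C}|f_n(\zeta)-f(\zeta)|}{(r/2)^2}=\frac{4}{r}\sup_{\zeta\in C}|f_n(\zeta)-f(\zeta)| .
\]
The right-hand side is independent of $z$ and tends to $0$ by uniform convergence on the compact curve $C$, so $f_n'\to f'$ uniformly on the smaller disk $|z-z_0|\le r/2$. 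A standard covering argument, writing any compact $K\subset\Omega$ as a finite union of such half-radius disks, then upgrades this to uniform convergence of $f_n'$ to $f'$ on every compact subset of $\Omega$.

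The only genuinely delicate point I anticipate is the bookkeeping with the varying regions $\Omega_n$: one must argue that each compact subset of $\Omega$ is eventually contained in $\Omega_n$, so that Cauchy's theorem and the integral formula legitimately apply to $f_n$ on the relevant contours for all large $n$. Once this is secured, every remaining step reduces to interchanging a limit with a contour integral over a compact curve, which is justified solely by the hypothesis of uniform convergence on compacts.
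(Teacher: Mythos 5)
Your proof is correct and coincides with the classical argument in \cite{ahlfors1953}, which is all the paper itself offers here (it states the theorem with a citation rather than proving it): Morera's theorem via interchange of limit and contour integral over triangles for the analyticity of $f$, then the Cauchy integral formula for the first derivative with the ML-estimate and a half-radius covering argument for uniform convergence of the derivatives, together with the correct reading of the hypothesis that each compact subset of $\Omega$ is eventually contained in $\Omega_n$. Note only that the theorem as printed contains a typo --- the final claim should read that $f_n'(z)$ converges uniformly to $f'(z)$ --- which your argument implicitly corrects.
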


%Because the convergence in Lemma \ref{lem1_lam} is uniform over $z$ on every compact subset of $\mathbbm{C}\setminus\mathbbm{R}^+$, the Weierstrass theorem allows us to interchange any derivative with respect to $z$ and the limit $n\to\infty$. We will consider compact subsets, which are the small neighbourhoods of zero with $\Re(z)=0$ (without loss of generality) because all of the times we will let $z\to0$ in order to get specific limiting expressions of interest. For example, one may take $\Omega$ as a unit disk $|z|<1$ and $\Omega_n$ as a disk $|z|<\varepsilon_n$ for some $\varepsilon_n\to0$ as $n\to\infty$. The analyticity of the function $\text{tr}\left(\mathbf{\Theta}_p\left(\dfrac{1}{n}\bX_n\bX_n^{\prime}+\eta \bSigma^{-1}-z\bI_p\right)^{-1}\right)$ follows immediately from the properties of the Stieltjes transform.}

We will need two interchange the limits and derivatives many times that is why Theorem \ref{weierstrass} plays a vital role here.
More on the application of Weierstrass theorem can be found in the appendix of \cite{bop2021}.

\begin{lem}\label{lem1_lam}
Let a nonrandom $p\times p$-dimensional matrix $\mathbf{\Theta}_p$ possess a uniformly bounded trace norm. Then it holds that
\begin{enumerate}[(i)]
\item
\begin{equation}\label{RM2011_id_lam1}
\left|
        \text{tr}\left(\mathbf{\Theta}_p\left(\dfrac{1}{n}\bX_n\bX_n^{\prime}+\eta \bSigma^{-1}-z\bI_p\right)^{-1}\right)
        -\text{tr}\left(\mathbf{\Theta}_p (\eta\bSigma^{-1}+(v(\eta,z)-z)\bI)^{-1}\right)
    \right|\stackrel{a.s.}{\longrightarrow}0
\end{equation}
for $p/n\longrightarrow c \in (0, +\infty)$ as $n\rightarrow\infty$ where $v(z)$ solves the following equality
\begin{equation}\label{v1z_lam}
    v(\eta, z)=\frac{1}{1+c\frac{1}{p}\text{tr}\left((\eta\bSigma^{-1}+(v(\eta, z)-z)\bI)^{-1}\right)}.
\end{equation}
%%%%%%%%%%%%%%%%%%%%%%%%%\item
\item
\begin{eqnarray}\label{RM2011_id_lam_2a}
&&
\Bigg|
        \text{tr}\left(\mathbf{\Theta}_p\left(\dfrac{1}{n}\bX_n\bX_n^{\prime}+\eta\bSigma^{-1}-z\bI_p\right)^{-1}\bSigma^{-1}
        \left(\dfrac{1}{n}\bX_n\bX_n^{\prime}+\eta\bSigma^{-1}-z\bI_p\right)^{-1}\right) \nonumber\\
&-&\text{tr}\left(\mathbf{\Theta}_p (\eta\bSigma^{-1}+(v(\eta, z)-z)\bI)^{-1} \bSigma^{-1} (\eta\bSigma^{-1}+(v(\eta, z)-z)\bI)^{-1}\right) \nonumber\\
&-&v^\prime_1(\eta, z)\text{tr}\left(\mathbf{\Theta}_p (\eta\bSigma^{-1}+(v(\eta, z)-z)\bI)^{-2}\right)
    \Bigg|\stackrel{a.s.}{\longrightarrow}0
\end{eqnarray}
for $p/n\longrightarrow c \in (0, +\infty)$ as $n\rightarrow\infty$ with
\begin{equation}\label{v2az_lam}
    v^\prime_1(\eta, z)=
        \frac{-
            \frac{1}{p}\text{tr}\left((\eta\bSigma^{-1}+(v(\eta, z)-z)\bI)^{-1} \bSigma^{-1} (\eta\bSigma^{-1}+(v(\eta, z)-z)\bI)^{-1}\right)
        }{
            \frac{1}{p}\text{tr}\left((\eta\bSigma^{-1}+(v(\eta, z)-z)\bI)^{-2}\right)-c^{-1}v(\eta, z)^{-2}
        }.
\end{equation}
%%%%%%%%%%%%%%%%%%%%%%%%%\item
\item
\begin{eqnarray}\label{RM2011_id_lam2}
&&
\Bigg|
        \text{tr}\left(\mathbf{\Theta}_p\left(\dfrac{1}{n}\bX_n\bX_n^{\prime}+\eta\bSigma^{-1}-z\bI_p\right)^{-2}\right)  \nonumber \\
&-&(1-v^\prime_2(\eta, z))\text{tr}\left(\mathbf{\Theta}_p (\eta\bSigma^{-1}+(v(\eta, z)-z)\bI)^{-2}\right)
    \Bigg|\stackrel{a.s.}{\longrightarrow}0
\end{eqnarray}
for $p/n\longrightarrow c \in (0, +\infty)$ as $n\rightarrow\infty$ with
\begin{equation}\label{v2z_lam}
    v^\prime_2(\eta, z)=
        \frac{
            \frac{1}{p}\text{tr}\left((\eta\bSigma^{-1}+(v(\eta, z)-z)\bI)^{-2}\right)
        }{
            \frac{1}{p}\text{tr}\left((\eta\bSigma^{-1}+(v(\eta, z)-z)\bI)^{-2}\right)-c^{-1}v(\eta, z)^{-2}
        }.
\end{equation}
\end{enumerate}
\end{lem}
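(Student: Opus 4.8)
The plan is to obtain part (i) as a direct specialization of the deterministic-equivalent theorem of \cite{rubmes2011}, and then to deduce parts (ii) and (iii) from (i) by differentiating in the parameter $\eta$ and in the spectral variable $z$, respectively, using the Weierstrass Theorem \ref{weierstrass} to move the almost-sure limit through the derivative. For part (i), I would match the resolvent $\left(\frac{1}{n}\bX_n\bX_n^\top+\eta\bSigma^{-1}-z\bI\right)^{-1}$ to Theorem 1 of \cite{rubmes2011} in the special case of a trivial (identity) variance profile together with the deterministic additive term $\eta\bSigma^{-1}$. In that case the deterministic equivalent replaces $\frac{1}{n}\bX_n\bX_n^\top$ by the scalar matrix $v(\eta,z)\bI$, where $v(\eta,z)$ solves the fixed-point equation \eqref{v1z_lam}, and the uniformly bounded trace norm of $\mathbf{\Theta}_p$ is precisely the hypothesis controlling the linear functional. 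This yields \eqref{RM2011_id_lam1} directly.

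For part (iii) I would set $\bM=\frac{1}{n}\bX_n\bX_n^\top+\eta\bSigma^{-1}$ and use $(\bM-z\bI)^{-2}=\frac{d}{dz}(\bM-z\bI)^{-1}$, so that the left-hand trace in \eqref{RM2011_id_lam2} is the $z$-derivative of the trace in \eqref{RM2011_id_lam1}. Both sides are analytic in $z$ on the resolvent set (for instance on the upper half-plane, where the resolvent norm is at most $1/\mathrm{Im}\,z$), so part (i) together with Theorem \ref{weierstrass} permits term-by-term differentiation of the limit. Differentiating $(\eta\bSigma^{-1}+(v(\eta,z)-z)\bI)^{-1}$ in $z$ produces $(1-v_2^\prime(\eta,z))(\eta\bSigma^{-1}+(v(\eta,z)-z)\bI)^{-2}$ with $v_2^\prime=\partial v/\partial z$, and implicit differentiation of \eqref{v1z_lam} reproduces the closed form \eqref{v2z_lam}.

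Part (ii) is handled analogously, but with the derivative taken in $\eta$: since $\partial_\eta\bM=\bSigma^{-1}$ one has $(\bM-z\bI)^{-1}\bSigma^{-1}(\bM-z\bI)^{-1}=-\partial_\eta(\bM-z\bI)^{-1}$, so the sandwiched trace in \eqref{RM2011_id_lam_2a} equals $-\partial_\eta$ of the trace in \eqref{RM2011_id_lam1}. After extending $\eta$ to a complex neighbourhood of the positive axis (where $\eta\bSigma^{-1}$ keeps the matrix invertible for the relevant $z$) and invoking Theorem \ref{weierstrass} once more, differentiating the deterministic equivalent in $\eta$ yields both the sandwiched term $\text{tr}(\mathbf{\Theta}_p(\eta\bSigma^{-1}+(v-z)\bI)^{-1}\bSigma^{-1}(\eta\bSigma^{-1}+(v-z)\bI)^{-1})$ and the correction $v_1^\prime(\eta,z)\,\text{tr}(\mathbf{\Theta}_p(\eta\bSigma^{-1}+(v-z)\bI)^{-2})$, with $v_1^\prime=\partial v/\partial\eta$ obtained from implicit differentiation of \eqref{v1z_lam} and matching \eqref{v2az_lam}.

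The main obstacle is not the algebra but the justification of interchanging the almost-sure limit with the $\eta$- and $z$-derivatives. I must verify that the random trace functionals are analytic and locally uniformly bounded on the relevant compact subsets of the complex plane, so that the pointwise almost-sure convergence from (i) upgrades to locally uniform almost-sure convergence and Theorem \ref{weierstrass} applies. Concretely, one fixes a single probability-one event on which convergence holds simultaneously over a countable dense set of parameters and then extends by a normal-families (Montel) argument to the whole domain; this is the standard device, used also in \cite{bop2021}, and is where the care is required.
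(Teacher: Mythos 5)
Your proposal is correct and follows essentially the same route as the paper's own proof: part (i) by direct specialization of Theorem 1 in \cite{rubmes2011}, part (ii) by writing the sandwiched trace as $-\partial_\eta$ of the resolvent trace, part (iii) as $\partial_z$ of it, with Theorem \ref{weierstrass} justifying the interchange of the almost-sure limit and the derivatives, and with $v_1^\prime$, $v_2^\prime$ recovered by implicit differentiation of the fixed-point equation \eqref{v1z_lam}. Your additional care about upgrading pointwise to locally uniform convergence via a countable dense set and a normal-families argument is exactly the technical point the paper delegates to the appendix of \cite{bop2021}.
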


\begin{proof}[Proof of Lemma \ref{lem1_lam}:]
\begin{enumerate}[(i)]
    \item The application of Theorem 1 in \cite{rubmes2011} leads to (\ref{RM2011_id_lam1}) where $v(\eta, z)$ is a unique solution in $\mathbbm{C}^+$ of the following equation
\begin{equation}\label{eq1-Lemma6_1_lam}
    \dfrac{1}{v(\eta, z)}-1
    =\frac{c}{p}\text{tr}\left((\eta\bSigma^{-1}+(v(\eta, z)-z)\bI)^{-1}\right)\,.
\end{equation}

\item For the second result of the lemma we get that
\begin{eqnarray*}
&&\text{tr}\left(\mathbf{\Theta}_p\left(\dfrac{1}{n}\bX_n\bX_n^{\prime}+\eta\bSigma^{-1}-z\bI_p\right)^{-1}\bSigma^{-1}
        \left(\dfrac{1}{n}\bX_n\bX_n^{\prime}+\eta\bSigma^{-1}-z\bI_p\right)^{-1}\right)\\
&=&-\dfrac{\partial}{\partial \eta}\text{tr}\left(\mathbf{\Theta}_p \left(\dfrac{1}{n}\bX_n\bX_n^{\prime}+\eta\bSigma^{-1}-z\bI_p\right)^{-1}\right),
\end{eqnarray*}
which almost surely converges to
\begin{eqnarray*}
&&-\dfrac{\partial}{\partial \eta}\text{tr}\left(\mathbf{\Theta}_p (\eta\bSigma^{-1}+(v(\eta, z)-z)\bI)^{-1}\right)\\
&=& \text{tr}\left(\mathbf{\Theta}_p (\eta\bSigma^{-1}+(v(\eta, z)-z)\bI)^{-1}
(\bSigma^{-1}+v_1^\prime(\eta, z)\bI)
(\eta\bSigma^{-1}+(v(\eta, z)-z)\bI)^{-1}\right)
\end{eqnarray*}
following Theorem \ref{weierstrass}. The first-order partial derivative $v_1^\prime(\eta, z)$ is obtained from \eqref{eq1-Lemma6_1_lam} as
\begin{eqnarray*}
&&-\dfrac{v_1^\prime(\eta, z)}{v(\eta, z)^2}
=-\frac{c}{p}\text{tr}\left((\eta\bSigma^{-1}+(v(\eta, z)-z)\bI)^{-1}
    \bSigma^{-1}
    (\eta\bSigma^{-1}+(v(\eta, z)-z)\bI)^{-1}\right)\\
&-&v_1^\prime(\eta, z)\frac{c}{p}\text{tr}\left((\eta\bSigma^{-1}+(v(\eta, z)-z)\bI)^{-2}\right),
\end{eqnarray*}
from which \eqref{v2az_lam} is deduced.

\item For the third assertion of the lemma we note that
\[\text{tr}\left(\mathbf{\Theta}_p\left(\dfrac{1}{n}\bX_n\bX_n^{\prime}+\eta\bSigma^{-1}-z\bI_p\right)^{-2}\right)=\dfrac{\partial}{\partial z}\text{tr}\left(\mathbf{\Theta}_p \left(\dfrac{1}{n}\bX_n\bX_n^{\prime}+\eta\bSigma^{-1}-z\bI_p\right)^{-1}\right),
\]
which almost surely tends to
\[\dfrac{\partial}{\partial z}\text{tr}\left(\mathbf{\Theta}_p (\eta\bSigma^{-1}+(v(\eta, z)-z)\bI)^{-1}\right)=
(1-v_2 ^\prime(\eta, z)) \text{tr}\left(\mathbf{\Theta}_p (\eta\bSigma^{-1}+(v(\eta, z)-z)\bI)^{-2}\right)
\]
following Theorem \ref{weierstrass}. Moreover, $v_2^\prime(\eta, z)$ is computed from \eqref{eq1-Lemma6_1_lam} and it is obtained from the following equation
\begin{equation*}
    -\dfrac{v_2^\prime(\eta, z)}{v(\eta, z)^2}
    =(1-v_2^\prime(\eta, z))\frac{c}{p}\text{tr}\left((\eta\bSigma^{-1}+(v(\eta, z)-z)\bI)^{-2}\right).
\end{equation*}

This completes the proof of the lemma.
\end{enumerate}
\end{proof}

\begin{lem}\label{lem2_lam}
Let $\boldsymbol{\theta}$ and $\boldsymbol{\xi}$ be universal nonrandom vectors with bounded Euclidean norms. Then it holds that
\begin{eqnarray}
  &&  \left|
        \boldsymbol{\xi}^\prime\left(\dfrac{1}{n}\bX_n\bX_n^{\prime}+\eta \bSigma^{-1}\right)^{-1}\boldsymbol{\theta}
        - \boldsymbol{\xi}^\prime (\eta\bSigma^{-1}+v(\eta, 0)\bI)^{-1} \boldsymbol{\theta}
    \right| \stackrel{a.s.}{\longrightarrow} 0 \,,\label{1_lam}\\
&&\Bigg|
        \boldsymbol{\xi}^\prime\left(\dfrac{1}{n}\bX_n\bX_n^{\prime}+\eta \bSigma^{-1}\right)^{-1}\bSigma^{-1}
        \left(\dfrac{1}{n}\bX_n\bX_n^{\prime}+\eta \bSigma^{-1}\right)^{-1}\boldsymbol{\theta}\label{2a_lam}\\
&& -\boldsymbol{\xi}^\prime (\eta\bSigma^{-1}+v(\eta, 0)\bI)^{-1} 
\bSigma^{-1}(\eta\bSigma^{-1}+v(\eta, 0)\bI)^{-1}\boldsymbol{\theta}
-v_1^\prime(\eta, 0)\boldsymbol{\xi}^\prime (\eta\bSigma^{-1}+v(\eta, 0)\bI)^{-2}\boldsymbol{\theta}
    \Bigg| \stackrel{a.s.}{\longrightarrow} 0 \nonumber \\
&&\left|
        \boldsymbol{\xi}^\prime\left(\dfrac{1}{n}\bX_n\bX_n^{\prime}+\eta \bSigma^{-1}\right)^{-2}\boldsymbol{\theta}-
        (1-v_2^\prime(\eta, 0))\boldsymbol{\xi}^\prime (\eta\bSigma^{-1}+v(\eta, 0)\bI)^{-2}\boldsymbol{\theta}
    \right| \stackrel{a.s.}{\longrightarrow} 0 \label{2_lam}
\end{eqnarray}
for $p/n\longrightarrow c \in (0,\infty)$ as $n\rightarrow\infty$ where
$v(\eta, 0)$ is the solution of
\begin{equation}\label{v_eta_lam}
v(\eta, 0) = 1-c\left(1-\frac{\eta}{p}\tr\left(\left(v(\eta,0)\bSigma+\eta\bI \right)^{-1}\right) \right),
\end{equation}
and $v_1^\prime(\eta, 0)$ and $v_2^\prime(\eta, 0)$ are computed by
\begin{equation}\label{v1_eta_lam}
v_1^\prime(\eta, 0)  = v(\eta,0)\frac{c\frac{1}{p}\text{tr}\left((v(\eta,0)\bSigma+\eta\bI)^{-1}\right)-c\eta \frac{1}{p}\text{tr}\left((v(\eta,0)\bSigma+\eta\bI)^{-2}\right)}
{1-c+2c\eta\frac{1}{p}\text{tr}\left((v(\eta,0)\bSigma+\eta\bI)^{-1}\right)-
c\eta^2\frac{1}{p}\text{tr}\left((v(\eta,0)\bSigma+\eta\bI)^{-2}\right)}
\end{equation}
and
\begin{equation}\label{v2_eta_lam}
v_2^\prime(\eta, 0)  =1-\frac{1}{v(\eta,0)}+\eta\frac{v_1^\prime(\eta,0)}{v(\eta,0)^2}.
\end{equation}
\end{lem}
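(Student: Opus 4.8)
The plan is to derive all three assertions from Lemma~\ref{lem1_lam} by specialising the matrix $\mathbf{\Theta}_p$ to a rank-one matrix and then transporting the convergence from the open upper half-plane down to the boundary value $z=0$. For fixed $\boldsymbol{\xi},\boldsymbol{\theta}$ with bounded Euclidean norms, set $\mathbf{\Theta}_p=\boldsymbol{\theta}\boldsymbol{\xi}^\top$; its trace norm equals $\|\boldsymbol{\theta}\|_2\|\boldsymbol{\xi}\|_2$ and is thus uniformly bounded, so Lemma~\ref{lem1_lam} applies. Since $\tr(\boldsymbol{\theta}\boldsymbol{\xi}^\top\bA)=\boldsymbol{\xi}^\top\bA\boldsymbol{\theta}$ for every $\bA$, parts (i)--(iii) of Lemma~\ref{lem1_lam} become precisely the three bilinear-form convergences we want, except that the spectral argument is an arbitrary $z\in\mathbbm{C}^+$ rather than $z=0$.

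The main obstacle is the passage $z\to0$. Because $\eta>0$ and $\bSigma^{-1}$ is positive definite, the smallest eigenvalue of $\tfrac1n\bX_n\bX_n^\top+\eta\bSigma^{-1}$ is bounded below by $\eta/\lambda_{\max}(\bSigma)>0$ uniformly in $p$ and $n$; hence there is an open connected domain $\Omega$ containing $\mathbbm{C}^+$ together with a real neighbourhood of the origin on which $\tfrac1n\bX_n\bX_n^\top+\eta\bSigma^{-1}-z\bI$ stays invertible. Consequently $f_n(z)=\boldsymbol{\xi}^\top(\tfrac1n\bX_n\bX_n^\top+\eta\bSigma^{-1}-z\bI)^{-1}\boldsymbol{\theta}$ and its deterministic equivalent $g(z)=\boldsymbol{\xi}^\top(\eta\bSigma^{-1}+(v(\eta,z)-z)\bI)^{-1}\boldsymbol{\theta}$ are analytic on $\Omega$, and $\{f_n-g\}$ is locally uniformly bounded there. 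Fixing a countable dense subset of $\mathbbm{C}^+$ on which the convergence of part (i) holds off a single null set, Vitali's theorem for normal families (equivalently Montel's theorem together with the identity theorem) upgrades this pointwise almost-sure convergence to convergence uniform on compact subsets of $\Omega$; evaluating at $z=0$ gives \eqref{1_lam}.

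I would then obtain \eqref{2a_lam} and \eqref{2_lam} by differentiation, which is where Theorem~\ref{weierstrass} is essential: uniform-on-compacts convergence of $f_n-g$ allows the $n\to\infty$ limit to be interchanged with $\partial/\partial\eta$ and $\partial/\partial z$. Using $\boldsymbol{\xi}^\top(\cdots)^{-1}\bSigma^{-1}(\cdots)^{-1}\boldsymbol{\theta}=-\tfrac{\partial}{\partial\eta}f_n(z)$ and $\boldsymbol{\xi}^\top(\cdots)^{-2}\boldsymbol{\theta}=\tfrac{\partial}{\partial z}f_n(z)$, the corresponding limits are $-\partial_\eta g$ and $\partial_z g$; differentiating $g$ and identifying $v_1'=\partial v/\partial\eta$, $v_2'=\partial v/\partial z$ reproduces, now for bilinear forms, the computation in the proof of Lemma~\ref{lem1_lam}(ii)--(iii), and yields \eqref{2a_lam} and \eqref{2_lam} at $z=0$.

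It remains to reconcile the defining equations. With the factorisation $(\eta\bSigma^{-1}+v\bI)^{-1}=(v\bSigma+\eta\bI)^{-1}\bSigma$ and $\bSigma=\tfrac1v((v\bSigma+\eta\bI)-\eta\bI)$, the fixed-point equation \eqref{eq1-Lemma6_1_lam} at $z=0$ collapses to $1-v=c(1-\tfrac{\eta}{p}\tr((v\bSigma+\eta\bI)^{-1}))$, i.e.\ \eqref{v_eta_lam}, and the same substitutions turn \eqref{v2az_lam} into \eqref{v1_eta_lam}. For \eqref{v2_eta_lam} I would combine \eqref{v2z_lam} with this fixed-point relation: writing $a_k=\tfrac1p\tr((v\bSigma+\eta\bI)^{-k})$, $T_1=\tfrac1p\tr((\eta\bSigma^{-1}+v\bI)^{-1}\bSigma^{-1}(\eta\bSigma^{-1}+v\bI)^{-1})$ and $T_2=\tfrac1p\tr((\eta\bSigma^{-1}+v\bI)^{-2})$, the claim $v_2'=1-\tfrac1v+\eta v_1'/v^2$ is equivalent to the algebraic identity $v^2T_2+\eta vT_1+(v-1)/c=0$, which follows after expressing $T_1,T_2$ through the $a_k$ and substituting $v=1-c+c\eta a_1$. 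I expect the analytic-continuation step to $z=0$ to be the delicate part, the remaining manipulations being routine once the fixed-point relation is invoked at the right place.
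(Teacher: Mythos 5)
Your proposal is correct and follows essentially the same route as the paper: both apply Lemma \ref{lem1_lam} with the rank-one choice $\mathbf{\Theta}_p=\boldsymbol{\theta}\boldsymbol{\xi}^\top$, whose trace norm is bounded by the product of the Euclidean norms, and then use the fixed-point equation at $z=0$ to convert $v(\eta,0)$, $v_1^\prime(\eta,0)$, $v_2^\prime(\eta,0)$ into the stated forms. The additional material you supply --- the Vitali/Montel continuation down to $z=0$ and the re-derivation of the derivative limits via Theorem \ref{weierstrass} --- is detail the paper leaves implicit (the derivatives are already handled inside the proof of Lemma \ref{lem1_lam}), and your algebraic identity $v^2T_2+\eta vT_1+(v-1)/c=0$ for $v_2^\prime$ is an equivalent reformulation of the paper's direct chain of simplifications.
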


\begin{proof}[Proof of Lemma \ref{lem2_lam}:]
Since the trace norm of $\btheta\bxi^\prime$ is uniformly bounded, i.e.,
\[||\btheta\bxi^\prime||_{tr}\le \sqrt{\btheta^\prime\btheta}\sqrt{\bxi^\prime\bxi} <\infty,\]
the application of Lemma \ref{lem1_lam} leads to \eqref{1_lam}, \eqref{2a_lam}, and \eqref{2_lam} where $v(\eta,0)$ satisfies the following equality
  \begin{equation*}
    \frac{1}{v(\eta, 0)} -1= 
        \frac{c}{p}\tr \left( \left(\eta\bSigma^{-1} + v\left(\eta, 0\right)\bI \right)^{-1} \right) = \frac{c}{v(\eta, 0)}\left(1-\frac{\eta}{p}\tr\left(\left(v(\eta,0)\bSigma+\eta\bI \right)^{-1}\right) \right),
    \end{equation*}
which results in \eqref{v_eta_lam}.
    
The application of \eqref{v2az_lam} leads to
\begin{eqnarray*}
v^\prime_1(\eta, 0)&=& \frac{-
            \frac{1}{p}\text{tr}\left((\eta\bSigma^{-1}+v(\eta, 0)\bI)^{-1} \bSigma^{-1} (\eta\bSigma^{-1}+v(\eta, 0)\bI)^{-1}\right)} {\frac{1}{p}\text{tr}\left((\eta\bSigma^{-1}
            +v(\eta,0)\bI)^{-2}\right)-c^{-1}v(\eta,0)^{-2}}\\
        &=&v(\eta,0) \frac{c\frac{1}{p}\text{tr}\left(v(\eta,0)(\eta\bSigma^{-1}+v(\eta, 0)\bI)^{-1} \bSigma^{-1} (\eta\bSigma^{-1}+v(\eta, 0)\bI)^{-1}\right)}
        {1-c\frac{1}{p}\text{tr}\left(v(\eta, 0)^2(\eta\bSigma^{-1}+v(\eta, 0)\bI)^{-2}\right)}\\
&=&v(\eta,0)\frac{c\frac{1}{p}\text{tr}\left((v(\eta,0)\bSigma+\eta\bI)^{-1}\right)-c\eta \frac{1}{p}\text{tr}\left((v(\eta,0)\bSigma+\eta\bI)^{-2}\right)}
{1-c+2c\eta\frac{1}{p}\text{tr}\left((v(\eta,0)\bSigma+\eta\bI)^{-1}\right)-
c\eta^2\frac{1}{p}\text{tr}\left((v(\eta,0)\bSigma+\eta\bI)^{-2}\right)}.
\end{eqnarray*}

Finally, using \eqref{v2z_lam}, we get
\begin{eqnarray*}
    v^\prime_2(\eta, 0)&=&
        \frac{
            \frac{1}{p}\text{tr}\left((\eta\bSigma^{-1}+v(\eta,0)\bI)^{-2}\right)
        }{\frac{1}{p}\text{tr}\left((\eta\bSigma^{-1}+v(\eta, 0)\bI)^{-2}\right)-c^{-1}v(\eta, 0)^{-2}
        }\\
%&=& \frac{ c           \frac{1}{p}\text{tr}\left(v(\eta, 0)^2(\eta\bSigma^{-1}+v(\eta,0)\bI)^{-2}\right)
%        }{c\frac{1}{p}\text{tr}\left(v(\eta, 0)^2(\eta\bSigma^{-1}+v(\eta, 0)\bI)^{-2}\right)-1
%        }\\
&=&1-\frac{1}{1-c\frac{1}{p}\text{tr}\left(v(\eta, 0)^2(\eta\bSigma^{-1}+v(\eta, 0)\bI)^{-2}\right)}\\
&=&1-\frac{1}{1-c
+2c\eta\frac{1}{p}\text{tr}\left((v(\eta,0)\bSigma+\eta\bI)^{-1}\right)-
c\eta^2\frac{1}{p}\text{tr}\left((v(\eta,0)\bSigma+\eta\bI)^{-2}\right)}\\
&=&1-\frac{1}{v(\eta,0)+
\eta\left(c\frac{1}{p}\text{tr}\left((v(\eta,0)\bSigma+\eta\bI)^{-1}\right)-
c\eta\frac{1}{p}\text{tr}\left((v(\eta,0)\bSigma+\eta\bI)^{-2}\right)\right)}
\\
&=&1-\frac{1}{v(\eta,0)+\eta\frac{v(\eta,0) v_1^\prime(\eta,0)}{v(\eta,0)-\eta v_1^\prime(\eta,0)}}=1-\frac{1}{v(\eta,0)}+\eta\frac{v_1^\prime(\eta,0)}{v(\eta,0)^2}.
\end{eqnarray*}
\end{proof}

%%%%%%%%%%%%%%%%%%%%%%%%%%%%%%%%%%%%%%%%%%%%%%%%%%%%%%%%%%%%
\begin{lem}\label{lem:lem3}
Let $\boldsymbol{\theta}$ and $\boldsymbol{\xi}$ be universal nonrandom vectors such that $\bSigma^{-1/2}\boldsymbol{\theta}$ and $\bSigma^{-1/2}\boldsymbol{\xi}$ have bounded Euclidean norms. Then it holds that
\begin{align}
        \left|
            \bxi^\top \bS_{\lambda}^{-1}\btheta  - 
            \bxi^\top\bOmega_{\lambda}^{-1}\btheta
        \right| &\stackrel{a.s.}{\longrightarrow} 0, \label{lem3_eq1}\\
         \left|
            \bxi^\top  \bS_{\lambda}^{-2} \btheta 
            -  \bxi^\top  \bOmega_{\lambda}^{-2} \btheta 
            -v_1^\prime(\eta, 0) \bxi^\top  \bOmega_{\lambda}^{-1} \bSigma \bOmega_{\lambda}^{-1} \btheta 
        \right| &\stackrel{a.s.}{\longrightarrow} 0,\label{lem3_eq2}\\
        \left|
            \bxi^\top  \bS_{\lambda}^{-1} \bSigma \bS_{\lambda}^{-1} \btheta 
            - (1-v_2^\prime(\eta, 0))
            \bxi^\top  \bOmega_{\lambda}^{-1} \bSigma \bOmega_{\lambda}^{-1} \btheta 
        \right| &\stackrel{a.s.}{\longrightarrow} 0 \label{lem3_eq3}
    \end{align}
    for $p/n\longrightarrow c \in (0,\infty)$ as $n\rightarrow\infty$ with $\eta=1/\lambda-1$,
    \[\bOmega_{\lambda}= v\left(\eta,0\right)\lambda\bSigma + (1-\lambda)\bI,\]
and $v(\eta,0)$, $v_1^\prime(\eta, 0)$ and $v_2^\prime(\eta, 0)$ given in Lemma \ref{lem2_lam}.
\end{lem}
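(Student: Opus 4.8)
The plan is to recast the three bilinear forms so that Lemma~\ref{lem2_lam} applies directly, and then to simplify the resulting limits. Starting from the rank-one representation \eqref{eqn:one_rank_update}, namely $\bS_n=\bSigma^{1/2}\bV_n\bSigma^{1/2}$, and recalling $\eta=1/\lambda-1$, I would first factor
\[
\bS_\lambda=\lambda\bS_n+(1-\lambda)\bI=\lambda\,\bSigma^{1/2}\bigl(\bV_n+\eta\bSigma^{-1}\bigr)\bSigma^{1/2},
\]
so that, writing $\bB_n:=\bV_n+\eta\bSigma^{-1}$, one has $\bS_\lambda^{-1}=\lambda^{-1}\bSigma^{-1/2}\bB_n^{-1}\bSigma^{-1/2}$. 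Introducing $\bu:=\bSigma^{-1/2}\bxi$ and $\bv:=\bSigma^{-1/2}\btheta$, which have bounded Euclidean norm by hypothesis, the quantities of interest become
\begin{gather*}
\bxi^\top\bS_\lambda^{-1}\btheta=\lambda^{-1}\bu^\top\bB_n^{-1}\bv,\qquad
\bxi^\top\bS_\lambda^{-2}\btheta=\lambda^{-2}\bu^\top\bB_n^{-1}\bSigma^{-1}\bB_n^{-1}\bv,\\
\bxi^\top\bS_\lambda^{-1}\bSigma\bS_\lambda^{-1}\btheta=\lambda^{-2}\bu^\top\bB_n^{-2}\bv.
\end{gather*}
These are precisely the bilinear forms treated in \eqref{1_lam}, \eqref{2a_lam} and \eqref{2_lam} respectively (in the order listed), except that the centered matrix $\bV_n$ stands where the uncentered $\widetilde{\bV}_n$ appears in Lemma~\ref{lem2_lam}.

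The second step bridges this centering gap. Set $\widetilde{\bB}_n:=\widetilde{\bV}_n+\eta\bSigma^{-1}$; by \eqref{eqn:one_rank_update} the two matrices differ by a rank-one term, $\widetilde{\bB}_n=\bB_n+\bar\bx_n\bar\bx_n^\top$ with $\bar\bx_n=\frac1n\bX_n\ones_n$. Since $\eta>0$ for $\lambda\in(0,1)$, both matrices are positive definite with uniformly bounded inverses, and the Sherman--Morrison identity gives
\[
\bu^\top\bB_n^{-1}\bv-\bu^\top\widetilde{\bB}_n^{-1}\bv
=\frac{\bigl(\bu^\top\bB_n^{-1}\bar\bx_n\bigr)\bigl(\bar\bx_n^\top\bB_n^{-1}\bv\bigr)}{1+\bar\bx_n^\top\bB_n^{-1}\bar\bx_n}.
\]
Because the denominator is at least $1$, it suffices to show that the cross term $\bu^\top\bB_n^{-1}\bar\bx_n\to0$ almost surely. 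Here I would exploit that $\bar\bx_n$ has mean zero and covariance $n^{-1}\bI$ and is asymptotically independent of $\bB_n$ (exactly independent in the Gaussian case, where the sample mean is orthogonal to the centered data): conditionally on $\bB_n$ the cross term has variance $n^{-1}\bu^\top\bB_n^{-2}\bu=O(n^{-1})$, and a concentration/Borel--Cantelli argument upgrades this to almost-sure convergence. This step --- controlling the rank-one centering correction under the sole assumption of $4+\varepsilon$ moments --- is the main obstacle, and is the only place where the probabilistic structure of $\bar\bx_n$, rather than pure algebra, enters.

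For the squared and sandwiched forms I would avoid repeating the rank-one analysis on products of resolvents. Instead I would note that $\bB_n^{-1}\bSigma^{-1}\bB_n^{-1}=-\partial_\eta\bB_n^{-1}$ and $\bB_n^{-2}=\partial_z(\bB_n-z\bI)^{-1}\big|_{z=0}$. Establishing the first-order equivalence $\bu^\top\bB_n^{-1}\bv-\bu^\top\widetilde{\bB}_n^{-1}\bv\to0$ uniformly on compact subsets in the complex parameters $\eta$ and $z$ (the relevant resolvents being analytic there), Theorem~\ref{weierstrass} transfers the vanishing of the rank-one correction to the $\eta$- and $z$-derivatives, so the centering is negligible for all three forms. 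It then remains to apply parts \eqref{1_lam}, \eqref{2a_lam} and \eqref{2_lam} of Lemma~\ref{lem2_lam} to the uncentered forms $\bu^\top\widetilde{\bB}_n^{-1}\bv$, $\bu^\top\widetilde{\bB}_n^{-1}\bSigma^{-1}\widetilde{\bB}_n^{-1}\bv$ and $\bu^\top\widetilde{\bB}_n^{-2}\bv$.

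Finally I would simplify the deterministic equivalents back in terms of $\bOmega_\lambda$. Since $1-\lambda=\lambda\eta$, the target matrix factors as $\bOmega_\lambda=\lambda\bigl(v(\eta,0)\bSigma+\eta\bI\bigr)$, and the elementary conjugation identity
\[
\bSigma^{-1/2}\bigl(\eta\bSigma^{-1}+v(\eta,0)\bI\bigr)^{-1}\bSigma^{-1/2}=\bigl(\eta\bI+v(\eta,0)\bSigma\bigr)^{-1}=\lambda\,\bOmega_\lambda^{-1}
\]
converts each limit from Lemma~\ref{lem2_lam} into an expression in $\bOmega_\lambda$. Tracking the prefactors $\lambda^{-1}$ and $\lambda^{-2}$ against the powers of $\lambda$ produced by this identity, the factors cancel exactly and yield $\bxi^\top\bOmega_\lambda^{-1}\btheta$ for \eqref{lem3_eq1}, the combination $\bxi^\top\bOmega_\lambda^{-2}\btheta+v_1^\prime(\eta,0)\,\bxi^\top\bOmega_\lambda^{-1}\bSigma\bOmega_\lambda^{-1}\btheta$ for \eqref{lem3_eq2}, and $(1-v_2^\prime(\eta,0))\,\bxi^\top\bOmega_\lambda^{-1}\bSigma\bOmega_\lambda^{-1}\btheta$ for \eqref{lem3_eq3}, which are the three claimed statements.
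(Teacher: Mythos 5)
Your algebraic reduction is sound and, modulo the centering issue, it follows the same route as the paper: factor $\bS_\lambda=\lambda\,\bSigma^{1/2}\bigl(\bV_n+\eta\bSigma^{-1}\bigr)\bSigma^{1/2}$, absorb $\bSigma^{\pm1/2}$ into the vectors, apply the three parts of Lemma \ref{lem2_lam}, and undo the conjugation via $\bSigma^{-1/2}\bigl(\eta\bSigma^{-1}+v(\eta,0)\bI\bigr)^{-1}\bSigma^{-1/2}=\lambda\bOmega_\lambda^{-1}$; your bookkeeping of the $\lambda$-powers and the identification of the three limits are correct. Your direction of the Sherman--Morrison step (writing the uncentered matrix as $\widetilde{\bB}_n=\bB_n+\bar\bx_n\bar\bx_n^\top$, so the denominator is $1+\bar\bx_n^\top\bB_n^{-1}\bar\bx_n\ge 1$) is in fact slightly cleaner than the paper's, which expands the centered resolvent around the uncentered one and must therefore keep the denominator $1-\bar\bx_n^\top\bSigma^{1/2}(\widetilde{\bS}_n+\eta\bI)^{-1}\bSigma^{1/2}\bar\bx_n$ under control by citing \citet[Eq.~(2.28)]{pan2014comparison}. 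Your Weierstrass shortcut for the squared and sandwiched forms is also viable in principle; the paper instead expands those resolvent products explicitly and bounds each extra term by Cauchy--Schwarz and monotonicity of the resolvent.

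The genuine gap is your treatment of the cross term $\bu^\top\bB_n^{-1}\bar\bx_n$. Your argument conditions on $\bB_n$ and computes a conditional variance $n^{-1}\bu^\top\bB_n^{-2}\bu$, which presupposes that $\bar\bx_n$ is independent of the centered sample covariance matrix. That independence holds only for Gaussian data; under the paper's assumption of i.i.d.\ entries with finite $4+\varepsilon$ moments, $\bar\bx_n$ and $\bV_n$ are dependent (both are functions of the same $\bX_n$), so the conditional-variance computation is invalid, and the ``asymptotic independence'' you appeal to is exactly the nontrivial fact requiring proof. Even granting independence, a variance of order $n^{-1}$ is not summable, so Borel--Cantelli would require a fourth-moment or concentration bound of order $n^{-2}$, which you do not supply. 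This is precisely where the paper leans on an external random-matrix result: it invokes \citet[p.~673]{pan2014comparison} for \eqref{eqn:mean_stieltjes_conv_1}, the almost sure vanishing of the resolvent--sample-mean bilinear form under the $4+\varepsilon$ moment assumption. Without reproducing that argument or citing such a result, your proof establishes the lemma only in the Gaussian case, and even there the almost-sure upgrade needs more than the stated variance bound.
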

%%%%%%%%%%%%%%%%%%%%%%%%%%%%%%%%%%
\begin{proof}[Proof of Lemma \ref{lem:lem3}:]
Let $\widetilde{\bS}_n=\bSigma^{1/2}\widetilde{\bV}_n\bSigma^{1/2}$. Using \eqref{eqn:V_n} and \eqref{eqn:one_rank_update} and the formula for the 1-rank update of inverse matrix (see, e.g., \cite{hornjohn1985}), we get 
    \begin{align*}
        &\lambda\bxi^\top \bS_{\lambda}^{-1} \btheta =  \bxi^\top \left(\widetilde{\bS}_n + \left(\frac{1}{\lambda} - 1\right)\bI
          -\bSigma^{1/2}\bar\bx_n\bar\bx_n^\top\bSigma^{1/2} \right)^{-1} \btheta \\
          &= \bxi^\top \bSigma^{-1/2}\left( \widetilde{\bV}_n  + \left(\frac{1}{\lambda} - 1\right)\bSigma^{-1}\right)^{-1} \bSigma^{-1/2}\btheta
          \\
          & +\frac{\bxi^\top\left(\widetilde{\bS}_n + \left(\frac{1}{\lambda} - 1\right)\bI\right)^{-1}\bSigma^{1/2}\bar\bx_n
                \bar\bx_n^\top\bSigma^{1/2} \left(\widetilde{\bS}_n + \left(\frac{1}{\lambda} - 1\right)\bI\right)^{-1}\btheta
            }{1 - \bar\bx_n^\top\bSigma^{1/2}  \left(\widetilde{\bS}_n + \left(\frac{1}{\lambda} - 1\right)\bI\right)^{-1} \bSigma^{1/2}\bar\bx_n
            },
    \end{align*}
where
\begin{equation}\label{eqn:mean_stieltjes_conv_1}
\left |\bxi^\top\left(\widetilde{\bS}_n + \left(\frac{1}{\lambda} - 1\right)\bI\right)^{-1}\bSigma^{1/2}\bar\bx_n
\right|\stackrel{a.s.}{\longrightarrow} 0
\end{equation}
for $\lambda \in (0,1]$ by \citet[p. 673]{pan2014comparison}.
Furthermore, the quantity
\begin{align}\label{eqn:mean_stieltjes_conv_2}
    \frac{1
        }{1 - \bar\bx_n^\top\bSigma^{1/2}  \left(\widetilde{\bS}_n + \left(\frac{1}{\lambda} - 1\right)\bI\right)^{-1} \bSigma^{1/2}\bar\bx_n
        }
    \end{align}
is bounded following \citet[Eq. (2.28)]{pan2014comparison}.
Hence, the application of Lemma \ref{lem2_lam} leads to the first statement of Lemma \ref{lem:lem3}.

We compute
{\footnotesize
\begin{align*}
        &\lambda^2 \bxi^\top \bS_{\lambda}^{-2} \btheta =
        \bxi^\top \left(\widetilde{\bS}_n +
        \left(\frac{1}{\lambda} - 1\right)\bI
          -\bar\bSigma^{1/2}\bx_n\bar\bx_n^\top\bSigma^{1/2} \right)^{-2}\btheta\\
        &=
        \bxi^\top \bSigma^{-1/2} \left(\widetilde{\bV}_n  + \left(\frac{1}{\lambda} - 1\right)\bSigma^{-1}
         \right)^{-1} \bSigma^{-1}
         \left(\widetilde{\bV}_n  + \left(\frac{1}{\lambda} - 1\right)\bSigma^{-1}
         \right)^{-1}
         \bSigma^{-1/2}\btheta\\
        &+\frac{\bxi^\top \left(\widetilde{\bS}_n  + \left(\frac{1}{\lambda} - 1\right)\bI
         \right)^{-2}\bSigma^{1/2}\bar\bx_n
                \bar\bx_n^\top \bSigma^{1/2} \left(\widetilde{\bS}_n
                + \left(\frac{1}{\lambda} - 1\right)\bI
         \right)^{-1}\btheta
            }{1 - \bar\bx_n^\top\bSigma^{1/2}  \left(\widetilde{\bS}_n + \left(\frac{1}{\lambda} - 1\right)\bI\right)^{-1} \bSigma^{1/2}\bar\bx_n
            }\\
        &+\frac{\bxi^\top \left(\widetilde{\bS}_n  + \left(\frac{1}{\lambda} - 1\right)\bI
         \right)^{-1}\bSigma^{1/2}\bar\bx_n
                \bar\bx_n^\top \bSigma^{1/2} \left(\widetilde{\bS}_n
                + \left(\frac{1}{\lambda} - 1\right)\bI
         \right)^{-2}\btheta
            }{1 - \bar\bx_n^\top\bSigma^{1/2}  \left(\widetilde{\bS}_n + \left(\frac{1}{\lambda} - 1\right)\bI\right)^{-1} \bSigma^{1/2}\bar\bx_n
            }\\
        &+
         \bar\bx_n^\top\bSigma^{1/2}\left(\widetilde{\bS}_n  + \left(\frac{1}{\lambda} - 1\right)\bI
         \right)^{-2}\bSigma^{1/2}\bar\bx_n\\
         &\times
         \frac{\bxi^\top \left(\widetilde{\bS}_n  + \left(\frac{1}{\lambda} - 1\right)\bI
         \right)^{-1}\bSigma^{1/2}\bar\bx_n
                \bar\bx_n^\top \bSigma^{1/2} \left(\widetilde{\bS}_n
                + \left(\frac{1}{\lambda} - 1\right)\bI
         \right)^{-1}\btheta
            }{
               \left(1 - \bar\bx_n^\top\bSigma^{1/2}  \left(\widetilde{\bS}_n + \left(\frac{1}{\lambda} - 1\right)\bI\right)^{-1} \bSigma^{1/2}\bar\bx_n\right)^2
            },
\end{align*}
}
where
{\footnotesize
\begin{eqnarray*}
&&\bar\bx_n^\top\bSigma^{1/2}\left(\widetilde{\bS}_n  + \left(\frac{1}{\lambda} - 1\right)\bI        \right)^{-2}\bSigma^{1/2}\bar\bx_n\\
&\le& \left(\frac{1}{\lambda} - 1\right)^{-1} \bar\bx_n^\top\bSigma^{1/2}\left(\widetilde{\bS}_n  + \left(\frac{1}{\lambda} - 1\right)\bI        \right)^{-1}\bSigma^{1/2}\bar\bx_n <\infty 
\end{eqnarray*}
}
and
{\footnotesize
\begin{eqnarray*}
&&\bxi^\top\left(\widetilde{\bS}_n  + \left(\frac{1}{\lambda} - 1\right)\bI\right)^{-2}\bSigma^{1/2}\bar\bx_n\\
&\le&\sqrt{\bxi^\top\left(\widetilde{\bS}_n  + \left(\frac{1}{\lambda} - 1\right)\bI\right)^{-2}\bxi}
\sqrt{\bar\bx_n^\top\bSigma^{1/2}\left(\widetilde{\bS}_n  + \left(\frac{1}{\lambda} - 1\right)\bI        \right)^{-2}\bSigma^{1/2}\bar\bx_n}
 <\infty 
\end{eqnarray*}
}
For the third statement of the lemma we consider
{\footnotesize
\begin{align*}
        &\lambda^2 \bxi^\top \bS_{\lambda}^{-1} \bSigma \bS_{\lambda}^{-1} \btheta =
        \bxi^\top \bSigma^{-1/2} \left(\widetilde{\bV}_n  + \left(\frac{1}{\lambda} - 1\right)\bSigma^{-1}
          -\bar\bx_n\bar\bx_n^\top \right)^{-2} \bSigma^{-1/2}\btheta\\
        &=
        \bxi^\top \bSigma^{-1/2} \left(\widetilde{\bV}_n  + \left(\frac{1}{\lambda} - 1\right)\bSigma^{-1}
         \right)^{-2} \bSigma^{-1/2}\btheta\\
        &+\frac{\bxi^\top\bSigma^{-1/2} \left(\widetilde{\bV}_n  + \left(\frac{1}{\lambda} - 1\right)\bSigma^{-1}
         \right)^{-2}\bar\bx_n
                \bar\bx_n^\top  \left(\widetilde{\bV}_n  + \left(\frac{1}{\lambda} - 1\right)\bSigma^{-1}
         \right)^{-1}\bSigma^{-1/2}\btheta
            }{1 - \bar\bx_n^\top\bSigma^{1/2}  \left(\widetilde{\bS}_n+ \left(\frac{1}{\lambda} - 1\right)\bI\right)^{-1} \bSigma^{1/2}\bar\bx_n
            }\\
        &+\frac{\bxi^\top\bSigma^{-1/2} \left(\widetilde{\bV}_n  + \left(\frac{1}{\lambda} - 1\right)\bSigma^{-1}
         \right)^{-1}\bar\bx_n
                \bar\bx_n^\top  \left(\widetilde{\bV}_n  + \left(\frac{1}{\lambda} - 1\right)\bSigma^{-1}
         \right)^{-2}\bSigma^{-1/2}\btheta
            }{1 - \bar\bx_n^\top\bSigma^{1/2}  \left(\widetilde{\bS}_n + \left(\frac{1}{\lambda} - 1\right)\bI\right)^{-1} \bSigma^{1/2}\bar\bx_n
            }\\
        &+
         \bar\bx_n^\top\left(\widetilde{\bV}_n  + \left(\frac{1}{\lambda} - 1\right)\bSigma^{-1}
         \right)^{-2}\bar\bx_n\\
         &\times
         \frac{\bxi^\top\bSigma^{-1/2} \left(\widetilde{\bV}_n  + \left(\frac{1}{\lambda} - 1\right)\bSigma^{-1}
         \right)^{-1}\bar\bx_n
                \bar\bx_n^\top \left(\widetilde{\bV}_n  + \left(\frac{1}{\lambda} - 1\right)\bSigma^{-1}
         \right)^{-1}\bSigma^{-1/2}\btheta
            }{
               \left(1 - \bar\bx_n^\top\bSigma^{1/2}  \left(\widetilde{\bS}_n + \left(\frac{1}{\lambda} - 1\right)\bI\right)^{-1} \bSigma^{1/2}\bar\bx_n\right)^2
            }.
    \end{align*}
}
Next, we prove that $\bar\bx_n^\top\left(\widetilde{\bV}_n  + \left(\frac{1}{\lambda} - 1\right)\bSigma^{-1} \right)^{-2}\bar\bx_n$ is bounded for $p/n\longrightarrow c \in (0,\infty)$ as $n\rightarrow\infty$.
{\footnotesize
    \begin{align}
  &      \bar\bx_n^\top\left(\widetilde{\bV}_n  + \left(\frac{1}{\lambda} - 1\right)\bSigma^{-1}
        \right)^{-2}\bar\bx_n \nonumber\\
&= \bar\bx_n^\top\bSigma^{1/2}\left(\tilde{\bS}_n  + \left(\frac{1}{\lambda} - 1\right)\bI
        \right)^{-1}\bSigma\left(\tilde{\bS}_n  + \left(\frac{1}{\lambda} - 1\right)\bI
        \right)^{-1}\bSigma^{1/2}\bar\bx_n   \nonumber\\
        & \leq  \lambda_{\max}(\bSigma)\cdot \bar\bx_n^\top\bSigma^{1/2}\left(\tilde{\bS}_n  + \left(\frac{1}{\lambda} - 1\right)\bI
        \right)^{-2}\bSigma^{1/2}\bar\bx_n\nonumber\\
& \leq \lambda_{\max}(\bSigma)\left(\frac{1}{\lambda} - 1\right)^{-1} \bar\bx_n^\top\bSigma^{1/2}\left(\widetilde{\bS}_n  + \left(\frac{1}{\lambda} - 1\right)\bI        \right)^{-1}\bSigma^{1/2}\bar\bx_n <\infty \label{eqn:bound}
    \end{align}
}
Using \eqref{eqn:bound} we get that
{\footnotesize
\begin{eqnarray*}
&&\left|\bxi^\top\bSigma^{-1/2} \left(\widetilde{\bV}_n  + \left(\frac{1}{\lambda} - 1\right)\bSigma^{-1}
         \right)^{-2}\bar\bx_n\right|\\
        &\le& \sqrt{\bxi^\top\bSigma^{-1/2} \left(\widetilde{\bV}_n  + \left(\frac{1}{\lambda} - 1\right)\bSigma^{-1}
         \right)^{-2}\bSigma^{-1/2}\bxi}\sqrt{\bar\bx_n^\top \left(\widetilde{\bV}_n  + \left(\frac{1}{\lambda} - 1\right)\bSigma^{-1}
         \right)^{-2}\bar\bx_n}<\infty \,.
\end{eqnarray*}
}
Hence, the application of \eqref{eqn:mean_stieltjes_conv_1}, \eqref{eqn:mean_stieltjes_conv_2}, and  Lemma \ref{lem2_lam} completes the proof of the lemma.
\end{proof}
\begin{proof}[Proof of Theorem \ref{th1-lam}:]
Let $V_{GMV}=1/(\ones^\top \bSigma^{-1} \ones)$. The application of the results of Lemma \ref{lem:lem3} with $\bxi=\bSigma\bb/\sqrt{\bb^\top \bSigma \bb}$ and $\btheta=\ones/\sqrt{\ones^\top \bSigma^{-1} \ones}$ leads to
\begin{align}
        \left|
            (L_{\bb}+1)^{-1/2}\bb^\top \bSigma \bS_{\lambda}^{-1}\ones  -  (L_{\bb}+1)^{-1/2}
            \bb^\top\bSigma\bOmega_{\lambda}^{-1} \ones
        \right| &\stackrel{a.s.}{\longrightarrow} 0,\label{pr_th1_eq1}\\
        \left|
            V_{GMV}\ones^\top \bS_{\lambda}^{-1}\ones  - \lambda^{-1} V_{GMV}
            \ones^\top\bOmega_{\lambda}^{-1} \ones
        \right| &\stackrel{a.s.}{\longrightarrow} 0,\label{pr_th1_eq2}\\
        \left| 
            V_{GMV}\ones^\top  \bS_{\lambda}^{-1} \bSigma \bS_{\lambda}^{-1} \ones -V_{GMV} (1-v_2^\prime(\eta, 0))\ones^\prime\bOmega_{\lambda}^{-1} \bSigma \bOmega_{\lambda}^{-1}\ones 
        \right| &\stackrel{a.s.}{\longrightarrow} 0\label{pr_th1_eq3}
    \end{align}

Using \eqref{pr_th1_eq1}-\eqref{pr_th1_eq3} and the equality
\begin{eqnarray*}
    L_{n;2}(\lambda)
        &=& \frac{\left(1-\frac{1}{\sqrt{L_{\bb}+1}} \frac{(L_{\bb}+1)^{-1/2}\bb^\top \bSigma \bS_{\lambda}^{-1}\ones}{V_{GMV}\ones^\top \bS_{\lambda}^{-1}\ones} \right)^2}
        {1-\frac{2}{\sqrt{L_{\bb}+1}}\frac{(L_{\bb}+1)^{-1/2}\bb^\top \bSigma \bS_{\lambda}^{-1}\ones}{V_{GMV}\ones^\top \bS_{\lambda}^{-1}\ones} +\frac{1}{L_{\bb}+1} \frac{V_{GMV}\ones^\top  \bS_{\lambda}^{-1} \bSigma \bS_{\lambda}^{-1} \ones}{\left(V_{GMV}\ones^\top \bS_{\lambda}^{-1}\ones\right)^2} } \\
\end{eqnarray*}
we get the statement of part (i) of the theorem, while the application of the equality
\begin{eqnarray*}
 \psi_n^*(\lambda) &=&\frac{\frac{1}{\sqrt{L_{\bb}+1}}\frac{(L_{\bb}+1)^{-1/2}\bb^\top \bSigma \bS_{\lambda}^{-1}\ones}{V_{GMV}\ones^\top \bS_{\lambda}^{-1}\ones} -\frac{1}{L_{\bb}+1} \frac{V_{GMV}\ones^\top  \bS_{\lambda}^{-1} \bSigma \bS_{\lambda}^{-1} \ones}{\left(V_{GMV}\ones^\top \bS_{\lambda}^{-1}\ones\right)^2}}
{1-\frac{2}{\sqrt{L_{\bb}+1}}\frac{(L_{\bb}+1)^{-1/2}\bb^\top \bSigma \bS_{\lambda}^{-1}\ones}{V_{GMV}\ones^\top \bS_{\lambda}^{-1}\ones} +\frac{1}{L_{\bb}+1} \frac{V_{GMV}\ones^\top  \bS_{\lambda}^{-1} \bSigma \bS_{\lambda}^{-1} \ones}{\left(V_{GMV}\ones^\top \bS_{\lambda}^{-1}\ones\right)^2} } 
\end{eqnarray*}
yields the second statement of the theorem.
\end{proof}
%%%%%%%%%%%%%%%%%%%%%%%%%%%%%%%%%%%%%%%%%%%%%%%%%%%%%%%%%%%%%%%%%%%%%%%%%%%%%%%%%%%%%%%%%%%%%% BONA FIDE

\begin{lem}\label{lem:trace_estimators}
Let $\frac{1}{p}\bSigma^{-1}$ possess a bounded trace norm. Then it holds that
\begin{flalign}
        &\left|\frac{1}{p}
           \tr\left(\left(\bS_n +\eta\bI\right)^{-1}\right) - \frac{1}{p}\tr\left(\left(v(\eta,0)\bSigma+\eta\bI\right)^{-1}\right)
        \right| \stackrel{a.s}{\rightarrow} 0 &\label{1_lem56}\\
        &\Bigg|
        \frac{\frac{1}{p}\tr\left((\bS_n +\eta \bI)^{-2}\right)\left(1-c+2c\eta\frac{1}{p}\tr\left((\bS_n +\eta \bI)^{-1}\right)\right)-c\left[\frac{1}{p}\tr\left((\bS_n +\eta \bI)^{-1}\right)\right]^2}{1-c+c\eta^2 \frac{1}{p}\tr\left((\bS_n +\eta \bI)^{-2}\right)}\nonumber\\
           &- \frac{1}{p}\tr\left(\left(v(\eta,0)\bSigma+\eta\bI  \right)^{-2}\right)
        \Bigg|  \stackrel{a.s}{\rightarrow} 0&\label{2_lem56}
    \end{flalign}
 for $p/n\rightarrow c \in (0,\infty)$ as $n\rightarrow\infty$.
\end{lem}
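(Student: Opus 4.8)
The plan is to obtain both assertions from Lemma~\ref{lem1_lam}, applied with the weight matrix $\mathbf{\Theta}_p=\frac{1}{p}\bSigma^{-1}$ (whose trace norm is bounded by hypothesis), combined with the rank-one decomposition \eqref{eqn:one_rank_update} that relates the centered matrix $\bS_n$ to its uncentered version $\widetilde{\bS}_n=\bSigma^{1/2}\widetilde{\bV}_n\bSigma^{1/2}$. Throughout I abbreviate $v=v(\eta,0)$, $t_1=\frac{1}{p}\tr((v\bSigma+\eta\bI)^{-1})$ and $t_2=\frac{1}{p}\tr((v\bSigma+\eta\bI)^{-2})$, so that the right-hand sides of \eqref{1_lem56} and \eqref{2_lem56} are exactly $t_1$ and $t_2$.

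For \eqref{1_lem56} I would first write $\widetilde{\bS}_n+\eta\bI=\bSigma^{1/2}(\widetilde{\bV}_n+\eta\bSigma^{-1})\bSigma^{1/2}$, so that $\frac{1}{p}\tr((\widetilde{\bS}_n+\eta\bI)^{-1})=\tr(\frac{1}{p}\bSigma^{-1}(\widetilde{\bV}_n+\eta\bSigma^{-1})^{-1})$. Lemma~\ref{lem1_lam}(i) with $z=0$ then supplies the deterministic equivalent $\tr(\frac{1}{p}\bSigma^{-1}(\eta\bSigma^{-1}+v\bI)^{-1})$, which collapses to $t_1$ once one uses the identity $(\eta\bSigma^{-1}+v\bI)^{-1}=(v\bSigma+\eta\bI)^{-1}\bSigma$ and the commutativity of functions of $\bSigma$. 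Since $\widetilde{\bS}_n-\bS_n=\bSigma^{1/2}\bar\bx_n\bar\bx_n^\top\bSigma^{1/2}$ is rank one by \eqref{eqn:one_rank_update}, eigenvalue interlacing bounds $\frac{1}{p}|\tr((\bS_n+\eta\bI)^{-1})-\tr((\widetilde{\bS}_n+\eta\bI)^{-1})|$ by $\frac{1}{p\eta}\to 0$, so the centering is asymptotically irrelevant and \eqref{1_lem56} follows.

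For \eqref{2_lem56} the first task is the deterministic equivalent of $a_2:=\frac{1}{p}\tr((\bS_n+\eta\bI)^{-2})$. Writing $(\widetilde{\bS}_n+\eta\bI)^{-2}=\bSigma^{-1/2}(\widetilde{\bV}_n+\eta\bSigma^{-1})^{-1}\bSigma^{-1}(\widetilde{\bV}_n+\eta\bSigma^{-1})^{-1}\bSigma^{-1/2}$ places the trace into precisely the sandwiched form of Lemma~\ref{lem1_lam}(ii). Applying that part with $\mathbf{\Theta}_p=\frac{1}{p}\bSigma^{-1}$ yields two deterministic terms: the first simplifies to $t_2$, while the second, after the identity $v\bSigma(v\bSigma+\eta\bI)^{-2}=(v\bSigma+\eta\bI)^{-1}-\eta(v\bSigma+\eta\bI)^{-2}$, equals $v_1^\prime(\eta,0)\frac{1}{v}(t_1-\eta t_2)$. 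Inserting the expression \eqref{v1_eta_lam} for $v_1^\prime(\eta,0)$ and setting $D=1-c+2c\eta t_1-c\eta^2 t_2$, the deterministic equivalent of $a_2$ reduces to
\[
a_2^\infty=t_2+\frac{c(t_1-\eta t_2)^2}{D}=\frac{(1-c)t_2+ct_1^2}{D},
\]
and the same rank-one estimate as above (now costing $O(1/(p\eta^2))$) shows that $a_2$ shares this limit.

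The final step is purely algebraic: substitute $a_1\to t_1$ and $a_2\to a_2^\infty$ into the left-hand side of \eqref{2_lem56} and check that it collapses to $t_2$. A direct computation gives that the numerator $a_2^\infty(1-c+2c\eta t_1)-ct_1^2$ equals $t_2[(1-c)+c\eta t_1]^2/D$ and that the denominator $1-c+c\eta^2 a_2^\infty$ equals $[(1-c)+c\eta t_1]^2/D$; the common factor cancels and leaves exactly $t_2$. I expect this cancellation to be the crux of the statement: the fraction in \eqref{2_lem56} is engineered to be the rational inverse of the map $(t_1,t_2)\mapsto a_2^\infty$, so that combining the computable sample traces $a_1,a_2$ in this particular way ``deconvolves'' the second-order trace back to the target $t_2$. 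To make the passage to the limit rigorous I would invoke the continuous mapping theorem, after verifying that the random denominator $1-c+c\eta^2 a_2$ stays bounded away from zero almost surely for large $p$; its limit is $[(1-c)+c\eta t_1]^2/D$, which is strictly positive whenever $D>0$, and for $c\le 1$ the elementary bound $t_2\le\eta^{-1}t_1$ gives $D\ge(1-c)+c\eta t_1>0$ directly, while $D\neq 0$ more generally since it is the very denominator already appearing in \eqref{v1_eta_lam}.
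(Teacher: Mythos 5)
Your proof is correct and follows the same core route as the paper: both apply Lemma \ref{lem1_lam} with $\mathbf{\Theta}_p=\frac{1}{p}\bSigma^{-1}$, obtain the same deterministic equivalent $a_2^\infty=\bigl((1-c)t_2+ct_1^2\bigr)/D$ for $\frac{1}{p}\tr\left((\bS_n+\eta\bI)^{-2}\right)$ via the identity $\bSigma(v\bSigma+\eta\bI)^{-2}=\frac{1}{v}\left[(v\bSigma+\eta\bI)^{-1}-\eta(v\bSigma+\eta\bI)^{-2}\right]$ and the formula \eqref{v1_eta_lam} for $v_1^\prime(\eta,0)/v(\eta,0)$, and then recover $t_2$ by inverting this relation. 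You differ in two sub-steps, both to your credit. First, for the passage from the uncentered matrix $\widetilde{\bS}_n$ to $\bS_n$ in \eqref{eqn:one_rank_update}, the paper defers to the Sherman--Morrison computation and the Pan (2014) quadratic-form bounds in the proof of Lemma \ref{lem:lem3}, whereas you use eigenvalue interlacing for the rank-one perturbation to bound the normalized trace difference by $1/(p\eta)$ (resp.\ $1/(p\eta^2)$); for traces this is more elementary and entirely sufficient. Second, the paper's proof ends with ``which together with \eqref{1_lem56} leads to the second statement,'' leaving implicit precisely the algebra you carry out: that the rational map in \eqref{2_lem56} is the inverse of $(t_1,t_2)\mapsto a_2^\infty$, with the clean factorization of numerator and denominator by $\left[(1-c)+c\eta t_1\right]^2/D$, together with the continuity/positivity of the limiting denominator needed to pass to the limit. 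Your positivity argument is fully rigorous only for $c\le 1$ (the appeal to ``$D$ appears as a denominator in \eqref{v1_eta_lam}'' is not by itself a proof for $c>1$), but this is easily repaired: since $v(\eta,0)=1-c(1-\eta t_1)$ one has $D=v(\eta,0)+c\eta(t_1-\eta t_2)\ge v(\eta,0)>0$ using $\eta t_2\le t_1$, so no gap remains.
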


\begin{proof}[Proof of Lemma \ref{lem:trace_estimators}:]
    From part (i) of Lemma \ref{lem1_lam} with $\mathbf{\Theta}_p=\frac{1}{p}\bSigma^{-1}$ and the proof of Lemma \ref{lem:lem3} we obtain that $\frac{1}{p}\tr(\left(\bS_n +\eta\bI\right)^{-1})$ is consistent for $\frac{1}{p}\tr\left(\left(\eta\bI + v(\eta,0)\bSigma \right)^{-1}\right)$ in the high-dimensional setting.
 
Furthermore, applying part (ii) of Lemma \ref{lem1_lam}  with $\mathbf{\Theta}_p=\frac{1}{p}\bSigma^{-1}$ and following the proof of Lemma \ref{lem:lem3} we get that 
\begin{align}
&\frac{1}{p}\tr\left((\bS_n +\eta \bI)^{-2} \right) \stackrel{a.s.}{\rightarrow} 
        \frac{1}{p}\tr\left(\bSigma^{-1}
            \left(\eta\bSigma^{-1}+ v(\eta,0)\bI\right)^{-1}
\bSigma^{-1}\left(\eta\bSigma^{-1}+ v(\eta,0)\bI\right)^{-1}
        \right) \nonumber \\
&+ v_1^\prime(\eta,0) \frac{1}{p}\tr\left(\bSigma^{-1}
            \left(\eta\bSigma^{-1}+ v(\eta,0)\bI\right)^{-2}
        \right) \nonumber\\
& = \frac{1}{p}\tr\left(
\left(v(\eta,0)\bSigma+ \eta\bI\right)^{-2}\right) +   v_1^\prime(\eta,0) \frac{1}{p}\tr\left(\left(v(\eta,0)\bSigma+ \eta\bI\right)^{-1}\bSigma
\left(v(\eta,0)\bSigma+ \eta\bI\right)^{-1}
        \right)
            \nonumber
        \\
        = & \frac{v^\prime_1(\eta,0)}{v(\eta, 0)}\frac{1}{p}\tr\left(
            \left(v(\eta,0)\bSigma+\eta\bI\right)^{-1}\right) + \left(1- \frac{v^\prime_1(\eta,0)}{v(\eta, 0)}\eta \right)\frac{1}{p}\tr\left(\left( 
            v(\eta,0)\bSigma+\eta\bI\right)^{-2} \right), \label{eqn:lem.._eq1}
    \end{align}
where the application of \eqref{v1_eta_lam} leads to 
\begin{equation*}
\frac{v_1^\prime(\eta, 0)}{v(\eta,0)}  
= 
\frac{c\frac{1}{p}\text{tr}\left((v(\eta,0)\bSigma+\eta\bI)^{-1}\right)-c\eta \frac{1}{p}\text{tr}\left((v(\eta,0)\bSigma+\eta\bI)^{-2}\right)}
{1-c+2c\eta\frac{1}{p}\text{tr}\left((v(\eta,0)\bSigma+\eta\bI)^{-1}\right)-
c\eta^2\frac{1}{p}\text{tr}\left((v(\eta,0)\bSigma+\eta\bI)^{-2}\right)}.
\end{equation*}

Thus,
$\frac{1}{p}\tr\left((\bS_n +\eta \bI)^{-2}\right)$ converges almost surely to
\begin{eqnarray*}
&&
\frac{c\frac{1}{p}\text{tr}\left((v(\eta,0)\bSigma+\eta\bI)^{-1}
\right)-
c\eta\frac{1}{p}\text{tr}\left((v(\eta,0)\bSigma+\eta\bI)^{-2}
\right)}
{1-c+2c\eta\frac{1}{p}\text{tr}\left((v(\eta,0)\bSigma+\eta\bI)^{-1}
\right)-
c\eta^2\frac{1}{p}\text{tr}\left((v(\eta,0)\bSigma+\eta\bI)^{-2}
\right)}\\
&\times&\frac{1}{p}\tr\left(\left( 
            v(\eta,0)\bSigma+\eta\bI\right)^{-1} \right)\\
&+&\frac{1-c+c\eta\frac{1}{p}\text{tr}\left((v(\eta,0)\bSigma+\eta\bI)^{-1}\right)}
{1-c+2c\eta\frac{1}{p}\text{tr}\left((v(\eta,0)\bSigma+\eta\bI)^{-1}
\right)-
c\eta^2\frac{1}{p}\text{tr}\left((v(\eta,0)\bSigma+\eta\bI)^{-2}
\right)}\\
&\times&\frac{1}{p}\tr\left(\left( 
            v(\eta,0)\bSigma+\eta\bI\right)^{-2} \right) \\
&=& \frac{c\left[\frac{1}{p}\text{tr}\left((v(\eta,0)\bSigma+\eta\bI)^{-1}\right)\right]^2+(1-c)\frac{1}{p}\tr\left(\left( 
            v(\eta,0)\bSigma+\eta\bI\right)^{-2} \right) }
{1-c+2c\eta\frac{1}{p}\text{tr}\left((v(\eta,0)\bSigma+\eta\bI)^{-1}
\right)-
c\eta^2\frac{1}{p}\text{tr}\left((v(\eta,0)\bSigma+\eta\bI)^{-2}
\right)},      
\end{eqnarray*}
which together with \eqref{1_lem56} leads to the second statement of the lemma.
\end{proof}

\begin{proof}[Proof of Theorem \ref{th2-lam}:]
The result \eqref{lem:v_est-eq1} is a direct consequence of \eqref{v_eta_lam} in Lemma \ref{lem2_lam} and \eqref{1_lem56} in Lemma \ref{lem:trace_estimators}. 

Let $t_1=\frac{1}{p}\tr\left(\left(\bS_n+\eta\bI \right)^{-1}\right)$ and $t_2=\frac{1}{p}\tr\left(\left(\bS_n+\eta\bI \right)^{-2}\right)$. Then, the application of \eqref{v1_eta_lam} in  Lemma \ref{lem2_lam} and the results of Lemma \ref{lem:trace_estimators} leads to a consistent estimator of $v_1^\prime(\eta, 0)$ expressed as
\begin{eqnarray*}
\hat{v}_1^\prime(\eta, 0) &=&\hat{v}(\eta, 0)
\frac{c t_1-c\eta \frac{t_2(1-c+2c\eta t_1)-ct_1^2}{1-c+c\eta^2 t_2}}{1-c+2c\eta t_1-c\eta^2 \frac{t_2(1-c+2c\eta t_1)-ct_1^2}{1-c+c\eta^2 t_2}}\\
&=&\hat{v}(\eta, 0)\frac{(1-c)c(t_1-\eta t_2) +c^2\eta t_1(t_1-\eta t_2)}
{(1-c)^2+2(1-c)c \eta t_1+c^2\eta^2 t_1^2}=\hat{v}(\eta, 0)c\frac{(1-c+c\eta t_1)(t_1-\eta t_2)}{(1-c+c\eta t_1)^2}\\
&=&\hat{v}(\eta, 0)c\frac{t_1-\eta t_2}{(1-c+c\eta t_1)}=\hat{v}(\eta, 0)c\frac{t_1-\eta t_2}{\hat{v}(\eta,0)}.
\end{eqnarray*}

Finally, the result \eqref{lem:v_est-eq3} follows from \eqref{lem:v_est-eq1} and \eqref{lem:v_est-eq2} together with \eqref{v2_eta_lam} in Lemma \ref{lem2_lam}. 
\end{proof}

\begin{proof}[Proof of Theorem \ref{th3-lam}:]
The result \eqref{eq:lem58_0} is a special case of Theorem 3.2 in \cite{bodnar2014strong}. Equation \eqref{eq:lem58_1} follows from Lemma \ref{lem:lem3} with  $\bxi=\btheta=\ones/\sqrt{\ones^\top\bSigma^{-1}\ones}$. For the derivation of \eqref{eq:lem58_2} we note that
    \begin{align}
        \bb^\top \bSigma \bOmega_\lambda^{-1} \ones 
            %= &
            %\frac{1}{\lambda v(\eta, 0)} \bb \left(v(\eta,0)\bSigma + \eta \bI - \eta \bI\right)\left(v(\eta,0)\bSigma + \eta \bI \right)^{-1} \ones \\
            %= & 
            %\frac{1}{\lambda v(\eta, 0)} \left(1 - \eta \bb\left(v(\eta,0)\bSigma + \eta \bI \right)^{-1} \ones\right) \\
            = & 
            \frac{1}{\lambda v(\eta, 0)} \left(1 - (1-\lambda) \bb^\top \bOmega_\lambda^{-1} \ones\right)
    \end{align}
and apply Lemma \ref{lem:lem3} with  $\bxi=\bb/\sqrt{\bb^\top\bSigma^{-1}\bb} $ and $\btheta=\ones/\sqrt{\ones^\top\bSigma^{-1}\ones}$.

Finally, \eqref{eq:lem58_3} is obtained by noting that
    \begin{eqnarray*}
        \ones^\top\bOmega^{-1}_{\lambda}\bSigma\bOmega^{-1}_{\lambda}\ones = 
        \frac{1}{\lambda v(\eta, 0)} 
        \ones^\top\bOmega_{\lambda}^{-1}\ones
        - \frac{1-\lambda}{\lambda v(\eta, 0)}
        \ones^\top\bOmega_{\lambda}^{-2}\ones,
    \end{eqnarray*}
where \eqref{eq:lem58_1} is used for the first summand and \eqref{lem3_eq2} of Lemma \ref{lem:lem3} with $\bxi=\btheta=\ones/\sqrt{\ones^\top\bSigma^{-1}\ones}$ for the second one. 
\end{proof}

\bibliography{portfolio_refs.bib}

\end{document}